\providecommand{\tabularnewline}{\\}
\crefname{section}{Section}{Sections}
\crefname{figure}{Figure}{Figures}
\crefname{example}{Example}{Examples}
  \let\originalleft\left
  \let\originalright\right
  \renewcommand{\left}{\mathopen{}\mathclose\bgroup\originalleft}
  \renewcommand{\right}{\aftergroup\egroup\originalright}
\let\oldsection\section
\let\oldsubsection\subsection
\let\oldsubsubsection\subsubsection
\def\section{\par\addvspace{0em plus 2em}\addpenalty{-5000}\oldsection}	
\def\subsection{\par\addvspace{0em plus 1em}\addpenalty{-4000}\oldsubsection}
\def\subsubsection{\par\addvspace{0em plus .5em}\addpenalty{-3000}\oldsubsubsection}
\let\oldpar\par
\def\par{\oldpar\addpenalty{-1000}}
  \newlength{\listspace}\setlength{\listspace}{.1em plus .2em}	
  \setlist{topsep=\listspace,itemsep=\listspace,parsep=0em,partopsep=0em}
  \setlist[enumerate]{leftmargin=2em}
  \setlist[itemize]{leftmargin=1.5em}
\newcommand{\justified}{%
  \rightskip\z@skip%
  \leftskip\z@skip}
\date{}
\newlength{\linespace}
\newlength{\parspace}
\newtheoremstyle{lwq}
  {0em}	
  {0em}	
  {\normalfont}	
  {0em}	
  {\bfseries}	
  {.}	
  {.3em plus .2em}	
  {\thmname{#1}\thmnumber{ #2}\thmnote{ (#3)}}	
\newtheoremstyle{lwqprf}
  {0em}	
  {0em}	
  {\normalfont}	
  {0em}	
  {\itshape}	
  {.}	
  {.3em plus .2em}	
  {\thmname{#1}\thmnote{ (#3)}}	
\newlength{\thmspace}\setlength{\thmspace}{.5em plus .2em minus .1em}
\newlength{\prfspace}\setlength{\prfspace}{.5em plus .2em minus .1em}
\newcommand\thmbegin{\par\addvspace{\thmspace}\addvspace{\parskip}\addpenalty{-500}}
\newcommand\prfbegin{\par\addvspace{\prfspace}\addvspace{\parskip}\addpenalty{-500}}
\newcommand\thmend{\par\addvspace{\thmspace}\addpenalty{-500}}	
\newcommand\prfend{\par\addvspace{\prfspace}\addpenalty{-500}}	
\newcommand{\theoremname}{Theorem}
\theoremstyle{lwq}\newtheorem{thm}{\protect\theoremname}
\newcounter{subtheorem}[thm]
\renewcommand{\proofname}{Proof}
\theoremstyle{lwqprf}\newtheorem{prf}{\protect\proofname}
\renewenvironment{proof}[1][]{\prfbegin\begin{prf}[#1]\pushQED{\qed}}{\popQED\end{prf}\prfend}
\renewcommand{\qed}{}	
\newcommand{\remarkname}{Remark}
\theoremstyle{lwq}
\theoremstyle{lwq}\newtheorem*{rem*}{\protect\remarkname}
\newcommand{\procedurename}{Procedure}
\theoremstyle{lwq}\newtheorem{proc}[thm]{\protect\procedurename}
\newcommand{\setblockspace}
{
  \setlength{\topsep}{0em}
  \setlength{\itemsep}{\listspace}
  \setlength{\parsep}{0em}
  \setlength{\partopsep}{\listspace}
}
\newenvironment{block}
  {\list{}{\leftmargin1.5em\setblockspace\interlinepenalty3000}\raggedright}
  {\endlist}
\newcommand{\definitionname}{Definition}
\theoremstyle{lwq}\newtheorem{defn}[thm]{\protect\definitionname}
\newcommand{\algorithmname}{Algorithm}
\theoremstyle{lwq}\newtheorem{algo}[thm]{\protect\algorithmname}
\newcommand{\lemmaname}{Lemma}
\theoremstyle{lwq}\newtheorem{lem}[subtheorem]{\protect\lemmaname}
\newcommand{\subroutinename}{Subroutine}
\theoremstyle{lwq}\newtheorem{subr}[subtheorem]{\protect\subroutinename}
\newcommand{\conjecturename}{Conjecture}
\theoremstyle{lwq}\newtheorem{conjecture}[thm]{\protect\conjecturename}
\def\section{\par\addvspace{0em plus 2em}\addpenalty{-3000}\oldsection}	
\def\subsection{\par\addvspace{0em plus 1em}\addpenalty{-2000}\oldsubsection}
\def\subsubsection{\par\addvspace{0em plus .5em}\addpenalty{-1000}\oldsubsubsection}
\providecommand{\theoremname}{Theorem}
\begin{document}
\newcommand\br{\addpenalty{-1000}}

\renewcommand{\qedsymbol}{$\diamond$}

\global\long\def\xor{\oplus}
\global\long\def\imp{\Rightarrow}
\global\long\def\eq{\Leftrightarrow}

\global\long\def\nn{\mathbb{N}}
\global\long\def\zz{\mathbb{Z}}
\global\long\def\qq{\mathbb{Q}}
\global\long\def\rr{\mathbb{R}}
\global\long\def\cc{\mathbb{C}}
\global\long\def\ff{\mathbb{F}}
\global\long\def\hh{\mathbb{H}}

\global\long\def\wi{\subseteq}
\global\long\def\co{\supseteq}
\global\long\def\nwi{\nsubseteq}
\global\long\def\nco{\nsupseteq}
\global\long\def\none{\varnothing}

\global\long\def\floor#1{\left\lfloor #1\right\rfloor }
\global\long\def\ceil#1{\left\lceil #1\right\rceil }

\global\long\def\cond#1#2#3{\left(\text{ }#1\text{ ? }#2\text{ : }#3\text{ }\right)}

\global\long\def\class#1#2{\left\{  \,#1\,:\,#2\,\right\}  }
\global\long\def\seq#1#2{\left(\,#1\,:\,#2\,\right)}

\global\long\def\t#1{\text{#1}}
\global\long\def\str#1{\text{``\text{#1}''}}

\global\long\def\f#1{\operatorname{#1}}
\global\long\def\v#1{\mathit{#1}}
\global\long\def\p#1{\text{.#1}}

\global\long\def\w{\omega}
\global\long\def\W{\Omega}
\global\long\def\Q{\Theta}

\global\long\def\e{\upvarepsilon}
\global\long\def\g{\upgamma}

\noindent \begin{center}
\textbf{\Large{}Dynamic Reallocation Problems in Scheduling}
\par\end{center}{\Large \par}

\noindent \begin{center}
\begin{tabular}{>{\centering}p{0.3\textwidth}>{\centering}p{0.3\textwidth}>{\centering}p{0.3\textwidth}}
\textbf{Wei Quan Lim}

limweiquan@nus.edu.sg & \textbf{Seth Gilbert}

seth.gilbert@comp.nus.edu.sg & \textbf{Wei Zhong Lim}

limweizhong@nus.edu.sg\tabularnewline
\end{tabular}
\par\end{center}

\section*{Keywords}

Online problems, scheduling, reallocation, distributed systems

\section*{Abstract}

In this paper we look at the problem of scheduling tasks on a single-processor
system, where each task requires unit time and must be scheduled within
a certain time window, and each task can be added to or removed from
the system at any time. On each operation, the system is allowed to
reschedule any tasks, but the goal is to minimize the number of rescheduled
tasks. Our main result is an allocator that maintains a valid schedule
for all tasks in the system if their time windows have constant size
and reschedules $O\left(\frac{1}{\e}\log(\frac{1}{\e})\right)$ tasks
on each insertion as $\e\to0$, where $\e$ is a certain measure of
the schedule flexibility of the system. We also show that it is optimal
for any allocator that works on arbitrary instances. We also briefly
mention a few variants of the problem, such as if the tasks have time
windows of difference sizes, for which we have an allocator that we
conjecture reschedules only $1$ task on each insertion if the schedule
flexibility remains above a certain threshold.

\section*{Acknowledgements}

We would like to express our gratitude to our families and friends
for their untiring support, and to all others who have given us their
invaluable comments and advice.

\clearpage{}

\section{Introduction}

Scheduling problems with time restrictions arise everywhere, from
doctors\textquoteright{} appointments to distributed systems. Traditionally
the focus is on finding a good allocation of resources under intrinsic
constraints such as availability of resources, deadlines and dependencies,
and extrinsic requirements such as fairness and latency. Sometimes
all we need is a solution that satisfies the constraints. At other
times we desire an allocation that maximizes some objective function,
involving various factors such as the resources used, throughput and
latency.

In online scheduling, it is often possible or even necessary to reallocate
resources that had been previously reserved for preceding requests
so as to satisfy a new request. The cost of reallocation should therefore
be taken into account as well. Furthermore, it would be preferable
if the system can be loaded to nearly its full capacity and yet have
only a small reallocation cost for each new request serviced.

\section{The real-time arbitrary reallocation problem}

We have $p$ identical processors in our system, and each of them
starts with an empty schedule. An insert operation is the insertion
of a new task into the system with a specified length and time window
in which it has to be executed. The task has to be allocated to one
of the processors at some time slot of the required length within
the given window such that it does not overlap with the time slot
of any other task currently in the processor\textquoteright s schedule.
To do so, we may have to reallocate time slots for other tasks in
the schedule, and possibly even reallocate some tasks to different
processors in the system. We must allow any insertion that is feasible,
even if it requires reallocating all the tasks. A deletion operation
is simply the deletion of a task currently in the system. We allow
reallocations on deletions, but all the allocators described in this
paper will not make any reallocations. The goal is to minimize the
number of such reallocations on each insertion in the worst case.

It is easy to see that if tasks can be of different lengths or the
schedule is allowed to become completely packed, there is no efficient
allocator (i.e. for any $n$ there is a sequence of operations that
force $\W(n)$ reallocations per insertion on average although there
are at most $n$ tasks in the system at any time). The situation improves
when tasks are of unit length and there is some slack in the schedule,
which is hence the focus of this paper. Indeed, the main result of
this paper is that for unit-length tasks and fixed-length windows,
given any $\e>0$ it is possible to maintain a valid schedule with
only $O(1)$ reallocations on each insertion as long as the instance
(i.e. set of tasks) prior to insertion is $\e$-slack (i.e. there
is a valid schedule even if those tasks are of length $(1+\e)$ instead).

\section{Main results}

We demonstrate a single-processor allocator \nameref{F.FA} for fixed-length
windows that takes $O\left(\frac{1}{\e}\log(\frac{1}{\e})\right)$
reallocations per insertion for arbitrary $\e$-slack instances, which
is independent of the number of tasks $n$ and even the window length,
and does not reallocate any tasks on deletion. Also, its time complexity
is only $O\left(\frac{1}{\e}\log(\frac{1}{\e})\log(n)+\log(n)^{2}\right)$
per insertion and $O(\log(n))$ per deletion if it is allowed to maintain
an internal state. We also present a hard insert state (i.e. a current
allocation and a new insertion) that gives a matching \nameref{F.LowerBound}
of $\W\left(\frac{1}{\e}\ln(\frac{1}{\e})\right)$ reallocations for
any allocator that solves it.

This resolves a question posed in Bender et al. (2013) \cite{RPIS}.
They investigated the variant with variable-length windows but where
the system is always $\e$-slack for some large constant $\e$, and
obtained an allocator that uses only $O(\log^{\ast}(\min(n,c))$ reallocations
per insertion where $c$ is the maximum window length. Then they asked
if there is an efficient allocator for arbitrarily small $\e$, which
our paper answers completely, in the positive for the special case
of one processor and fixed-length windows, where our allocator uses
only $O(1)$ reallocations per insertion, and in the negative for
multiple processors or variable-length windows.

\section{Related work}

Of the wide variety of scheduling problems, completely offline problems
from real-world situations tend to be NP-complete even after simplifications.
(Li et al., 2008) \cite{PSUURC} But practical scheduling problems
usually involve continuous processes or continual requests, and hence
many other scheduling problems are online in some sense. In addition,
scheduling problems with intrinsic constraints invariably have to
include some means of handling conflicting requests.

\subsection{Dominant resource cost}

A large class of online scheduling problems have focused on minimizing
the maximum machine load or the makespan, or on maximizing utilization,
which are typical targets in static optimization problems. Numerous
types of rescheduling have been studied and we shall give only a few
examples.

\subsubsection*{Adaptive rescheduling}

Hoogeveen et al. (2012) \cite{RNOSMST} developed makespan minimizing
heuristics where tasks have deadlines and also a setup time if the
previous task is of a different type, and insertion of new tasks must
not result in any unnecessary additional setups. Castillo et al. (2011)
\cite{OAARR} proposed a framework similar to this paper's where each
task has a total required duration and a time window within which
it must be executed, and each request must either be accepted or rejected,
immediately and permanently. But their goal is to minimize the number
of rejected requests while maximizing utilization.

If however revoking or reallocating earlier requests is permitted,
accepting new conflicting requests may become possible. For example,
Faigle and Nawijn (1995) \cite{NSIO} presented an optimal online
algorithm for such a problem, where each request demand a certain
length of service time starting from the time of the request, and
must be immediately assigned to a service station or rejected, where
each service station can service only one request at any one time.
A request is considered unfulfilled if it is rejected or if its service
is interrupted, and the objective is to minimize the number of unfulfilled
requests.

In incremental scheduling, value is accrued over time according to
the activities performed and the resources used. Gallagher et al.
(2006) \cite{ISMQDE} presented techniques for one variant of incremental
scheduling with insufficient resources where each activity requires
a minimum time period and each pair of activities requires a setup
time in-between. Whenever the set of possible activities changes,
changing the schedule may yield higher value. Their simulations found
that a local adjustment algorithm results in a rather stable schedule
that performs well compared to a greedy global rescheduling.

\subsubsection*{Delayed rescheduling}

In one problem restricted reallocations are allowed at the end of
the entire request sequence. Tan and Yu (2008) \cite{OSWR} introduced
3 such variants for two machines, where tasks of arbitrary lengths
are to be completed in any order. On each request, the task must be
immediately assigned to a single machine. After the entire sequence
of requests, certain tasks can be reassigned to a different machine.
In the first variant, the last $k$ tasks assigned can be reassigned.
In the second, only the last task assigned to each machine can be
reassigned. And in the third, any $k$ tasks can be reassigned. They
demonstrate algorithms to minimize maximum machine load with optimal
competitive ratios. Min et al. (2011) \cite{OSOASPRTIM} proposed
a fourth variant where the last task of only one machine can be reassigned,
and showed that it has the same competitive ratio as the second variant.
The fifth variant is like the fourth but the total length of all tasks
is known beforehand, and they presented an optimal algorithm with
competitive ratio $\frac{5}{4}$. Liu et al. (2009) \cite{OSTUMMM}
and Chen et al. (2011,2012) \cite{OAOSBRTE,OSORTE} investigated the
generalization where the two machines are of speeds $1$ and $s$,
and $k$ arbitrary tasks can be reassigned at the end, and established
optimal algorithms for some ranges of values of $s$.

Another common method to improve scheduling for a batch of tasks is
to use a buffer to store tasks before assigning them. The many variations
(Dósa and Epstein, 2010; Chen et al, 2013) \cite{OSBRM,SOHSPBR} of
the problem are due to the number of machines, their speeds, and whether
some tasks can be executed only on certain machines, among other factors.
Sun and Fan (2013) \cite{ISOMSRB} analyzed one such load minimization
problem for $p$ identical machines, where the system has a fixed-size
buffer. On each request, the task must be assigned permanently to
a machine or stored in the buffer if it is not full. Each task in
the buffer can be assigned permanently to a machine at any time. They
improved the upper bound on the buffer sizes for both the optimally
competitive algorithm for large $p$ and a $1.5$-competitive algorithm.

\subsubsection*{Real-time rescheduling}

The common drawback to delayed rescheduling models is that they are
inapplicable to continuous real-time systems where, on each request,
immediate resource allocations must be made so that a valid schedule
is always maintained.

Dósa et al. (2011) \cite{OSRTRM} proved that for a finite set of
tasks and two machines with different speeds, using a buffer is not
as efficient as bounded reallocations, where on each request $k$
previously allocated tasks can be reallocated together with the new
task. They also give an optimal bounded reallocation algorithm for
certain parameter ranges.

Sanders et al. (2009) \cite{OSBM} considered proportionally bounded
reallocation cost instead, such that for each new task of size $L$
that is inserted, any set of currently allocated tasks that have total
size bounded by $rL$ can be reallocated to a different machine. They
determined that for any $\g>1$ there is such a scheduler that is
$\g$-competitive for some $r$. Like this paper, this explores the
intermediate class of scheduling algorithms between exact optimal
algorithms and $\g$-competitive algorithms.

\subsection{Dominant reallocation cost}

Real-time scheduling problems tend to involve resources that have
already been reserved for the scheduling system, and thereby it is
not unusual that the dominant cost to be minimized is not the resource
cost but the reallocation cost. In general, the system is also $\g$-underallocated
or $\e$-slack in some sense, in other words the capacity is always
at least $\g=1+\e$ times the load under some suitable measure, and
we seek algorithms whose performance degrades gracefully as load approaches
capacity.

Davis et al. (2006) \cite{OAMRRNC} put forward a neat problem with
a pool of a single resource of total size $T$, and $n$ users, each
requiring a certain number of the resource at each time step. At each
step, the scheduler has to distribute the pool of resources to the
users without knowing their requirements but only knowing which of
them are satisfied. To do so must repeatedly change the resource allocations
until all are satisfied, with the aim of minimizing the total number
of changes to the user allotments. They devise a randomized algorithm
that is $O(\log_{\g}(n))$-competitive if the size of the resource
pool is increased to $\g T$ given any $\g>1$, and also show that
an expanded resource pool is necessary for any $f(n)$-competitive
algorithm given any function $f$. This illustrates the smooth trade-off
between the underallocation and the reallocation cost.

In a similar direction, Bender et al. (2014) \cite{COSR} gave an
optimal cost-oblivious algorithm to maintain, given $\e>0$, an allocation
of memory blocks (with no window constraints) that has makespan within
$(1+\e)$ times the optimal, with a reallocation cost of $O\left(\frac{1}{\e}\log(\frac{1}{\e})\right)$
times the optimal as long as the reallocation cost is subadditive
and monotonic in the block size. And in \cite{CORSP} they gave a
cost-oblivious algorithm to maintain an allocation of tasks to multiple
processors that has sum of completion times within a constant factor
of optimal, with a reallocation cost within $O(1)$ of the optimal
if the reallocation cost is strongly subadditive.

The real-time arbitrary reallocation problem defined at the start
of this paper also has nonzero underallocation and dominant allocation
cost, and there is again a trade-off between minimizing the number
of reallocations and minimizing the amount of underallocation needed.
Bender et al. (2013) \cite{RPIS} investigated one variant where the
system is always $\g$-underallocated for some constant $\g$, and
obtained an algorithm for sufficiently large $\g$ that uses only
$O(\log^{\ast}(\min(n,c))$ reallocations where $n$ is the current
number of tasks and $c$ is the maximum window length. It is not apparent
whether this is asymptotically optimal, and they also ask if there
is an algorithm for arbitrary $\g>1$, which would be more practical.
As such, this paper goes in that direction, and shows that for fixed-length
task windows there is indeed a single-processor reallocation scheduler
such that the number of reallocations needed on each task insertion
is dependent on only the current underallocation. We also show partial
results for variable-length task windows, firstly that it is impossible
if underallocation is below some threshold, and secondly that a nonzero
reallocation rate is inevitable regardless of underallocation. For
sufficiently large underallocation, we have an allocator that we conjecture
makes at most a single reallocation per insertion.

\clearpage{}

\section{Summary\label{Summary}}

In this paper an instance is a set of tasks in the system, and is
said to be $\g$-underallocated and $\e$-slack iff there is still
a solution when task lengths are multiplied by $\g=1+\e$, which we
shall call a $\g$-solution. The slack of an instance is then defined
as the maximum $\e$ such that it is $\e$-slack, and this maximum
can be seen to exist by compactness. We decided to look mainly at
the unit-task case.

The first variant is where the window lengths are all the same. We
explicitly detail a single-processor allocator \nameref{F.FA} that
needs only $O\left(\frac{1}{\e}\ln(\frac{1}{\e})\right)$ reallocations
per insertion for arbitrary instances, which is independent of the
number of tasks $n$ and even the window length, and does not reallocate
any tasks on deletion. Also, its time complexity is only $O\left(\frac{1}{\e}\log(\frac{1}{\e})\log(n)+\log(n)^{2}\right)$
per insertion and $O(\log(n))$ per deletion if it is allowed to maintain
an internal state. To that end we prove a few preliminary results
that would be frequently used throughout the proof of this allocator,
including the basic theorem \nameref{F.Ord} and procedures \nameref{F.Left}
and \nameref{F.Near}. We also construct a hard situation that gives
a \nameref{F.LowerBound} of $\W\left(\frac{1}{\e}\ln(\frac{1}{\e})\right)$
reallocations, suggesting that an allocator might in fact need that
number of reallocations for some operations, although we do not know
whether there is an allocator that can perpetually avoid such situations
altogether and, nor if there is one that has asymptotically lower
amortized reallocation cost.

The second variant is where window lengths can be different, where
Bender et al. asked if there is an efficient allocator for arbitrary
positive slack \cite[Open questions]{RPIS}. We answer the question
in the negative under rather weak conditions in \ref{V.SmallSlack}.
Specifically, we say that there is no efficient allocator iff for
$p$ processors and for any $n$ there is a sequence of operations
that force $\W(\frac{n}{p})$ reallocations per insertion on average
although there are at most $n$ tasks in the system at any time. Then
for any $\e<\frac{1}{3}$ there is no efficient allocator even if
the instance is always $\e$-slack. This raises the question of whether
there is an efficient allocator if $\e\ge\frac{1}{3}$. At the other
extreme, regardless of the underallocation there is still a \nameref{V.ReallocReq},
where any allocator can be forced to make at least one reallocation
for some sequence of insertions.

If the instance always remains $4\g$-underallocated where $\g$ is
a power of $2$, we can \nameref{V.Align} it and maintain a solution
for the aligned instance, where windows have endpoints recursively
aligned to powers of $2$. This \nameref{V.Alignment} was mentioned
by Bender et. al. \cite[Lemma 10]{RPIS}, but as stated there it is
incorrect \footnote{Their main result is still valid because it depends only on the correct
reduction theorem.} because there is a counter-example for $\g=27$. As they noted in
\cite[Lemma 4]{RPIS}, any insertion such that the instance remains
aligned can be solved in $O(\log(n))$ reallocations, and this is
asymptotically tight for some sequence of operations if there is no
\nameref{V.UnderallocReq}. Furthermore, all allocators have \nameref{V.NonGeneric},
in the sense that for any $\g$ and as $n\to\infty$ there is some
$\g$-underallocated insert state which requires $\W(\log(n))$ reallocations
to be solved. Therefore any allocator that does better must completely
avoid such insert states. We believe that it is indeed possible to
maintain a solution if the instance is always aligned and $2$-underallocated,
and in fact we have an allocator \nameref{V.VA} that we conjecture
makes at most $1$ reallocation per insertion. Empirically it worked
in all our experiments using partially random operations, but the
search space is too large for these tests to be reliable.

Finally we reduce the $p$-processor problem to the $1$-processor
problem in \ref{M.Reduction}, giving an allocator that is guaranteed
to work if the original slack is larger than $2\frac{p-1}{p+1}$.
It is doubtful that this allocator is optimal, but it is also not
clear how to do better. We also show in \ref{M.SmallSlack} that even
if all windows are of the same length and the instance always remains
$\e$-slack, there is no efficient allocator if $p>1$ and $\e<\frac{1}{4p-1}$.
This is a rather strong negative result in light of our efficient
allocator FA for one processor, but leaves unanswered what happens
when $\e\in[\frac{1}{4p-1},2\frac{p-1}{p+1}]$.

At the end we briefly discuss the case of variable-length tasks. Even
if the instance is always $\g$-slack before each insertion, there
is no efficient allocator regardless of how large $\g$ is. And even
if the instance is always $\g$-slack (including after insertions),
there is no efficient allocator if $\g<2$, but we do not know what
happens when $\g\ge2$.

\section{Data structures}

Although the algorithms described guarantee the existence of a solution
with low reallocation cost, actually finding the solution would be
inefficient without the following data structures. An IDSet is an
iterable ordered set data structure based on immutable balanced binary
trees with all data at the leaves, and it allows access by both id
and position, besides the usual update and search operations. An IDSetRQ
is an augmented version of IDSet to also allow range queries for any
user-defined associative binary function. These data structures take
worst-case $O(\log(n))$ time for any operation, including deep copying.

\clearpage{}

\section{Definitions}

We shall begin by defining general symbolic notation that are not
universal but will be used in this paper to express all statements
precisely yet concisely, and then we shall define terms specific to
the reallocation problem.

\subsection{General symbolic notation\label{Def.Symbolic}}

Let $null$ be a sentinel value denoting ``nothing''.

Let $x[i]$ denote the element in $x$ indexed by $i$ for any sequence
$x$ and integer $i$.

Let $x[y]=\seq{x[i]}{i\in y}$ for any sequence $x$ and integer sequence
$y$, where ``$\in$'' denotes ``is generated in order by''.

Let $[a..b]$ be the strictly increasing sequence $\seq x{x\in\zz\land a\le x\le b}$
for any reals $a,b$, and let $x[a..b]=\seq{x[i]}{i\in[a..b]}$.

Let $()$ denote the empty sequence, and for any $x$ let $\f{seq}(x)$
be the sequence that has $x$ as its only element.

Let $x\cdot y$ be the concatenation of $x$ followed by $y$ for
any finite sequences $x,y$.

\global\long\def\first{\p{first}}
\global\long\def\last{\p{last}}

Let $x\first$ and $x\last$ be the first and last element in $x$
respectively for any sequence $x$, and let $\#(x)$ be its length.

\global\long\def\start{\p{start}}
\global\long\def\endd{\p{end}}

Let $X\start$ and $X\endd$ be the start and end respectively of
$X$ for any interval $X$, and let $\f{span}(X)$ be its length/span.

Let $\le$ on intervals be the partial ordering such that $X\le Y\eq X\start\le Y\start\land X\endd\le Y\endd$
for any intervals $X,Y$.

For convenience let left/right be associated with earlier/later for
interval comparison.

Call a integer/interval sequence $x$ \textbf{ordered} iff its non-$null$
elements are in increasing order, and for any integer sequence $y$
let ``\textbf{sort $x[y]$}'' mean ``permute the non-$null$ elements
among $x[y]$ such that $x[y]$ is ordered''.

Let $c+[a,b]=[a+c,b+c]$ and $[a,b]c=[ac,bc]$ for any reals $a,b,c$.

Let $\f{span(}S)=\max_{X\in S}X\endd-\min_{X\in S}X\start$ for any
finite set/sequence of intervals $S$.

Let $\cond PXY$ evaluate to $X$ iff $P=true$ and $Y$ otherwise
for any boolean $P$ and expressions $X,Y$.

\subsection{Reallocation problem terminology\label{Def.DRPIS}}

Take any \textbf{instance} $I=(n,T,W)$ that comprises a set of $n$
unit tasks $T[1..n]$ and their windows $W[1..n]$. For convenience
we shall often not mention the tasks but associate an allocated slot
directly with the task's window.

Call $S$ a \textbf{valid allocation for} $I$ iff it is an allocation
of tasks in $I$ such that each task $T[i]$ in $I$ is allocated
to the slot $S[i]$ of unit length within $W[i]$ or $S[i]=null$
if $T[i]$ is unallocated, and no two tasks in $I$ are allocated
to overlapping slots. For convenience we shall use ``slot'' to refer
to a unit interval unless otherwise specified.

Call $I$ \textbf{ordered} iff $W[1..n]$ is ordered. If so, call
a valid allocation $S$ for $I$ \textbf{ordered} iff $S[1..n]$ is
ordered. (Unallocated tasks are ignored.)

Call $S$ a \textbf{solution for $I$} iff $S$ is a valid allocation
for $I$ that allocates all tasks in $I$.

Call $S$ a \textbf{partial solution for $(I,k)$} iff $S$ is a valid
allocation for $I$ that allocates all tasks in $I$ except $T[k]$.

Call $S$ a \textbf{$\g$-solution for $I$} iff $S$ is a solution
for $I'$ where $I'$ is $I$ with all task lengths multiplied by
$\g$.

Call $S$ a \textbf{$\g$-partial solution for $(I,k)$} iff $S$
is a partial solution for $(I',k)$ where $I'$ is $I$ with all task
lengths multiplied by $\g$.

Call $I$ \textbf{feasible} iff there is a solution for $I$, and
call $I$ \textbf{$\g$-underallocated} iff there is a $\g$-solution
for $I$.

Call $I$ \textbf{$\e$-slack} iff $I$ is $(1+\e)$-underallocated,
and let the \textbf{underallocation of $I$} be the maximum such $\e$,
which clearly exists.

Call $(I,S,k)$ an \textbf{insert state} iff $S$ is a partial solution
for $(I,k)$, and call it \textbf{ordered} iff $I$ is ordered, \textbf{feasible}
iff $I$ is feasible, and \textbf{$(1+\e)$-underallocated} or equivalently
\textbf{$\e$-slack} iff there is a $(1+\e)$-partial solution for
$(I,k)$.

Call an insertion of a new task \textbf{feasible} iff the resulting
instance is feasible, or equivalently iff it creates a feasible insert
state.

Call an algorithm A an \textbf{allocator} iff it maintains a solution
on any feasible task insertion.

\clearpage{}

\section{Fixed window length}

In this cost model that does not distinguish between tasks of different
lengths, an efficient allocator is impossible if tasks can have arbitrary
lengths, and hence we shall look at only the case of unit-length tasks
until the very end of this paper. This section will be restricted
to the first variant where all the windows have the same fixed length
$c$, and henceforth we can always assume that the instance is ordered.
Our main result is that for one processor the maximum number of reallocations
needed on each insertion for any $\e$-slack instance is $\Q\left(\frac{1}{\e}\log(\frac{1}{\e})\right)$
as $\e\to0$. In the following subsections, we will establish a collection
of useful algorithms and theorems, and then demonstrate a family of
instances that show that the bound cannot be improved, and finally
describe and prove an optimal allocator that realizes the bound while
still having a good time complexity.

\subsection{Preliminaries}

A simple but very helpful theorem is that given any ordered instance
\textit{$I$} and solution $S$ for $I$, sorting $S$ (permuting
the slots so that they are now in the same order as their windows)
gives an ordered solution for $I$. As an easy consequence, given
any valid allocation $S$ for $I$, sorting any subsequence of $S$
in-place (so that the slots in that subsequence are now in the same
order as their windows) gives a valid allocation for $I$. Also, any
feasible instance has an ordered solution.

It is then clear that the greedy algorithm, which allocates the tasks
from left to right each to the leftmost possible slot, produces an
ordered solution $L$, such that given any ordered solution $X$,
every slot in $S$ is no later than the corresponding slot in $X$.
Now given any feasible ordered insert state $(I,S,r)$ with $n$ tasks
such that $S$ is ordered, we can use $L$ to construct a near ordered
solution for $(I,S,r)$, which is defined as an ordered solution $N$
such that $S[r-1]\le N[r]\le S[r+1]$ (the inserted task is allocated
to a slot within the range given by the neighbouring slots in the
original ordered solution).

Along the way, we also define a procedure \nameref{F.Snap} that takes
as inputs a slot $s$ and a window $w$ and returns the slot within
$w$ that is closest to $s$. This procedure will be used later as
well.

The rest of this preliminary results section contain the precise statements
and proofs of the above theorems. All of them can be extended without
difficulty to the multi-processor case, unlike the results in the
later sections.
\begin{thm}[Ordering]
\label{F.Ord} The following are true:
\begin{enumerate}
\item For any ordered instance $I=(n,T,W)$ and solution $S$ for $I$,
$S$ when sorted is an ordered solution for $I$.
\item For any ordered instance $I=(n,T,W)$ and valid allocation $S$ for
$I$ and ordered sequence $k[1..m]$, $S$ with $S[k[1..m]]$ sorted
is a valid allocation for $I$.
\end{enumerate}
\end{thm}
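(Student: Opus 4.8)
The plan is to prove part (1) first and then derive part (2) from it by a localization argument. For part (1), let $S$ be a solution for the ordered instance $I=(n,T,W)$, and let $\sigma$ be the permutation of $[1..n]$ that sorts the slots, i.e. the sequence $S'=S[\sigma^{-1}]$ (so $S'[1..n]$ lists the same slots in nondecreasing order of start time). I would argue that $S'$ is again a valid allocation by an exchange/induction argument: it suffices to show that a single adjacent transposition that fixes an out-of-order pair preserves validity, since any sorting permutation is a composition of such transpositions and validity is the invariant being maintained. So suppose $i<j$ are adjacent in the current order, $W[i]\le W[j]$ (because $I$ is ordered), but the slot currently assigned to task $i$ lies strictly to the right of the slot assigned to task $j$; call these slots $a$ (for task $i$) and $b$ (for task $j$), so $b\start<a\start$. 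After swapping, task $i$ gets slot $b$ and task $j$ gets slot $a$. Validity for the rest of the allocation is untouched, so I only need: $b\wi W[i]$ and $a\wi W[j]$. Since unit slots $a,b$ satisfy $b\le a$ and both are unit intervals, $W[i]\start\le b\start$ follows from $W[i]\start\le a\start$ (as $a\wi W[i]$ originally and $b\start<a\start$... wait, I need the left end), and $b\endd\le a\endd\le W[i]\endd$; combined with $W[i]\start\le W[j]\start$ this is where orderedness is used. Concretely: $a\wi W[i]$ gives $W[i]\start\le a\start$; but we need $W[i]\start\le b\start$, which does not follow from $a$ alone — instead use that $b\wi W[j]$ gives $W[j]\start\le b\start$, and $W[i]\le W[j]$ gives $W[i]\start\le W[j]\start\le b\start$, while $b\endd\le a\endd\le W[i]\endd$. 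Symmetrically $W[j]\start\le W[i]\start\le a\start$ (using orderedness in the other direction, $W[i]\start\le W[j]\start$, hmm) — here I should instead note $a\start\ge W[j]\start$ because $a\start > b\start\ge W[j]\start$, and $a\endd\le W[i]\endd\le W[j]\endd$ by orderedness. So both containments hold and the swap is valid. Since swaps strictly decrease the number of inversions, finitely many of them reach the fully sorted $S'$, which is then by construction an ordered valid allocation allocating every task, i.e. an ordered solution.

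For part (2), fix the valid allocation $S$ and the ordered index sequence $k[1..m]$. The subsequence $W[k[1..m]]$ is ordered (a subsequence of an ordered sequence need not be defined here since $I$ itself may not be ordered — but the hypothesis only requires $k[1..m]$ ordered, and the statement is about sorting the slots $S[k[1..m]]$ to match their windows' order). I would apply exactly the same adjacent-transposition argument but only to pairs drawn from $\{k[1],\dots,k[m]\}$: whenever two tasks $k[s],k[t]$ with $s<t$ have $W[k[s]]\le W[k[t]]$ but their slots are out of order, swap them; each swap preserves validity by the computation above (which only used the two windows involved and the two slots involved, plus the orderedness of those two windows, which holds because... actually part (2) as stated does not assume $W[k[1..m]]$ is ordered either). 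Re-reading: "$S$ with $S[k[1..m]]$ sorted" means permute the slots so that $S[k[1..m]]$ becomes ordered — and for this to be validity-preserving one does need the corresponding windows to already be in order. I will handle this by the same reduction: it is the content of the informal remark preceding the theorem that sorting a subsequence works because on the sub-instance restricted to those tasks, part (1) applies once we observe the restricted windows inherit whatever order is needed; most cleanly, I would prove the general lemma that for any valid allocation, replacing the slots on an index set $K$ by any rearrangement of those same slots that is "consistent" with $\le$ on $\{W[i] : i\in K\}$ keeps the allocation valid, again via adjacent transpositions, and then part (2) is the special case where the rearrangement is the sorted one and $K=\{k[1],\dots,k[m]\}$.

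The only real obstacle is getting the interval bookkeeping in the single-swap step exactly right — specifically checking that after swapping two out-of-order unit slots between two tasks whose windows are comparable ($W[i]\le W[j]$ in the $\le$ ordering on intervals, meaning $W[i]\start\le W[j]\start$ and $W[i]\endd\le W[j]\endd$), each slot still sits inside its new window. Everything else (composition of transpositions, termination by inversion count, the no-overlap condition being preserved because we are permuting a set of pairwise-disjoint slots among a set of tasks) is routine. I expect the clean statement of the swap lemma to be: if $a\wi W[i]$, $b\wi W[j]$, $b\start < a\start$, and $W[i]\le W[j]$, then $b\wi W[i]$ and $a\wi W[j]$; the proof is four inequalities chasing $\start$ and $\endd$ through the hypotheses, using that $a,b$ are unit intervals so $X\start$ and $X\endd$ determine each other. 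Given that lemma, both parts follow immediately as described.
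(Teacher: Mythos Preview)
Your approach is correct and essentially the same as the paper's: both hinge on the swap lemma (if $W[i]\le W[j]$ and the current slots $a\wi W[i]$, $b\wi W[j]$ satisfy $b<a$, then after swapping $b\wi W[i]$ and $a\wi W[j]$) and apply it iteratively until the slots are sorted---the paper uses a selection-sort style iteration (at each step move the globally correct slot into place) while you use adjacent transpositions, which is an immaterial difference.

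One clarification on part~(2): your hesitation about whether $W[k[1..m]]$ is ordered is unnecessary. The hypothesis explicitly says $I$ is an \emph{ordered instance}, so $W[1..n]$ is ordered, and $k[1..m]$ is an ordered (i.e.\ increasing) integer sequence; hence $W[k[1..m]]$ is automatically ordered. With that observed, part~(2) reduces directly to part~(1) applied to the sub-instance on the allocated tasks among $T[k[1..m]]$ (the nulls are simply ignored by the definition of ``sort''), which is exactly what the paper does---so there is no need for the more elaborate ``general lemma'' you sketch at the end.
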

\begin{proof}
We shall first prove (1). Take any ordered instance $I=(n,T,W)$ and
solution $S$ for $I$. Let $S'$ be $S$ sorted.

While $S\ne S'$ we shall iteratively modify $S$ such that the following
invariances hold after each step $j$:
\begin{enumerate}
\item $S[1..j]=S'[1..j]$.
\item $S$ is a solution for $I$.
\end{enumerate}
Then after at most $n$ steps $S[1..n]=S'[1..n]$ and hence $S'$
is a solution for $I$.

After step 0, Invariances~1,2 are trivially satisfied. At step $j$,
let $W[i]$ be the earliest window in $W$ such that $S[i]\ne S'[i]$.
Then $S[i]>S'[i]=S[k]$ for some $k\in[i+1..n]$ because $S[1..i-1]=S'[1..i-1]$
and $S'$ is ordered. Also, $i>j-1$ because of Invariance~1. Swap
$S[i]$ and $S[k]$. $S$ is still a solution for $I$, because before
the swap $W[i]\le W[k]$ and $S[i]>S[k]$. After the swap, $S[i]=S'[i]$
and hence $S[1..i]=S'[1..i]$. Therefore Invariances~1,2 are preserved.

Now we shall prove (2). Take any ordered instance $I=(n,T,W)$ and
valid allocation $S$ for $I$ and ordered sequence $k[1..m]$. Let
$X$ be $S$ with $S[k[1..m]]$ sorted, and let $j[1..a]$ be the
subsequence of $k$ such that $T[j[1..a]]$ are exactly the allocated
tasks in $S[k[1..m]]$. Then $S[j[1..a]]$ is a solution for $I'=(a,T[j[1..a]],W[j[1..a]])$
and $X[j[1..a]]$ is $S[j[1..a]]$ sorted. Thus $X[j[1..a]]$ is an
ordered solution for $I'$ by (1), and hence $X[k[1..m]]$ is an ordered
valid allocation for $(m,T[k[1..m]],W[k[1..m]])$. Since $X$ does
not have any overlapping slots, $X$ is a valid allocation for $I$.\end{proof}
\begin{rem*}
\nameref{F.Ord} and its proof applies with no change to the $p$-processor
case, because sorting does not affect the processor and position for
each slot. Additionally, it is easy to see that any ordered solution
can be made into a cyclic one, namely that it has exactly the same
allocated slots but the allocated processor cycles with the window
rank modulo $p$.\end{rem*}
\begin{proc}[Leftmost]
\label{F.Left}~
\begin{block}
\item \textbf{\uline{Implementation}}

\begin{block}
\item \textbf{Procedure Leftmost( instance $I=(n,T,W)$ ):}

\begin{block}
\item Set $S[0]\endd=-\infty$.
\item For $i$ from $1$ up to $n$:

\begin{block}
\item Set $S[i]\start=\max(W[i]\start,S[i-1]\endd)$.
\item Set $S[i]\endd=S[i]\start+1$.
\end{block}
\item Return $S[1..n]$.
\end{block}
\end{block}
\end{block}
\end{proc}
\begin{thm}[Leftmost's properties]
\label{F.Left.prop} Take any feasible ordered instance $I$. Let
$S=\f{Leftmost}(I)$. Then the following hold:
\begin{enumerate}
\item $S$ is an ordered solution for $I$.
\item For any ordered solution $X$ for $I$, we have $S[i]\le X[i]$ for
any $i\in[1..n]$.
\end{enumerate}
\end{thm}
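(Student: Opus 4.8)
The plan is to establish both parts by a single induction on $i$ along the execution of the procedure, using the feasibility of $I$ to guarantee that every slot chosen actually fits inside its window. Concretely, I would prove by induction on $i\in[0..n]$ the joint statement: $S[1..i]$ is an ordered valid allocation for the sub-instance on the first $i$ tasks (with $S[j]\start\ge W[j]\start$ and $S[j]\endd\le S[j+1]\start$ for all relevant $j$, and $S[i]\endd\le W[i]\endd$), and moreover $S[j]\le X[j]$ for every $j\in[1..i]$ and every ordered solution $X$ for $I$. The base case $i=0$ is vacuous given the sentinel $S[0]\endd=-\infty$.

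For the inductive step from $i-1$ to $i$: by construction $S[i]\start=\max(W[i]\start,S[i-1]\endd)$, so $S[i]\start\ge W[i]\start$ and $S[i]\start\ge S[i-1]\endd$, which gives orderedness and non-overlap with slot $i-1$ (and hence with all earlier slots, by the inductive ordering). The key point is to show $S[i]\endd\le W[i]\endd$, i.e. the slot still fits. Here I would invoke feasibility: fix any ordered solution $X$ for $I$ (one exists by the remarks preceding \nameref{F.Ord}, since a feasible ordered instance has an ordered solution). By the induction hypothesis $S[i-1]\le X[i-1]$, in particular $S[i-1]\endd\le X[i-1]\endd\le X[i]\start$ since $X$ is ordered and non-overlapping. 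Also $W[i]\start\le X[i]\start$. Taking the max, $S[i]\start=\max(W[i]\start,S[i-1]\endd)\le X[i]\start$, so $S[i]\endd=S[i]\start+1\le X[i]\start+1=X[i]\endd\le W[i]\endd$. This simultaneously proves that the slot fits (giving (1) at stage $i$) and that $S[i]\le X[i]$ (giving (2) at stage $i$, since $S[i]\start\le X[i]\start$ and $S[i]\endd\le X[i]\endd$).

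The main obstacle — though it is more a matter of bookkeeping than of real difficulty — is making sure the "fits inside the window" argument is not circular: we need an ordered solution $X$ to exist before we know $S$ is one, which is exactly why feasibility (not just the algorithm's own progress) is used, and why it is cleanest to fold (1) and (2) into one induction rather than proving (1) first and (2) afterwards. Once the induction is complete at $i=n$, part (1) is the statement that $S[1..n]$ is an ordered valid allocation of all $n$ tasks, i.e. an ordered solution for $I$, and part (2) is the inequality $S[i]\le X[i]$ for all $i$ and all ordered solutions $X$, as required. No $p$-processor considerations enter; as the surrounding text notes, the argument carries over verbatim, but I would state and prove it for one processor.
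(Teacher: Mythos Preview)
Your proposal is correct and follows essentially the same approach as the paper: both argue by induction on $i$, fixing an arbitrary ordered solution $X$ (which exists by feasibility and \nameref{F.Ord}) and using $S[i-1]\le X[i-1]$ together with orderedness of $X$ to bound $S[i]\start\le X[i]\start$, from which both $S[i]\wi W[i]$ and $S[i]\le X[i]$ follow. The only cosmetic differences are that the paper splits into the two cases $W[i]\start\gtrless S[i-1]\endd$ where you take the $\max$ directly, and the paper establishes (2) first and reads off (1) at the end where you fold them into one joint induction; neither difference is substantive.
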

\begin{proof}
Consider any ordered solution $X$ for $I$, and set $S[0]=X[0]=(-\infty,-\infty)$.
We shall inductively prove that $S[i]\le X[i]$ for each $i\in[1..n]$.
For each $i$ from $1$ to $n$, either of the following cases hold:
\begin{itemize}
\item $W[i]\start\ge S[i-1]\endd$:

\begin{block}
\item Then $S[i]\start=W[i]\start\le X[i]\start$, because $X$ is a solution.
\end{block}
\item $W[i]\start<S[i-1]\endd$:

\begin{block}
\item Then $S[i]\start=S[i-1]\endd\le X[i-1]\endd\le X[i]\start$, because
$S[i-1]\le X[i-1]$ by induction and $X$ is an ordered solution.
\end{block}
\end{itemize}
In both cases $S[i]\le X[i]$. Therefore by induction (2) follows.
Also, for any $i\in[1..n]$, we have both $S[i]\start\ge W[i]\start$
and $S[i]\endd\le X[i]\endd\le W[i]\endd$, and hence $S[i]\wi W[i]$.
Finally by construction $S[i]\start\ge S[i-1]\endd$ for any $i\in[1..n]$,
hence $S$ is an ordered solution. Since such an $X$ exists by \nameref{F.Ord}
(\ref{F.Ord}), (1) follows.\end{proof}
\begin{rem*}
\nameref{F.Left} has an analogous version for multiple processors,
which assigns for each task in order the leftmost possible (processor,slot)
pair that does not overlap previous assignments, and has exactly the
same properties, with an analogously modified proof.\end{rem*}
\begin{proc}[Near]
\label{F.Near}~
\begin{block}
\item \textbf{\uline{Dependencies}}

\begin{block}
\item \textbf{\nameref{F.Left} (\ref{F.Left})}
\end{block}
\item \textbf{\uline{Implementation}}

\begin{block}
\item \textbf{Procedure Near( ordered insert state $(I=(n,T,W),S,r)$ with
ordered $S$ ):}

\begin{block}
\item Set $L=\f{Leftmost}(I)$.
\item Return $\cond{L[r]\ge S[r-1]}L{L[1..r-1]\cdot\f{seq}(S[r-1])\cdot S[r+1..n]}$.
\end{block}
\end{block}
\end{block}
\end{proc}
\begin{defn}[Near ordered solution]
 Take any feasible ordered insert state $(I=(n,T,W),S,r)$ with ordered
$S$. Call $N$ a \textbf{near ordered solution for $(I,S,r)$} iff
all the following hold:
\begin{itemize}
\item $N$ is an ordered solution for $I$.
\item $S[r-1]\le N[r]\le S[r+1]$.
\end{itemize}
\end{defn}
\begin{thm}[Near's properties]
\label{F.Near.prop} Take any feasible ordered insert state $(I=(n,T,W),S,r)$
with ordered $S$. Let $N=\f{Near}(I,S,r)$. Then $N$ is a near ordered
solution for $(I,S,r)$.\end{thm}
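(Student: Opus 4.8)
The plan is to set $L=\f{Leftmost}(I)$ and dispatch on the two branches of \nameref{F.Near}. Since $(I,S,r)$ is a feasible ordered insert state, $I$ is a feasible ordered instance, so by \nameref{F.Left.prop} (\ref{F.Left.prop}) $L$ is an ordered solution for $I$ with $L[i]\le X[i]$ for every ordered solution $X$ of $I$ and every $i$. Before the case split I would record the auxiliary bound $L[i]\le S[i]$ for all $i\in[1..r-1]$: the recursion of \nameref{F.Left} reads off only the first $i$ windows, so $L[1..r-1]$ coincides with $\f{Leftmost}$ of the prefix instance $(r-1,T[1..r-1],W[1..r-1])$, and $S[1..r-1]$ is an ordered solution of that prefix instance (it is a contiguous $null$-free block of the valid ordered allocation $S$), so the bound is \nameref{F.Left.prop} (\ref{F.Left.prop}) applied once more. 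I would also use throughout that $S[r-1]$ and $S[r+1]$ are the slots flanking the missing task in the ordered allocation $S$, hence $S[r-1]\endd\le S[r+1]\start$, and I would read $S[0]=(-\infty,-\infty)$ and $S[n+1]=(+\infty,+\infty)$ so the boundary cases $r=1$ and $r=n$ need no separate treatment.

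In the first branch $L[r]\ge S[r-1]$ and $N=L$, which is already an ordered solution for $I$, so only $S[r-1]\le N[r]\le S[r+1]$ is left. The lower bound is the branch condition. For the upper bound, $N[r]\start=L[r]\start=\max(W[r]\start,L[r-1]\endd)$, and both terms are $\le S[r+1]\start$: indeed $W[r]\start\le W[r+1]\start\le S[r+1]\start$ since $I$ is ordered and $S[r+1]\wi W[r+1]$, while $L[r-1]\endd\le S[r-1]\endd\le S[r+1]\start$ by the auxiliary bound and the flanking property. Hence $N[r]\le S[r+1]$.

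In the second branch $L[r]<S[r-1]$ (that is, $L[r]\start<S[r-1]\start$) and $N=L[1..r-1]\cdot\f{seq}(S[r-1])\cdot S[r+1..n]$, so $N[i]=L[i]$ for $i<r$, $N[r]=S[r-1]$, $N[i]=S[i]$ for $i>r$, and $N$ has no $null$ entry. I would then check that $N$ is a valid ordered solution. Membership $N[i]\wi W[i]$ is inherited from $L$ for $i<r$ and from $S$ for $i>r$; for $i=r$ it holds because $S[r-1]\endd\le W[r-1]\endd\le W[r]\endd$ and $W[r]\start\le L[r]\start<S[r-1]\start$ (the first inequality from $L[r]\start=\max(W[r]\start,L[r-1]\endd)$, the second the branch condition). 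Orderedness with consecutive non-overlap is inherited from $L$ among indices $<r$ and from $S$ among indices $>r$; at the seam $(r-1,r)$, $N[r-1]\endd=L[r-1]\endd\le L[r]\start<S[r-1]\start=N[r]\start$, and at the seam $(r,r+1)$, $N[r]\endd=S[r-1]\endd\le S[r+1]\start=N[r+1]\start$. Thus $N$ is an ordered solution for $I$, and $S[r-1]\le N[r]=S[r-1]\le S[r+1]$ holds trivially.

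The only step that needs a real idea rather than bookkeeping is $N[r]\le S[r+1]$ in the first branch: it rests on observing that the first $r-1$ entries of $\f{Leftmost}(I)$ are themselves a Leftmost output, hence dominated by $S$ on $[1..r-1]$, and then letting that domination propagate one slot to the right through the recursion of \nameref{F.Left}. Everything else is a direct unwinding of that recursion together with the hypothesis that $S$ is an ordered partial solution for $(I,r)$.
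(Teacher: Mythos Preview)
Your proof is correct and follows essentially the same route as the paper's: a case split on $L[r]\ge S[r-1]$, with the first branch bounding $L[r]\start=\max(W[r]\start,L[r-1]\endd)$ above by $S[r+1]\start$ and the second branch verifying directly that the spliced sequence is an ordered solution with $N[r]=S[r-1]$. The one place where you are more careful than the paper is in justifying $L[r-1]\le S[r-1]$ via the prefix argument; the paper uses this inequality tacitly in the chain $\max(W[r]\start,L[r-1]\endd)\le\max(W[r+1]\start,S[r-1]\endd)$ without spelling it out.
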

\begin{proof}
Let $L=\f{Leftmost}(I)$. Either of the following cases hold:
\begin{itemize}
\item $L[r]\ge S[r-1]$:

\begin{block}
\item By \nameref{F.Left.prop} (\ref{F.Left.prop}), $L$ is an ordered
solution for $I$ and $L[r]\start=\max(W[r]\start,L[r-1]\endd)$ $\le\max(W[r+1]\start,S[r-1]\endd)\le S[r+1]\start$,
and hence $S[r-1]\le L[r]\le S[r+1]$. Also, $N=L$.
\end{block}
\item $L[r]<S[r-1]$:

\begin{block}
\item Then $L[r-1]\endd\le L[r]\start<S[r-1]\start$, and hence $N=L[1..r-1]\cdot\f{seq}(S[r-1])\cdot S[r+1..n]$
is an ordered solution for $I$ and $S[r-1]=N[r]<S[r+1]$.
\end{block}
\end{itemize}
Therefore in both cases $N$ has the properties claimed.\end{proof}
\begin{rem*}
The multi-processor version of \nameref{F.Near} combines a \nameref{F.Left}
solution and a Rightmost solution (defined symmetrically to Leftmost)
that agree on the processor for the inserted task to obtain a solution
with the desired properties for similar reasons.\end{rem*}
\begin{proc}[Snap]
\label{F.Snap}~
\begin{block}
\item \textbf{\uline{Implementation}}

\begin{block}
\item \textbf{Procedure Snap( slot $s$ , window $w$ ):}

\begin{block}
\item If $s\start<w\start$:

\begin{block}
\item Return $w\start+[0,1]$.
\end{block}
\item If $s\endd>w\endd$:

\begin{block}
\item Return $w\endd+[-1,0]$.
\end{block}
\item Return $s$.
\end{block}
\end{block}
\end{block}
\end{proc}
\begin{thm}[Snap's properties]
\label{F.Snap.prop} Take any slot $s$ and window $w$. Let $r=\f{Snap}(s,w)$.
Then the following properties hold:
\begin{enumerate}
\item $r\wi w$.
\item $r\start\le\max(w\start,s\start)$.
\item $r\endd\ge\min(w\endd,s\endd)$.
\end{enumerate}
\end{thm}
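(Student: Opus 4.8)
The plan is to verify the three properties of Snap by a straightforward case analysis that mirrors the three branches of the procedure itself. There is essentially nothing deep here: Snap is a three-way conditional, so I would split into the three cases $s\start<w\start$, $s\endd>w\endd$ (with $s\start\ge w\start$), and the remaining case $w\start\le s\start$ and $s\endd\le w\endd$, and in each case just read off $r$ and check the inequalities. I should first note the standing assumptions that $s$ is a unit slot (so $s\endd=s\start+1$) and $w$ is a window with $\f{span}(w)\ge 1$ (so $w\start\le w\endd-1$); these are what make the returned intervals in the first two branches actually lie inside $w$.

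\textbf{Case 1: $s\start<w\start$.} Here $r=w\start+[0,1]$. Then $r\start=w\start$ and $r\endd=w\start+1\le w\endd$ since $\f{span}(w)\ge1$, giving (1). For (2), $r\start=w\start\le\max(w\start,s\start)$. For (3), $r\endd=w\start+1>s\start+1=s\endd\ge\min(w\endd,s\endd)$ — actually I just need $r\endd\ge\min(w\endd,s\endd)$, and $r\endd=w\start+1> s\endd$ so in particular $r\endd\ge\min(w\endd,s\endd)$. Wait — I should double check: is $r\endd\ge w\endd$ possible to fail? We need $\min(w\endd,s\endd)$, and since $s\endd = s\start+1 < w\start+1 = r\endd$, we get $r\endd > s\endd \ge \min(w\endd, s\endd)$. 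Good.

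\textbf{Case 2: $s\start\ge w\start$ and $s\endd>w\endd$.} Here $r=w\endd+[-1,0]$, so $r\start=w\endd-1\ge w\start$ (using $\f{span}(w)\ge1$) and $r\endd=w\endd$, giving (1). For (3), $r\endd=w\endd\ge\min(w\endd,s\endd)$. For (2), $r\start=w\endd-1<s\endd-1=s\start\le\max(w\start,s\start)$.

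\textbf{Case 3: $w\start\le s\start$ and $s\endd\le w\endd$.} Here $r=s\subseteq w$ directly, so (1) holds, and (2),(3) hold with equality since $r\start=s\start\le\max(\cdots)$ and $r\endd=s\endd\ge\min(\cdots)$.

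\textbf{The main obstacle —} honestly there isn't one; the only thing to be careful about is making the implicit hypotheses explicit, namely that $s$ has unit length and $w$ has span at least $1$ (so that $w$ can actually contain a unit slot). If the paper intends Snap to be total even for windows of span $<1$, properties (1)–(3) as stated would still need those hypotheses, so I would state at the outset that $s$ is a slot (unit interval) and $w$ is a window of span $\ge1$, consistent with the paper's convention that "slot" means a unit interval and windows in the problem have length $c\ge1$. With that, the proof is the three-case check above, and I would write it compactly using the \textbf{block}/itemize style used for the earlier "properties" theorems.
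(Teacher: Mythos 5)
Your proposal is correct and is just the explicit version of what the paper does: the paper's proof is simply ``All properties are obvious by construction,'' and your three-case check along the branches of Snap (together with the implicit standing assumptions that $s$ is a unit slot and $\f{span}(w)\ge1$) is exactly the verification being elided.
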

\begin{proof}
All properties are obvious by construction.
\end{proof}

\subsection{Lower bound}

RA is asymptotically optimal for an allocator that keeps the solution
in order, but we can do much better if the solution does not have
to be kept in order. But before we describe such an allocator, we
will first present for any $\e\in(0,1)$ an $\e$-slack feasible insert
state on which any allocator will take at least $\floor{\log_{1+\e}(\frac{1}{\e})}$
reallocations, which is $\Q\left(\frac{1}{\e}\log(\frac{1}{\e})\right)$
as $\e\to0$. There are worse situations that need approximately double
that, but the insert state given here is much easier to analyze, giving
a lower bound on any allocator that can work on arbitrary feasible
insert states. In this insert state there are $\floor{\frac{c-1}{\e}}$
windows in order and the inserted window is the earliest. Each window
has the smallest non-negative start position possible, as constrained
by the existence of a $(1+\e)$-partial solution. The partial solution
is also ordered with the $k$-th allocated slot at $[k-1,k]$. The
details are given below.
\begin{thm}[Lower Bound]
\label{F.LowerBound} Take any $\e\in(0,1)$ and any $c\ge\frac{1}{\e}+1$.
Then there is some feasible $\e$-slack insert state with window length
$c$ such that any allocator $A$ that solves it makes at least $\floor{\log_{1+\e}(\frac{1}{\e})}$
reallocations.\end{thm}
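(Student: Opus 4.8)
The plan is to exhibit the insert state explicitly and then argue by a potential/counting argument that no allocator can solve it cheaply. Fix $\e\in(0,1)$ and $c\ge\frac1\e+1$, set $m=\floor{\frac{c-1}{\e}}$, and take an ordered instance with $m+1$ windows: the inserted window $W[0]$ together with windows $W[1..m]$, where each window has length $c$ and is placed at the smallest nonnegative start position that still permits a $(1+\e)$-partial solution with the $k$-th allocated task sitting in slot $[k-1,k]$. Concretely this forces $W[k]\start = \max(0,\, k - (1+\e)\cdot(\text{something}))$-type positions; the point is that $W[k]\start$ grows roughly like $k\e$ until it reaches $0$ from below, i.e. the windows are ``nested-staggered'' so that the $(1+\e)$-solution is tight and essentially unique in its left portion. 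I would first verify that this state is feasible (the ordered unit solution $S[k]=[k-1,k]$ for $k\in[1..m]$ works, with room for the inserted task on the far right or wherever the construction leaves a gap) and that it is $\e$-slack (the stated $(1+\e)$-partial solution witnesses this).

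The heart of the argument is the lower bound on reallocations. After the insertion, the new instance $I^+$ has $m+1$ unit tasks that must fit in total span $c + (\text{offsets})$; because the windows were placed as tightly as possible against the $(1+\e)$-solution, any genuine (unit, not $(1+\e)$) solution is extremely constrained near the left end. The key structural fact I want is: in \emph{every} solution for $I^+$, the allocated slots for the first several windows (by window rank) are forced into a geometrically shrinking pocket, so that the leftmost $j$ windows occupy positions determined up to the available slack $\e$, and matching the old partial solution $S$ on those windows is impossible unless one ``pays'' a reallocation that then cascades. Quantitatively, I expect to show that if an allocator reallocates only the tasks in some set $R$, then on the complement the slots are unchanged, and the unchanged slots together with the unit-length constraint leave a window of total free length at most $(1+\e)^{-|R|}\cdot\frac1\e$ or so for absorbing the inserted task — and feasibility forces this to be at least $1$, giving $(1+\e)^{|R|}\ge\frac1\e$, hence $|R|\ge\log_{1+\e}(\frac1\e)$, hence $|R|\ge\floor{\log_{1+\e}(\frac1\e)}$ since $|R|$ is an integer. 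Using \nameref{F.Ord} (\ref{F.Ord}) I may assume the new solution is ordered, which makes the ``cascade'' rigorous: once the inserted task is placed, reading left to right each subsequent task is pushed right by a \nameref{F.Left}-type argument, and the unchanged tasks pin down a contradiction unless enough of them were in $R$.

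The main obstacle will be making the ``geometrically shrinking pocket'' precise — i.e. nailing down exactly which quantity contracts by a factor of $(1+\e)$ per unchanged window and showing the product telescopes to $\frac1\e$. I would handle this by defining, for a putative solution $S'$ of $I^+$, the quantity $\delta_k = S'[k]\endd - (\text{right end of window }k\text{'s tight }(1+\e)\text{-slot})$ measuring how far task $k$ has been ``squeezed left'' relative to the padded solution, showing $\delta_0$ must be at least $\e$ (the inserted task needs a full unit where the padded solution had only $\frac{1}{1+\e}$ of a unit of true room), and that $\delta_{k}\ge(1+\e)\delta_{k-1}$ whenever window $k$ is not reallocated (its start is pinned, so squeezing propagates multiplicatively through the overlap structure), while $\delta_k\le \e\cdot(\text{window offset})\le c-1$ always. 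Combining, the number of reallocated windows among $[1..m]$ is at least the number of multiplicative steps needed to grow $\e$ past $\frac{c-1}{?}$, which I will arrange to be exactly $\floor{\log_{1+\e}(\frac1\e)}$ by the choice $m=\floor{\frac{c-1}{\e}}$ and $c\ge\frac1\e+1$. The remaining bookkeeping — checking the base case $\delta_0\ge\e$ and the inductive step under the ``not reallocated'' hypothesis — is routine interval arithmetic using \nameref{F.Snap.prop} and the monotonicity in \nameref{F.Left.prop} (\ref{F.Left.prop}), so I would relegate it to a short computation rather than belabor it here.
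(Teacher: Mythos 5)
Your construction is essentially the paper's (windows with $W[i]\endd=\max(c,i(1+\e))$, the tight partial solution $S[i]=[i-1,i]$, inserted window $[0,c]$), and the feasibility and $\e$-slackness checks you outline go through. The lower-bound argument, however, has a genuine gap, in two places. First, your multiplicative recursion $\delta_k\ge(1+\e)\delta_{k-1}$ is attached to the \emph{non-reallocated} windows, but for a non-reallocated task the slot is literally unchanged ($S'[k]=S[k]=[k-1,k]$), so nothing ``propagates'' through it and the claimed inductive step is false, not routine; the factor $(1+\e)$ in the correct argument is earned once per \emph{reallocated} task, because a task $T[i]$ with $i\le m$ that gets displaced can reach at most $W[i]\endd=\max(c,i(1+\e))$, which is at most $(1+\e)$ times the current ``frontier''. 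Second, your counting step runs in the wrong direction: from ``free length for the inserted task is at most $(1+\e)^{-|R|}\cdot\frac{1}{\e}$'' and ``free length must be at least $1$'' one can only conclude $(1+\e)^{|R|}\le\frac{1}{\e}$, i.e.\ an \emph{upper} bound on $|R|$, not the lower bound $(1+\e)^{|R|}\ge\frac{1}{\e}$ you then assert.

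The argument that actually works is an iterated pigeonhole rather than a per-window potential. Set the frontier $F_0=S'[0]\endd\le c$, where $S'[0]$ is the slot given to the inserted task. Given $F_{j-1}$, let $m=\floor{F_{j-1}}$; provided $m\le n$, the $m+1$ slots $S'[0..m]$ are disjoint and start at $\ge0$, so some slot among them ends at $\ge m+1$, and since $S'[0]\endd<m+1$ it is some $S'[i]$ with $i\in[1..m]$; then $S'[i]\endd\ge m+1>i=S[i]\endd$, so $T[i]$ was reallocated, and its new end is at most $W[i]\endd\le\max(c,m(1+\e))\le c(1+\e)^{j}$. Taking $F_j=S'[i]\endd$ makes the frontiers strictly increase, so the reallocated tasks found at different steps are distinct, and the side condition $m\le n=\floor{\frac{c-1}{\e}}$ holds for $j$ up to $l=\floor{\log_{1+\e}(\frac{1}{\e})}$ precisely because $c(1+\e)^{l-1}\le\frac{c-1}{\e}$. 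Your $\delta$-potential would have to be reworked into this form (growth per forced reallocation, capped by how far the congested region extends) to be salvageable.
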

\begin{proof}
(In this proof we shall omit the derivation of purely algebraic inequalities
involving $\e$,$c$,$n$ as they can be easily verified.) First let
$I=(n,T,W)$ where $n=\floor{\frac{c-1}{\e}}$ and $W[i]=\max(c,i(1+\e))+[-c,0]$
for each $i\in[1..n]$. Then $I$ is an $\e$-slack instance, because
it has a $(1+\e)$-solution $E$ where $E[i]=[i-1,i](1+\e)$ for each
$i\in[1..n]$. This is easy to check as follows. Firstly, $E[1..n]$
are non-overlapping. Secondly, for any $i\in[1..n]$, $E[i]\wi W[i]$
because:
\begin{itemize}
\item $E[i]\start=(i-1)(1+\e)\ge\max(0,i(1+\e)-c)=W[i]\start$.
\item $E[i]\endd=i(1+\e)\le W[i]\endd$.
\end{itemize}
Now let $S[i]=[i-1,i]$ for each $i\in[1..n]$. Then $S$ is a solution
for $I$, since $S[1..n]$ are non-overlapping, and for any $i\in[1..n]$,
$S[i]\wi W[i]$ because:
\begin{itemize}
\item $S[i]\start=i-1\ge\max(0,i(1+\e)-c)=W[i]\start$.
\item $S[i]\endd=i\le i(1+\e)\le W[i]\endd$.
\end{itemize}
After an insertion into $I$ of a new task $t$ with window $[0,c]$,
the resulting insert state is feasible, since $t$ can be allocated
to $[0,1]\wi[0,c]$, and for each $i\in[1..n]$, $T[i]$ can be allocated
to $[i,i+1]$ because:
\begin{itemize}
\item $i>S[i]\start\ge W[i]\start$.
\item $i+1\le\max(c,i(1+\e))=W[i]\endd$.
\end{itemize}
Now consider any allocator $A$ that solves such an insert state.
Let $S'[0]$ be the slot that $A$ will allocate $t$ to, and $S'[1..n]$
be the slots that $A$ will allocate $T[1..n]$ to respectively. Let
$l=\floor{\log_{1+\e}(\frac{1}{\e})}$. Set $k[0]=0$. We shall construct
$k[1..l]$ iteratively such that the following invariances hold after
each step $j$ from $0$ to $l$:
\begin{enumerate}
\item $A$ will reallocate $T[k[1..j]]$.
\item $S'[k[0..j]]\endd$ is strictly increasing.
\item $S'[k[j]]\endd\le c(1+\e)^{j}$.
\end{enumerate}
After step $0$, Invariances~1,2 are trivially satisfied, and Invariance~3
is satisfied because $S'[k[0]]\endd=S'[0]\endd\le c\le c(1+\e)^{0}$.
To construct $k[1..l]$, at step $j$ from $1$ to $l$ construct
$k[j]$ as follows. Set $m=\floor{S'[k[j-1]]\endd}$. Then, by Invariance~3,
$S'[k[j-1]]\endd\le c(1+\e)^{j-1}\le c(1+\e)^{l-1}\le c(1+\e)^{\log_{1+\e}(\frac{1}{\e})-1}=\frac{c}{\e(1+\e)}\le\frac{c-1}{\e}$
and hence $m\le\floor{\frac{c-1}{\e}}=n$. Thus $\class{S'[i]}{i\in[0..m]}$
are slots that will be allocated by $A$, and hence $\max_{i\in[1..m]}S'[i]\endd\ge m+1$
because:
\begin{itemize}
\item $\max_{i\in[0..m]}S'[i]\endd\ge m+1$, since $S'[0..m]$ do not overlap
and start no earlier than $0$.
\item $S'[0]\endd\le S'[k[j-1]]\endd<m+1$, since $S'[0..j-1]$ is strictly
increasing by Invariance~2.
\end{itemize}
Now set $k[j]\in[1..m]$ such that $S'[k[j]]\endd\ge m+1$. We then
verify that Invariances~1,2,3 are preserved. Firstly, $A$ will reallocate
$T[k[j]]$, because $S'[k[j]]\endd>m\ge k[j]=S[k[j]]\endd$. Secondly,
$S'[k[j]]\endd\ge m+1=\floor{S'[k[j-1]]\endd}+1>S'[k[j-1]]\endd$.
Thirdly, $S'[k[j]]\endd\le W[k[j]]\endd=\max(c,k[j](1+\e))\le c(1+\e)^{j}$,
because $k[j]\le m\le S'[k[j-1]]\endd\le c(1+\e)^{j-1}$ by Invariance~3.

Therefore $A$ will reallocate $T[k[1..l]]$ by Invariance~1, which
are distinct because $S'[k[1..l]]$ are distinct by Invariance~2,
and hence $A$ will reallocate at least $l=\floor{\log_{1+\e}(\frac{1}{\e})}$
tasks.\end{proof}
\begin{rem*}
\nameref{F.LowerBound} (\ref{F.LowerBound}) is tight for allocators
that are required to work on any insert state, as we shall show in
the subsequent section, but if an allocator is used on the system
from the beginning, we do not know if it is possible to do better
by avoiding such `bad' insert states.
\end{rem*}

\subsection{Optimal single-processor fixed-window allocator}

We will now describe a single-processor allocator FA that can solve
any $\e$-slack feasible insert state using at most $\max\left(2\log_{1+\frac{1}{2}\e}\left(\frac{14}{\e^{2}}\right)+\frac{34}{\e}+6,14\right)$
reallocations, which is $O\left(\frac{1}{\e}\log(\frac{1}{\e})\right)$
as $\e\to0$ and hence FA is asymptotically optimal for general insert
states. If FA is allowed to maintain an internal state, it would take
only $O\left(\frac{1}{\e}\log(\frac{1}{\e})\log(n)+\log(n)^{2}\right)$
time on each feasible insertion, and will return failure in $O(\log(n))$
time on an infeasible insertion.

We first give an outline of the main ideas behind FA's insertion procedure:
\begin{enumerate}
\item If the surrounding region (around the window of the task to be allocated)
has enough empty space, we \textbf{pack} the slots in that region
to create a gap into which we can squeeze one more slot.
\item If the surrounding region has not enough empty space, we need to first
`get' to some region with sufficient empty space using two mechanisms:

\begin{enumerate}
\item \textbf{Jumping}: A jump allocates a task to a slot that had already
been allocated to another task whose window is as far as possible
in some direction, displacing the other task, which then has to be
reallocated, potentially in the next jump.
\item \textbf{Pushing}: When jumps are not possible, we have to use pushes
first, where pushing a slot in some direction is simply to shift it
just barely enough to make space for the inserted task or the previous
pushed slot. The pushed slot may overlap the slot of another task,
which may then be reallocated in a subsequent push or jump.
\end{enumerate}
\item In both the pushing and packing phases, in order to guarantee that
they do not reallocate tasks outside their windows, it is very important
to have the slots involved be in the same order as their windows,
which we achieve by swaps.
\item We do not actually do the jumping, but just simulate it to see where
we can `get' to. Only after we have finished the final packing phase
do we perform actual jumps from the inserted task to the gap created.
\item If the inserted task is allocated correctly, at most $O(\frac{1}{\e})$
slots need to be pushed before jumping can be carried out. So in some
cases it is necessary to make a second attempt to find such a slot
for the inserted task if the first attempt takes too many reallocations.
\item We allow the region for the packing phase to contain up to $\frac{2}{\e}$
slots each, for two reasons:

\begin{enumerate}
\item It means that we will pack at most $O(\frac{1}{\e})$ slots.
\item When packing is impossible, the span of all the windows reached so
far by the simulated jumping will be less than about $(1+\frac{1}{2}\e)$
times the number of slots within it. Thus the existence of a $(1+\e)$-solution
makes the span grow by a factor of about $(1+\frac{1}{2}\e)$ on each
jump, which is the fastest possible in the worst case to within a
constant factor.
\end{enumerate}
\end{enumerate}
These ideas may sound simple, but it is extremely tricky to actually
make them work. So now we shall give a step-by-step high-level description
of FA's insertion procedure.

FA keeps the windows $W$ in order in an IDSetRQ, so that it can query
for any range of consecutive windows their maximum $(\text{start point}-\text{rank within the range})$
and their minimum $(\text{end point}-\text{rank within the range})$.
For any inserted task with window $w$, FA can easily determine the
rank $r$ of $w$ if it is inserted into $W$, and can then by the
following obtain in $O(\log(n))$ time the interval $[istart,iend]$
such that for any unit interval $s$, there is an ordered solution
that allocates the new task to $s$ if and only if $s$ is contained
within $[istart,iend]$:
\begin{itemize}
\item $istart=(\max(\text{start point}-\text{rank})\text{ over windows of rank}\le r)+r$
\item $iend=(\min(\text{end point}-\text{rank})\text{ over windows of rank}\ge r)+r$
\end{itemize}
This means that FA can easily check whether the insertion is feasible,
because by the Ordering theorem any instance is feasible if and only
if it has an ordered solution, which is equivalent to $iend-istart\ge1$.

FA also keeps $(\text{slot},\text{window})$ pairs in an IDSetRQ in
order of the slots, so that it can query for any range of consecutive
slots their tasks' earliest window and latest window. This is a key
ingredient for an efficient implementation of the jumping part of
the algorithm, where each \textbf{jump} allocates a task to a slot
that had already been allocated to another task whose window is as
far as possible in some direction, displacing the other task, which
then has to be reallocated. A sequence of jumps will cause a cascade
of reallocations that propagate as fast as possible, so that space
usage can be efficiently reorganized.

If FA knew what $\e$ was, it could just run the appropriate subroutine:
\begin{itemize}
\item $c\ge\frac{7}{\e}+4$: \textbf{LargeWindow}
\item $c<\frac{7}{\e}+4$: \textbf{SmallWindow}
\end{itemize}
LargeWindow is named thus because the window length is large enough
to ensure that jumping can begin immediately. It simulates jumps from
the inserted task's window to the furthest windows in both directions
simultaneously on each jump. It stops when it finds $\frac{2}{\e}$
or fewer consecutive allocated slots in an interval within the current
span with a total empty space of at least $1$, where the current
\textbf{span} is defined as the span of all windows reached by the
jumps. Then it sorts those slots and packs them greedily to create
a unit gap in the middle. Finally it performs actual jumps from the
inserted window to the unit gap, which solves the instance.

To find such a set of slots, FA divides the intervening gaps (touching
slots are considered to have a gap of length $0$ in-between) into
blocks of $(\frac{2}{\e}+1)$ consecutive gaps each, except the last
block, and uses the IDSetRQ containing the $(\text{slot},\text{window})$
pairs to determine in $O(\log(n))$ time if the blocks have average
total space at least $1$. If so, FA uses a binary search to find
a block with at least average total space in $O(\log(n)^{2})$ time
using the same IDSetRQ. Such a block will have total space at least
$1$.

SmallWindow on the other hand may need to use a number of pushes before
jumping is even possible, where a push reallocates a task by shifting
its slot just enough to fix an overlap. SmallWindow may need to make
two attempts. Letting $X$ be the partial solution sorted, it separately
tries two positions for the inserted slot $s$:
\begin{itemize}
\item Within $[istart,iend]$ and nearest to $X[r-1]+1$
\item Within $[istart,iend]$ and nearest to $X[r+1]-1$
\end{itemize}
One of them is guaranteed to succeed due to the restrictions that
any $(1+\e)$-solution for the previous instance place on the subsequent
steps, but it is not clear how it can be determined efficiently without
even knowing $\e$. Thus SmallWindow simply tries both. In each case,
SmallWindow sorts the allocated slots that are within the inserted
window, and now $s$ `separates' the allocated slots, namely that
there is no slot that is after $s$ but with earlier window, or before
$s$ but with later window, because $X[r-1]\le s\le X[r+1]$. SmallWindow
then pushes $m$ neighbouring slots aside on each side as necessary,
for each of them after swapping it into the window with the same rank,
so that each of the pushed slots also `separate' the allocated slots.

On each side separately, if the last pushed slot still overlaps the
next one, the next one is deallocated and jump simulation is begun
in the pushing direction with its window as the initial window. For
the right side, the current span is defined to start at the end of
the last pushed slot, and for the left side defined to end at the
start of the last pushed slot. Everything else follows LargeWindow
exactly.

Both LargeWindow and SmallWindow make only $O\left(\frac{1}{\e}\log(\frac{1}{\e})\right)$
jumps due to the following reasons. Firstly, on each jump the current
span grows by a factor of at least about $1+\frac{\e}{2}$ when it
is sufficiently large, because the average space per block of $(\frac{2}{\e}+1)$
gaps within it must be less than $1$, making the slots within the
current span cramped. Secondly, the existence of a $(1+\e)$-solution
for the previous instance forces the current span to grow to roughly
proportional to that number of slots. For LargeWindow, the starting
span is already large enough. For SmallWindow, it is enough that the
starting span just be at least $2+\e$, which is ensured by the pushing
phase. In either procedure, the current span cannot increase by more
than $c-1$ on each jump, and hence the number of jumps ends up being
$O\left(\log_{1+\frac{1}{2}\e}(\frac{1}{\e})\right)$ for LargeWindow
and $O\left(\log_{1+\frac{1}{2}\e}(\frac{c}{\e})\right)$ for SmallWindow.

The problem is that it is probably impossible to determine $\e$ exactly
in $O\left(\frac{1}{\e}\log(\frac{1}{\e})\log(n)\right)$ time, so
FA uses the following standard trick to avoid having to know $\e$
at all. On a feasible insertion, FA starts by setting $e=2$, and
assumes that $\e=e$ in order to run the appropriate procedure as
before, but limiting the number of jumps to the maximum it should
be. If it fails, it must be that $\e<e$, so FA halves $e$ and tries
again, repeating until $\frac{2}{e}\ge n$, at which point both LargeWindow
and SmallWindow would definitely succeed. Each failed trial takes
$O\left(\frac{1}{e}\log(\frac{1}{e})\log(n)\right)$ time and hence
all the failed trials take $O\left(\frac{1}{\e}\log(\frac{1}{\e})\log(n)\right)$
time by a simple summation. The one successful trial takes $O\left(\frac{1}{\e}\log(\frac{1}{\e})\log(n)+\log(n)^{2}\right)$
time.
\begin{algo}[FA]
\label{F.FA}~
\begin{block}
\item \textbf{\uline{Dependencies}}

\begin{block}
\item \textbf{\nameref{F.FA.LW} (\ref{F.FA.LW})}
\item \textbf{\nameref{F.FA.SW} (\ref{F.FA.SW})}
\end{block}
\item \textbf{\uline{Variables}}

\begin{block}
\item Ordered instance $I=(n,T,W)$ // current ordered instance ; must be
feasible before and after each operation
\item Allocation $S$ // current allocation for $I$ ; must be a solution
for $I$ before and after each operation
\end{block}
\item \textbf{\uline{Initialization}}

\begin{block}
\item Set $I=(n,T,W)=(0,(),())$ and $S=()$.
\item Initialize an IDSetRQ for $W$, with the range query function $(\text{maximum}(\text{start}-\text{rank}),\text{minimum}(\text{end}-\text{rank}))$.
\item Initialize an IDSetRQ for $(S,W)$ sorted by $S$, with range query
function $(\text{earliest window},\text{latest window})$.
\end{block}
\item \textbf{\uline{External Interface}}

\begin{block}
\item \textbf{Procedure Insert( task $t$ , window $w$ )} // inserts task
$t$ with window $w$ into the system
\item \textbf{Procedure Delete( task $t$ )} // deletes task $t$ from the
system
\end{block}
\item \textbf{\uline{Implementation}}

\begin{block}
\item \textbf{Procedure Insert( task $t$ , window $w$ ):}

\begin{block}
\item // Create the insert state //
\item Backup $I$,$S$.
\item Set $(I=(n,T,W),S,r)$ to be the ordered insert state on insertion
of $(t,w)$ into $(I,S)$.
\item // Find the range of possible insertion points in an ordered solution
for $I$ //
\item Set $istart=\max_{i\in[1..r]}(W[i]\start-i)+r$.
\item Set $iend=\min_{i\in[r..n]}(W[i]\endd-(i-r))$.
\item // Check if the insertion is feasible //
\item If $iend-istart<1$:

\begin{block}
\item Restore $I$,$S$.
\item Return $Failure$.
\end{block}
\item // Perform doubling on $m$ from 1 up //
\item Backup $S$.
\item For $m$ doubling from $1$ up to $2n$:

\begin{block}
\item Restore $S$.
\item // Set $e=\frac{2}{m}$ and assume $\varepsilon=e$ and use the appropriate
procedure based on $c$ and $e$ //
\item Set $e=\frac{2}{m}$.
\item If $c\ge\frac{7}{e}+4$:

\begin{block}
\item If $\f{LargeWindow}((I,S,r),m)=Success$:

\begin{block}
\item Return $Success$.
\end{block}
\end{block}
\item Otherwise:

\begin{block}
\item If $\f{SmallWindow}((I,S,r),m,istart,iend)=Success$:

\begin{block}
\item Return $Success$.
\end{block}
\end{block}
\end{block}
\item // This will never be reached //
\end{block}
\item \textbf{Procedure Delete( task $t$ ):}

\begin{block}
\item If $t\in T$:

\begin{block}
\item Delete $t$ from $(I,S)$.
\item Return $Success$.
\end{block}
\item Otherwise:

\begin{block}
\item Return $Failure$.
\end{block}
\end{block}
\end{block}
\end{block}
\end{algo}
\begin{thm}[FA's properties]
\label{F.FA.prop} On an insertion of a new task, if the current
instance is $\e$-slack, FA has the following properties:
\begin{itemize}
\item If the insertion is feasible, it returns $Success$ after updating
$S$ to be a solution for the new instance by making at most $\max\left(2\log_{1+\frac{1}{2}\e}\left(\frac{14}{\e^{2}}\right)+\frac{34}{\e}+6,14\right)$
reallocations and taking $O\left(\frac{1}{\e}\log(\frac{1}{\e})\log(n)+\log(n)^{2}\right)$
time.
\item If the insertion is not feasible, it returns $Failure$ in $O(\log(n))$
time.
\end{itemize}
\end{thm}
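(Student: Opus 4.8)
The plan is to verify the theorem by a structural decomposition that mirrors the algorithm exactly. Since \nameref{F.FA} is built from \nameref{F.FA.LW} and \nameref{F.FA.SW}, the natural approach is to assume those two subroutines have been proved to satisfy their own correctness-and-cost specifications (call them \textit{LargeWindow's properties} and \textit{SmallWindow's properties}), and then argue that the \texttt{Insert} wrapper composes them correctly. So the first step is to isolate what those subroutine lemmas must say: on a feasible $\e$-slack insert state, if invoked with a parameter $m$ satisfying $e = \frac 2m \le \e$ (i.e.\ the assumed slack is no larger than the true slack) \emph{and} with a jump-count cap consistent with that $e$, the subroutine returns $Success$ after updating $S$ to a solution, making at most the stated number of reallocations as a function of $e$; and whenever it returns $Failure$ it has left $S$ untouched (or the wrapper restores it) and made no net reallocations that matter. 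I would also want: if $\frac 2m \ge n$ then the subroutine is \emph{guaranteed} to succeed regardless of $\e$ (this is the termination safety net and must be stated in the subroutine lemmas).

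Next I would handle \textbf{feasibility detection and the infeasible case}. Here I invoke the Ordering theorem (\ref{F.Ord}) to get that $I$ is feasible iff it has an ordered solution, and the IDSetRQ range-query identities for $istart,iend$ given in the running text to conclude that an ordered solution placing the new task at unit slot $s$ exists iff $s \wi [istart,iend]$, hence $I$ is feasible iff $iend - istart \ge 1$. So the test ``$iend-istart<1 \imp$ restore and return $Failure$'' is correct, and the restore plus the $O(\log n)$ cost of the IDSetRQ range queries give the infeasible branch in $O(\log(n))$ time. This part is essentially bookkeeping once the data-structure query identities are taken as given.

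Then comes the \textbf{doubling loop}, which is the crux of the wrapper argument. Let $\e$ be the true slack. The loop runs $m = 1,2,4,\dots$ with $e = \frac 2m$; I would show: (i) \emph{every} iteration with $\frac 2m \ge n$ would succeed, so the loop always terminates with $Success$ and the ``never reached'' comment is justified; (ii) the \emph{first} value of $m$ with $m \ge \frac 2\e$ (equivalently $e \le \e$) — call it $m^\ast$, and note $m^\ast < \frac 4\e$ by the doubling — already succeeds, because at that point the assumed slack $e$ is a valid lower bound on the true slack, the branch taken ($c \ge \frac 7e + 4$ vs.\ $c < \frac 7e + 4$) is the one the subroutine lemma covers, and the jump cap passed as $m$ is large enough; hence the successful trial happens at some $m \le m^\ast < \frac 4\e$. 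The reallocation bound then follows: the only reallocations that persist are those of the one successful trial (each failed trial is preceded by \texttt{Restore $S$}, so it contributes nothing to the final $S$), and by the subroutine lemma that count is at most the stated function of $e = \frac 2m \ge \frac \e 2$; substituting the worst case $e \in [\frac\e2, \e]$ into LargeWindow's and SmallWindow's bounds and taking the max yields $\max\!\left(2\log_{1+\frac12\e}\!\left(\frac{14}{\e^2}\right)+\frac{34}\e+6,\,14\right)$, which is $O\!\left(\frac 1\e \log \frac 1\e\right)$. For the time bound I would sum geometrically: failed trials at parameter $e' = \frac 2{m'}$ with $m' \le m^\ast$ cost $O\!\left(\frac 1{e'}\log(\frac 1{e'})\log n\right)$ each, and $\sum_{m' = 1,2,\dots,\le m^\ast} \frac 1{e'}\log\frac 1{e'} = \sum \frac{m'}{2}\log\frac{m'}{2} = O\!\left(m^\ast \log m^\ast\right) = O\!\left(\frac 1\e \log \frac 1\e\right)$ by the dominant last term, plus the one successful trial at $O\!\left(\frac 1\e\log(\frac 1\e)\log n + \log(n)^2\right)$, plus the $O(\log n)$ setup, giving the claimed total.

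Finally, the \textbf{main obstacle}: essentially all the real content has been pushed into the two subroutine lemmas, so at the level of this theorem the hard part is making precise — and checking — the interface between wrapper and subroutines, in particular that a subroutine invoked with an \emph{over}-pessimistic $e$ (too small, i.e.\ $e < \e$) still behaves correctly and within the cost bound for that $e$ (monotonicity of the bounds in $e$, and the fact that a smaller assumed slack only makes the subroutine ``try harder''), that a subroutine invoked with an \emph{over}-optimistic $e$ ($e > \e$) fails \emph{cleanly} without corrupting $S$ beyond what \texttt{Restore} fixes, and that the branch condition $c \gtrless \frac 7e + 4$ used by the wrapper matches the regime each subroutine lemma actually certifies. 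I would flag that the only genuinely delicate inequality is confirming $m^\ast < \frac 4\e$ together with the substitution $e \mapsto [\frac\e2,\e]$ reproduces the stated constants, and otherwise the proof is a composition argument conditional on \ref{F.FA.LW} and \ref{F.FA.SW}.
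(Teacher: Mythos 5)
Your plan matches the paper's proof essentially exactly: the infeasibility check via the $[istart,iend]$ range (the paper packages this as the Insertion Range lemma, \ref{F.FA.IR}), the doubling loop with the safety net, restore-on-failure so only the successful trial's reallocations persist, the substitution $e>\frac{1}{2}\e$ (or $e=2$) into the subroutine bounds to get the stated maximum, and the geometric summation of the failed trials' running times. One typo to fix: the unconditional-success safety net is $m\ge n$ (equivalently $\frac{2}{e}\ge n$), not $\frac{2}{m}\ge n$ as you wrote; otherwise the argument is the paper's.
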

\begin{proof}
By \nameref{F.FA.IR} (\ref{F.FA.IR}), on such an insertion $[istart,iend]$
will be set such that any slot is within that range iff the new task
can be allocated to it in some ordered solution. This implies that
$iend-istart\ge1$ iff the insertion is feasible. Thus FA will restore
the previous state and return $Failure$ iff the insertion is not
feasible. Also, this feasibility check takes $O(\log(n))$ time as
it only needs one range associative query to obtain $istart,iend$.
If the insertion is feasible, FA will enter the for-loop, which by
iteratively doubling $m$ and halving $e$ will ensure that a reasonably
good solution will be found and yet only the last iteration takes
a significant portion of the total time. The reason is that both \nameref{F.FA.LW}
(\ref{F.FA.LW}) and \nameref{F.FA.SW} (\ref{F.FA.SW}) have the
following properties (\ref{F.FA.LW.prop},\ref{F.FA.SW.prop}) when
called here under the respective conditions $c\ge\frac{7}{e}+4$ and
$c<\frac{7}{e}+4$:
\begin{itemize}
\item If $m\ge n$ or $e\le\e$, it will solve the insert state using at
most $\left(2\floor{\max\left(\log_{1+\frac{1}{2}\e}\left(\frac{14}{\e^{2}}\right),0\right)}+\frac{17}{e}+6\right)$
reallocations and return $Success$ in $O\left(\frac{1}{e}\log(\frac{1}{e})\log(n)+\log(n)^{2}\right)$
time.
\item If it returns $Failure$, it would have taken $O\left(\frac{1}{e}\log(\frac{1}{e})\log(n)\right)$
time.
\end{itemize}
Since the while-loop will run at least once with $m\in[n,2n)$ at
the start of the loop, FA will always return $Success$ in some run
of the loop and never exit the loop otherwise. Since $\frac{1}{e}\log(\frac{1}{e})$
more than doubles when $e$ is halved, and $e=2$ or $e>\frac{1}{2}\e$
in the successful loop, we obtain $2\floor{\max\left(\log_{1+\frac{1}{2}\e}\left(\frac{14}{\e^{2}}\right),0\right)}+\floor{\frac{17}{e}}+6\le\max\left(2\log_{1+\frac{1}{2}\e}\left(\frac{14}{\e^{2}}\right)+\frac{34}{\e}+6,14\right)$
and the total time taken is $O\left(\frac{1}{\e}\log(\frac{1}{\e})\log(n)+\log(n)^{2}\right)$
as $\e\to0$.\end{proof}
\begin{lem}[Insertion Range]
\label{F.FA.IR} Take any ordered insert state $(I=(n,T,W),S,r)$,
and define $istart,iend$ as follows:
\begin{itemize}
\item $istart=\max_{i\in[1..r]}(W[i]\start-i+r)$.
\item $iend=\min_{i\in[r..n]}(W[i]\endd-i+r)$.
\end{itemize}
Take any slot $s$. Then $s\wi[istart,iend]$ iff $s=Y[r]$ for some
ordered solution $Y$ for $I$.\end{lem}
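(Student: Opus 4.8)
The plan is to prove both directions of the iff by relating the quantities $istart$ and $iend$ to the constraints that an ordered solution must satisfy. The key observation is that in an ordered solution $Y$ for $I$ with the inserted task at rank $r$, the slots $Y[1],\dots,Y[r-1]$ lie to the left of $Y[r]$ and $Y[r+1],\dots,Y[n]$ lie to the right, each $Y[i]$ is a unit slot inside $W[i]$, and consecutive slots do not overlap, so $Y[i]\endd\le Y[i+1]\start$ for all $i$. From these I will extract, for each $i\le r$, the lower bound $Y[r]\start\ge Y[i]\start + (r-i) \ge W[i]\start - (i - r)$ wait — let me state it cleanly: since the slots are unit and non-overlapping and ordered, $Y[r]\start \ge Y[i]\endd + (r-1-i) = Y[i]\start + (r-i) \ge W[i]\start + (r - i)$, giving $Y[r]\start \ge \max_{i\in[1..r]}(W[i]\start - i + r) = istart$. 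Symmetrically, for each $i\ge r$, $Y[r]\endd \le W[i]\endd - (i - r)$, so $Y[r]\endd \le iend$. Hence any ordered-solution slot for the new task is contained in $[istart,iend]$, which is the ``only if'' direction.

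For the ``if'' direction, I would take an arbitrary slot $s\wi[istart,iend]$ and construct an ordered solution $Y$ with $Y[r]=s$. The natural construction is to run a \nameref{F.Left}-style greedy on the windows $W[1..r-1]$ but forced to end before $s\start$, i.e.\ fill tasks $1,\dots,r-1$ left to right each at the leftmost unit slot inside its window not overlapping the previous; and symmetrically fill $r+1,\dots,n$ right to left each at the rightmost unit slot inside its window not overlapping the next (to the right), using $s$ as the anchor on both sides. I then need to check (a) that this greedy never runs off the right end of some $W[i]$ for $i<r$, and (b) that the left-fill's last slot $Y[r-1]$ ends at or before $s\start$, and symmetrically on the right. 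Claim (b) for the left side reduces to showing $Y[i]\endd \le s\start - (r-1-i)$ for all $i\le r-1$; by the same inductive argument as in \nameref{F.Left.prop}, the greedy slot satisfies $Y[i]\start = \max(W[i]\start, Y[i-1]\endd)$, and one shows by induction that $Y[i]\endd \le \max_{j\le i}(W[j]\start - j) + i + 1$, hence $Y[r-1]\endd \le istart - r + 1 \le s\start$ since $s\start \ge istart - \text{(the gap)}$... more precisely $s\start = s\endd - 1 \ge istart - 1 + \text{something}$; I will instead argue directly $Y[r-1]\endd \le \max_{j\le r-1}(W[j]\start - j + (r-1)) + 1 \le istart$ wait that gives $Y[r-1]\endd \le istart$, not $\le s\start = s\endd-1$. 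The cleaner route: build the left-fill anchored so slot $r-1$ must end by $s\start$, i.e.\ run the greedy and simultaneously verify $Y[i]\endd \le s\start$ for all $i\le r-1$ using $s\start \ge istart - 0$ hmm — actually since $s\wi[istart,iend]$ means $s\start\ge istart$ is false in general; rather $s$ has length $1$ and $s\wi[istart,iend]$ with $iend-istart\ge1$, so $istart \le s\start$ and $s\endd\le iend$. Good — so $s\start\ge istart \ge \max_{j\le r-1}(W[j]\start - j + r) \ge \max_{j\le r-1}(W[j]\start - j) + r$, which gives $Y[i]\endd \le \max_{j\le i}(W[j]\start-j)+i+1 \le \max_{j\le r-1}(W[j]\start - j) + r \le s\start$ for $i\le r-1$. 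That settles (b); and (a) follows because $Y[i]\endd \le W[i]\endd$ is exactly the condition $\max_{j\le i}(W[j]\start-j)+i+1 \le W[i]\endd$, which holds since $s\endd\le iend \le W[i]\endd - i + r$ for $i\ge r$ and for $i<r$ it follows from feasibility of the ordered insert state together with the bound just derived — I will need to invoke that $(I,S,r)$ being an insert state guarantees the relevant windows are wide enough. The right side is handled symmetrically, and finally $Y[r-1]\endd \le s\start$ and $s\endd \le Y[r+1]\start$ ensure $Y$ is a genuine ordered solution with $Y[r]=s$.

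The main obstacle I anticipate is the ``if'' direction, specifically making the two one-sided greedy fills provably fit against the anchor $s$ without overrunning any window: the bookkeeping with the shifted maxima $\max_{j}(W[j]\start - j + r)$ and $\min_j(W[j]\endd - j + r)$ has to be lined up exactly so that the inductive invariant closes, and one must use both the definition of $istart,iend$ and the standing assumption that $(I,S,r)$ is an insert state (so that, e.g., $W[i]$ for $i\ne r$ is large enough to admit the relevant slot). Once the invariant is phrased correctly — essentially ``the left greedy truncated to end by $s\start$ stays inside every window iff $s\start\ge istart$'' — the rest is the routine induction already carried out in \nameref{F.Left.prop}, applied once on each side of $r$.
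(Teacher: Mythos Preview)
Your ``only if'' direction is correct and is exactly the paper's argument.

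For the ``if'' direction your construction works but is genuinely different from the paper's. You build $Y$ by running the Leftmost greedy on $W[1..r-1]$ and the symmetric Rightmost greedy on $W[r+1..n]$, then show the two halves dock onto $s$ via the identity $L[r-1]\endd=\max_{j\le r-1}(W[j]\start-j)+r\le istart\le s\start$ (and symmetrically on the right). The paper instead exploits the partial solution already at hand: letting $X$ be $S$ sorted (an ordered partial solution for $(I,r)$ by \nameref{F.Ord}), it sets $Y[i]=\min(X[i],\,s-r+i)$ for $i<r$ and $Y[i]=\max(X[i],\,s-r+i)$ for $i>r$, and verifies directly that this is ordered, non-overlapping, and in-window. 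The paper's route is shorter exactly at your sticking point (a): containment $Y[i]\wi W[i]$ comes for free from $X[i]\wi W[i]$ together with $istart-r+i\ge W[i]\start$, so no separate feasibility argument is needed. Your route avoids touching $X$ in the construction itself, but you still need $X$ implicitly to close (a); the clean way to do so is not the ad hoc inequality you were reaching for but simply to observe that $X[1..r-1]$ is a solution for the sub-instance on $W[1..r-1]$, whence \nameref{F.Left.prop} applied to that sub-instance already gives $L[i]\wi W[i]$ for every $i\le r-1$. With that one-line fix your argument goes through.
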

\begin{proof}
Firstly if $s\wi[istart,iend]$, then let $X$ be $S$ sorted, which
by \nameref{F.Ord} (\ref{F.Ord}) is a partial solution for $(I,r)$,
and let $Y$ be as follows:
\begin{itemize}
\item $Y[r]=s$.
\item $Y[i]=\min(X[i],s-r+i)$ for each $i\in[1..r-1]$.
\item $Y[i]=\max(X[i],s-r+i)$ for each $i\in[r+1..n]$.
\end{itemize}
Then $Y$ is an ordered solution for $I$, which we can check as follows:
\begin{itemize}
\item $Y$ allocates each task to a slot within its window.

\begin{itemize}
\item $Y[r]\wi[istart,iend]\wi W[r]$.
\item $Y[i]\wi[\min(X[i]\start,istart-r+i),X[i]\endd]\wi W[i]$ for each
$i\in[1..r-1]$.
\item $Y[i]\wi[X[i]\start,\max(X[i]\endd,iend-r+i)]\wi W[i]$ for each $i\in[r+1..n]$.
\end{itemize}
\item $Y$ allocates different tasks to non-overlapping slots.

\begin{itemize}
\item $Y[i]\endd=\min(X[i]\endd,s\endd-r+i)\le\min(X[i+1]\start,s\start-r+(i+1))=Y[i]\start$
for each $i\in[1..r-2]$.
\item $Y[i]\start=\max(X[i]\start,s\start-r+i)\ge\max(X[i-1]\endd,s\endd-r+(i-1))=Y[i]\endd$
for each $i\in[r+2..n]$.
\item $Y[r-1]\endd\le s\endd-1=Y[r]\start$.
\item $Y[r+1]\start\ge s\start+1=Y[r]\endd$.
\end{itemize}
\end{itemize}
Conversely if $s=Z[r]$ for some ordered solution $Z$ for $I$, then
$s\wi[istart,iend]$, which we can check as follows:
\begin{itemize}
\item $s\start=Z[r]\start=\max_{i\in[1..r]}(Z[i]\start-r+i)\ge\max_{i\in[1..r]}(W[i]\start-r+i)=istart$.
\item $s\endd=Z[r]\endd=\min_{i\in[r..n]}(Z[i]\endd-r+i)\le\min_{i\in[r..n]}(W[i]\endd-r+i)=iend$.
\end{itemize}
Therefore the desired equivalence follows.\end{proof}
\begin{rem*}
This technique to find the possible range for $Y[r]$ in an ordered
solution $Y$ can be extended to the $p$-processor case to take $O(p\log(n))$
time per operation with judicious use of data structures, since for
any ordered solution the cyclic one has exactly the same slots, and
the range query in the IDSetRQ can be suitably modified to compute
the values for each set of windows with the same rank modulo $p$.\end{rem*}
\begin{subr}[LargeWindow]
\label{F.FA.LW}~
\begin{block}
\item \textbf{\uline{Dependencies}}

\begin{block}
\item \textbf{\nameref{F.FA.Jump} (\ref{F.FA.Jump})}
\end{block}
\item \textbf{\uline{Implementation}}

\begin{block}
\item \textbf{Subroutine LargeWindow( insert state $(I=(n,T,W),S,r)$ ,
nat $m=\frac{2}{e}$ ):}

\begin{block}
\item // Jump in both directions //
\item Set $jumps=\max\left(\log_{1+\frac{1}{2}e}\left(\frac{28}{e}\right),0\right)+1$.
\item Return $\f{Jump}((I,S,r),m,W[r],\str{both},jumps)$.
\end{block}
\end{block}
\end{block}
\end{subr}
\begin{lem}[LargeWindow's properties]
\label{F.FA.LW.prop} Take any $\e$-slack ordered insert state $(I,S,r)$
and positive natural $m=\frac{2}{e}$. Then $\f{LargeWindow}((I,S,r),m)$
does the following:
\begin{itemize}
\item If $c\ge\frac{7}{e}+4$ and ( $m\ge n$ or $e\le\e$ ), it solves
$(I,S,r)$ using at most $\max\left(\log_{1+\frac{1}{2}\e}\left(\frac{28}{\e}\right),0\right)+\frac{2}{e}+1$
reallocations and returns $Success$ in $O\left(\frac{1}{e}\log(\frac{1}{e})\log(n)+\log(n)^{2}\right)$
time.
\item If it returns $Failure$, it would have taken $O\left(\frac{1}{e}\log(\frac{1}{e})\log(n)\right)$
time.
\end{itemize}
\end{lem}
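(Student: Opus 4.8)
The plan is to reduce the lemma entirely to the properties of the \nameref{F.FA.Jump} subroutine (\ref{F.FA.Jump}), since \nameref{F.FA.LW} does nothing but set $jumps=\max\left(\log_{1+\frac{1}{2}e}\left(\frac{28}{e}\right),0\right)+1$ and return $\f{Jump}((I,S,r),m,W[r],\str{both},jumps)$. First I would recall what Jump guarantees: starting from the initial window $W[r]$ it simulates at most $jumps$ double-ended jumps, maintaining the current span as the span of all windows reached so far; if within that budget the current span ever contains an interval holding at most $m$ allocated slots whose total empty space is at least $1$, Jump sorts and greedily packs those slots to open a unit gap and then carries out the corresponding actual jump cascade, returning $Success$ after at most $(\text{jumps performed})+m$ reallocations and $O((\text{jumps performed})\log n+\log^{2}n)$ time; otherwise it spends its budget and returns $Failure$ after $O(jumps\cdot\log n)$ time. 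The $Failure$ clause and every running-time assertion of the lemma then follow at once, substituting $jumps=O\!\left(\log_{1+\frac12 e}\tfrac1e\right)=O\!\left(\tfrac1e\log\tfrac1e\right)$ (using $\log(1+\tfrac12 e)=\Q(e)$, and $jumps=O(1)$ once $e$ is bounded away from $0$).

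The substance is the success clause: under $c\ge\frac{7}{e}+4$ and $(m\ge n\text{ or }e\le\e)$ I must show Jump finds a packable interval within its $jumps$ budget, with a jump count small enough that $(\text{jumps performed})+m$ meets the stated $\max\!\left(\log_{1+\frac12\e}\tfrac{28}{\e},0\right)+\tfrac2e+1$ bound. The engine is a multiplicative-growth estimate on the current span. Whenever Jump has not yet found a packable interval, the gaps inside the current span, divided into blocks of $\tfrac2e+1$ consecutive gaps, have average total gap per block below $1$, so a current span of length $\ell$ containing $N$ slots satisfies $\ell<N\cdot\frac{2+2e}{2+e}$ (up to an additive boundary term); meanwhile $\e$-slackness supplies a $(1+\e)$-solution in which those $N$ tasks occupy total length $\ge(1+\e)N$ inside the interval spanned by their windows, and that interval is exactly what the next double-ended jump reaches. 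Hence the new span is at least $(1+\e)N>\ell\cdot\frac{(1+\e)(2+e)}{2(1+e)}\ge\ell\left(1+\tfrac\e2\right)$, the last step using $e\le\e$. The hypothesis $c\ge\frac{7}{e}+4$ forces the initial span $\f{span}(W[r])=c$ to be already large enough that this estimate applies from the very first jump and so swallows the boundary terms throughout --- this is precisely the ``jumping can begin immediately'' regime that spares LargeWindow the pushing phase that SmallWindow needs. Since a window reached from a slot inside an interval of length $\ell$ protrudes past it by at most $c-1$, combining this with the cramped bound shows the span cannot reach about $\frac{28c}{\e}$ without a packable interval appearing; so one appears within $\max\!\left(\log_{1+\frac12\e}\tfrac{28}{\e},0\right)+1$ jumps, which is at most $jumps$ because $e\le\e$ both raises the argument, $\tfrac{28}{e}\ge\tfrac{28}{\e}$, and lowers the base, $1+\tfrac12 e\le1+\tfrac12\e$. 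The reallocation total is then at most $(\text{jumps performed})+m\le\max\!\left(\log_{1+\frac12\e}\tfrac{28}{\e},0\right)+1+\tfrac2e$, as claimed. For the alternative hypothesis $m\ge n$: here $jumps$ dwarfs $n$, so within $n$ jumps every window is reached and the current span then contains all $n\le m$ slots, after which feasibility of the insertion --- witnessed by an ordered solution with $n+1$ non-overlapping slots --- forces an enclosing interval of total empty space $\ge1$, so Jump again finds a packable interval and the same count bounds it.

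I expect the main obstacle to be the bookkeeping in the growth step: pinning down the relation between the current span (span of reached windows) and the set of slots lying inside it, controlling the boundary effect where a reached window sticks out past the span by up to $c-1$, and keeping every inequality tight enough to justify the explicit constants --- the $\tfrac2e+1$ and the $\tfrac{28}{\e}$ inside the logarithm --- rather than anonymous $O(\cdot)$'s; this is where the seemingly arbitrary threshold $c\ge\frac{7}{e}+4$ gets calibrated. A secondary point to verify is that running Jump under the possibly-wrong assumption $\e=e$ never endangers correctness: the packability test and the packing-and-jump cascade are structural and ignore the true slack, so on $Success$ the output is a genuine solution no matter what, whereas if $e$ overshoots the true slack the span growth can stall, Jump simply exhausts its budget and returns $Failure$ --- precisely the fall-through behaviour that the doubling loop of \nameref{F.FA} is built to exploit.
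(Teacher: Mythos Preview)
Your overall strategy is right --- everything reduces to Jump --- but you are doing far more work than needed, and in one place the extra work introduces a gap.

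The paper's proof is two lines: invoke \nameref{F.FA.Jump.prop} Properties~1 and~2 with $d=7$, and verify the single hypothesis $c\ge\frac{6}{\e}+3+\frac{c-1}{7}$ by rewriting $c\ge\frac{7}{e}+4$ as $\frac{6c+1}{7}>\frac{6}{7}(\frac{7}{e}+4)>\frac{6}{e}+3\ge\frac{6}{\e}+3$. All of the span-growth analysis you sketch for the $e\le\e$ case is already packaged inside the proof of Jump's Property~2, so there is no need to re-derive it here; you only need to check that the hypotheses of Property~2 are met, which comes down to that one inequality and the observation that $jumps=\max\bigl(\log_{1+\frac12 e}(\frac{28}{e}),0\bigr)+1\ge\max\bigl(\log_{1+\frac12\e}(\frac{4\cdot 7}{\e}),0\bigr)+1$ when $e\le\e$.

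Your handling of the $m\ge n$ branch is where the gap lies. You argue that ``within $n$ jumps every window is reached'' and conclude that ``the same count bounds it''. But if Jump actually used $n$ jumps, the reallocation count would be $n+m$, and nothing in the hypotheses forces $n\le\max\bigl(\log_{1+\frac12\e}(\frac{28}{\e}),0\bigr)+1$. The correct observation --- encapsulated in Jump's Property~1 --- is that when $m\ge n$ we also have $\f{span}(U)=c\ge\frac{7}{e}+4=\frac{7m}{2}+4>n+1$, so in the very first iteration $blocks\le\lceil\frac{n+1}{m+1}\rceil=1$ and $\f{space}(V)\ge c-n\ge1$; Jump therefore enters the final phase with \emph{zero} jumps and makes at most $m=\frac{2}{e}$ reallocations, comfortably within the stated bound.
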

\begin{proof}
The lemma follows directly from \nameref{F.FA.Jump.prop} (\ref{F.FA.Jump.prop}~Properties~1,2),
where $d=7$. It only needs to be checked that $c\ge\frac{6}{\e}+3+\frac{c-1}{7}$,
which follows from $\frac{6c+1}{7}>\frac{6}{7}(\frac{7}{e}+4)>\frac{6}{e}+3$.\end{proof}
\begin{subr}[SmallWindow]
\label{F.FA.SW}~
\begin{block}
\item \textbf{\uline{Dependencies}}

\begin{block}
\item \textbf{\nameref{F.FA.Push} (\ref{F.FA.Push})}
\end{block}
\item \textbf{\uline{Implementation}}

\begin{block}
\item \textbf{Subroutine SmallWindow( insert state $(I=(n,T,W),S,r)$ ,
nat $m=\frac{2}{e}$ , real $istart$ , real $iend$ ):}

\begin{block}
\item Define $X$ to be $S$ sorted with $(X[0]\endd,X[n+1]\start)=(-\infty,\infty)$.
\item Set $a=\f{Snap}(X[r+1]-1,[istart,iend])$.
\item Set $b=\f{Snap}(X[r-1]+1,[istart,iend])$.
\item Backup $I,S$.
\item If $\f{Push}((I,S,r),m,a)=Success$:

\begin{block}
\item Return $Success$.
\end{block}
\item Restore $I,S$.
\item If $\f{Push}((I,S,r),m,b)=Success$:

\begin{block}
\item Return $Success$.
\end{block}
\item Restore $I,S$.
\item Return $Failure$.
\end{block}
\end{block}
\end{block}
\end{subr}
\clearpage{}
\begin{lem}[SmallWindow's properties]
\label{F.FA.SW.prop} Take any $\e$-slack ordered insert state $(I,S,r)$
and positive natural $m=\frac{2}{e}$ and reals $istart,iend$. Then
$\f{SmallWindow}((I,S,r),m,istart,iend)$ does the following:
\begin{itemize}
\item If $c<\frac{7}{e}+4$ and ( $m\ge n$ or $e\le\e$ ) and ( $s\wi[istart,iend]$
iff $s=Y[r]$ for some ordered solution $Y$ for $I$ ) for any slot
$s$, it solves $(I,S,r)$ using at most $2\floor{\max\left(\log_{1+\frac{1}{2}\e}\left(\frac{14}{\e^{2}}\right),0\right)}+\frac{17}{e}+6$
reallocations and returns $Success$ in $O\left(\frac{1}{e}\log(\frac{1}{e})\log(n)+\log(n)^{2}\right)$
time.
\item If it returns $Failure$, it would have taken $O\left(\frac{1}{e}\log(\frac{1}{e})\log(n)\right)$
time.
\end{itemize}
\end{lem}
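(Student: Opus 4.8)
The plan is to treat $\f{SmallWindow}$ as a thin wrapper around two invocations of the \nameref{F.FA.Push} subroutine and to reduce the whole statement to its properties. Recall that $\f{SmallWindow}$ forms $X$, the partial solution $S$ sorted, which by \nameref{F.Ord} (\ref{F.Ord}) is a partial solution for $(I,r)$, sets $a=\f{Snap}(X[r+1]-1,[istart,iend])$ and $b=\f{Snap}(X[r-1]+1,[istart,iend])$, calls $\f{Push}((I,S,r),m,a)$, and --- only if that returns $Failure$, after restoring $I,S$ --- calls $\f{Push}((I,S,r),m,b)$, returning $Failure$ exactly when that too fails. The first step is to check that $a,b$ are admissible inputs to $\f{Push}$: from \nameref{F.Snap.prop} (\ref{F.Snap.prop}) together with the elementary facts $X[r-1]\le X[r-1]+1\le X[r+1]$, $X[r-1]\le X[r+1]-1\le X[r+1]$, $W[r]\start\le X[r+1]\start$, $istart\le X[r+1]\start$ and $iend\ge X[r-1]\endd$ (which follow from $X$ being an ordered partial solution for $(I,r)$, $W$ ordered, and the definitions of $istart,iend$), a short case analysis gives $a,b\wi[istart,iend]$ and $X[r-1]\le a,b\le X[r+1]$, so each of $a,b$ \emph{separates} the slots of $X$, which is what $\f{Push}$ requires as a precondition.

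Granting \nameref{F.FA.Push.prop} (\ref{F.FA.Push.prop}) --- under $c<\tfrac7e+4$, if $\f{Push}$ is handed a separating slot and $m\ge n$ or $e\le\e$ then it solves $(I,S,r)$ with at most $2\floor{\max(\log_{1+\frac12\e}(\tfrac{14}{\e^{2}}),0)}+\tfrac{17}{e}+6$ reallocations and returns $Success$ in $O(\tfrac1e\log(\tfrac1e)\log(n)+\log(n)^{2})$ time, and whenever it returns $Failure$ it has first restored $I,S$ and taken $O(\tfrac1e\log(\tfrac1e)\log(n))$ time --- the accounting is immediate. If $\f{SmallWindow}$ returns $Failure$, both $\f{Push}$ calls returned $Failure$, each in $O(\tfrac1e\log(\tfrac1e)\log(n))$ time with the state restored, so the whole call runs in $O(\tfrac1e\log(\tfrac1e)\log(n))$ time: the second bullet. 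If it returns $Success$, the reallocations actually committed are exactly those of the single successful $\f{Push}$ call (a failed first call leaves no net change since $I,S$ is restored), hence at most $2\floor{\max(\log_{1+\frac12\e}(\tfrac{14}{\e^{2}}),0)}+\tfrac{17}{e}+6$, and the time is that of at most one failed call plus one successful call, i.e. $O(\tfrac1e\log(\tfrac1e)\log(n)+\log(n)^{2})$.

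The substantive part --- and the step I expect to be the main obstacle --- is showing that $\f{SmallWindow}$ never reaches its final $Failure$ line, i.e. that at least one of $\f{Push}((I,S,r),m,a)$ and $\f{Push}((I,S,r),m,b)$ returns $Success$. This is the only place $\e$-slackness is used. Fix an ordered $(1+\e)$-partial solution $P$ for $(I,r)$. The claim is that in at least one of the two directions --- placing the inserted slot at $b$ and letting the pushes cascade rightward, or at $a$ and letting them cascade leftward --- the pushing phase terminates within $m=\tfrac2e$ pushes into a state from which a jump can start, after which the jump simulation inside $\f{Push}$ behaves just as in \nameref{F.FA.LW.prop} (\ref{F.FA.LW.prop}): once its current span exceeds a constant it grows by a factor at least $1+\tfrac12\e$ per jump (every block of $\tfrac2e+1$ consecutive gaps inside the span averages less than $1$ unit of empty space, so the span is cramped relative to the slots it contains), while the existence of $P$ bounds the span by roughly the number of available slots, forcing the simulation to terminate within the allotted $O(\log_{1+\frac12\e}(\tfrac{c}{\e}))$ jumps; since $c<\tfrac7e+4$ that is $O(\log_{1+\frac12\e}(\tfrac1{\e^{2}}))$, and, with one jump simulation running on each side, this is the $2\floor{\max(\log_{1+\frac12\e}(\tfrac{14}{\e^{2}}),0)}$ term.

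Pinning down \emph{which} direction works is the crux. I would derive it from $P$ by a telescoping comparison of positions: writing the rightward-pushed slots for ranks $r+1,r+2,\dots$ against $P[r+1],P[r+2],\dots$, the fact that $P$ gives each task a slot of length $1+\e$ makes the accumulated slack over any $\tfrac2e$ consecutive slots exceed $2$, so a contiguous run of at most $\tfrac2e$ originally allocated slots next to rank $r$ with total empty space at least $1$ must appear on whichever side $P$ leaves room for the inserted window --- and $P$ must leave room on some side because $iend-istart\ge1$. The rest --- bounding the total pushes by $2m=\tfrac4e$, the greedy packing of at most $\tfrac2e$ slots on each side, the final actual jumps, giving with small additive constants the $\tfrac{17}{e}+6$ term, and checking that the ``$m\ge n$ or $e\le\e$'' branch makes the jump budget $\log_{1+\frac12 e}(\cdot)$ used inside $\f{Push}$ at least the number of jumps actually performed --- follows the pattern of \nameref{F.FA.LW.prop} and reduces to the properties of \nameref{F.FA.Push}.
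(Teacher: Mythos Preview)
Your reduction to \nameref{F.FA.Push.prop} is the right strategy, and your accounting for time and reallocations in the Success/Failure cases is fine. The gap is that you have misstated the hypotheses of \nameref{F.FA.Push.prop}. Push does not succeed merely because its input slot $s$ is ``separating'' (i.e.\ $X[r-1]\le s\le X[r+1]$ and $s\wi[istart,iend]$). It also requires, for some fixed ordered $(1+\e)$-partial solution $E$ for $(I,r)$, the two $\e$-conditions: if $s$ overlaps $X[r+1]$ then $E[r+1]\endd-s\endd\ge\e$, and symmetrically on the left. You never verify these, and your paragraphs~3--4 do not fill the hole: there you abandon the black-box use of \nameref{F.FA.Push.prop} and instead sketch an internal argument about pushing and jump growth, which is essentially reproving Push's internals, and the sketch (``accumulated slack over $2/e$ slots exceeds $2$'', ``a contiguous run with empty space $\ge 1$ must appear on whichever side $P$ leaves room'') does not match what Push actually does and does not pin down the inequality that decides which of $a,b$ works.

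The paper's proof stays at the black-box level and verifies the missing $\e$-conditions directly. The key arithmetic step you are missing is the one-line observation
\[
b\endd-a\start \;\le\; \max(X[r-1]\endd,istart)-\min(X[r+1]\start,iend)+2 \;\le\; 2,
\]
which, combined with $E[r+1]\endd-E[r-1]\start\ge 2(1+\e)$, gives $(E[r+1]\endd-b\endd)+(a\start-E[r-1]\start)\ge 2\e$, so at least one summand is $\ge\e$. Separately one shows $istart\le E[r+1]\start$ and $iend\ge E[r-1]\endd$ (the paper does this via an ordered-solution argument), so that if $a$ overlaps $X[r+1]$ then $a\start=istart\le E[r+1]\start$ forces $E[r+1]\endd-a\endd\ge\e$, and symmetrically for $b$. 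Putting these together, at least one of $a,b$ satisfies \emph{all} the hypotheses of \nameref{F.FA.Push.prop}, and then that lemma finishes the proof in one line. Your proposal would be repaired by inserting exactly this verification in place of paragraphs~3--4.
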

\begin{proof}
The short summary is that $\f{Push}((I,S,r),m,s)$ succeeds with the
right choice of $s$ if $I$ is feasible, and it turns out that it
is enough to try just $a$ and $b$ specified in the subroutine.

Let $E$ be some ordered $(1+\e)$-partial solution for $(I,r)$.
Then $istart\le E[r+1]\start$, otherwise there is some ordered solution
$Y$ for $I$ such that $Y[r]\start>E[r+1]\start$ and so concatenating
$\seq{E[i]\start+[0,1]}{i\in[1..r-1]}$, $(E[r+1]\start+[0,1])$ and
$Y[r+1..n]$ gives an ordered solution for $I$ that implies $istart\le E[r+1]\start$.
Likewise $iend\ge E[r-1]\endd$.

Next let $N$ be some near ordered solution for $(I,S,r)$ by \nameref{F.Near.prop}
(\ref{F.Near.prop}). Then $istart\le N[r]\start\le X[r+1]\start$
and $iend\ge N[r]\endd\ge X[r-1]\endd$. From these we get $X[r-1]\le a,b\le X[r+1]$
because of the following inequalities arising from \nameref{F.Snap.prop}
(\ref{F.Snap.prop}):
\begin{itemize}
\item $a\le\max(X[r+1]-1,istart+[0,1])\le X[r+1]$.
\item $a\ge\min(X[r+1]-1,iend+[-1,0])\ge X[r-1]$.
\item $b\le\max(X[r-1]+1,istart+[0,1])\le X[r+1]$.
\item $b\ge\min(X[r-1]+1,iend+[-1,0])\ge X[r-1]$.
\end{itemize}
Also, $b\endd-a\start\le\max(X[r-1]\endd,istart)-\min(X[r+1]\start,iend)+2\le2$,
which gives $(E[r+1]\endd-b\endd)+(a\start-E[r-1]\start)$ $\ge2(1+\e)-2=2\e$
and hence either $E[r+1]\endd-b\endd\ge\e$ or $a\start-E[r-1]\start\ge\e$.
In addition, if $a$ overlaps $X[r+1]$, then $a\start=istart\le E[r+1]\start$
and so $E[r+1]\endd-a\endd\ge\e$. Likewise, if $b$ overlaps $X[r-1]$,
then $b\start-E[r-1]\start\ge\e$. Together they imply that for at
least one $s\in\{a,b\}$ all the following hold:
\begin{itemize}
\item If $s$ overlaps $X[r+1]$, then $E[r+1]\endd-s\endd\ge\e$.
\item If $s$ overlaps $X[r-1]$, then $s\start-E[r-1]\start\ge\e$.
\end{itemize}
Therefore by \nameref{F.FA.Push.prop} (\ref{F.FA.Push.prop}), this
lemma follows directly.
\end{proof}
\clearpage{}
\begin{subr}[Push]
\label{F.FA.Push}~
\begin{block}
\item \textbf{\uline{Dependencies}}

\begin{block}
\item \textbf{\nameref{F.FA.Jump} (\ref{F.FA.Jump})}
\end{block}
\item \textbf{\uline{Implementation}}

\begin{block}
\item \textbf{Subroutine Push( insert state $(I=(n,T,W),S,r)$ , nat $m=\frac{2}{e}$
, slot $s$ ):}

\begin{block}
\item Define $X$ to be $S$ sorted with $(X[0]\endd,X[n+1]\start)=(-\infty,\infty)$.
\item // Sort the slots within $W[r]$ //
\item Let $u[1..q]$ be an ordered sequence such that $\class{S[u[i]]}{i\in[1..q]}=\class{S[i]}{i\in[1..n]\backslash\{r\}\land S[i]\wi W[r]}$.
\item Sort $S[u[1..q]]$.
\item // Allocate the inserted task to $s$ //
\item Set $S[r]=s$.
\item // Handle overlapping slots on both sides //
\item Set $jumps=\max\left(\log_{1+\frac{1}{2}e}\left(\frac{14}{e^{2}}\right),0\right)+1$.
\item For $i$ from $r+1$ up to $n$ as long as $S[i-1]$ overlaps $X[i]$:

\begin{block}
\item Set $i'\in[i..n]$ such that $S[i']=X[i]$.
\item // Jump if pushing $m$ slots is insufficient //
\item If $i=r+m+1$:

\begin{block}
\item Set $S[i']=null$.
\item If $\f{Jump}((I,S,i'),m,[S[r+m]\endd,W[i']\endd],\str{right},jumps)=Failure$:

\begin{block}
\item Return $Failure$.
\end{block}
\item Exit For.
\end{block}
\item // Swap into place and push aside the neighbouring slot //
\item Swap $S[i],S[i']$.
\item Set $S[i]=S[i-1]+1$.
\end{block}
\item For $i$ from $r-1$ down to $1$ as long as $S[i+1]$ overlaps $X[i]$:

\begin{block}
\item Set $i'\in[1..i]$ such that $S[i']=X[i]$.
\item // Jump if pushing $m$ slots is insufficient //
\item If $i=r-m-1$:

\begin{block}
\item Set $S[i']=null$.
\item If $\f{Jump}((I,S,i'),m,[W[i']\start,S[r-m]\start],\str{left},jumps)=Failure$:

\begin{block}
\item Return $Failure$.
\end{block}
\item Exit For.
\end{block}
\item // Swap into place and push aside the neighbouring slot //
\item Swap $S[i],S[i']$.
\item Set $S[i]=S[i+1]-1$.
\end{block}
\item Return $Success$.
\end{block}
\end{block}
\end{block}
\end{subr}
\begin{lem}[Push's properties]
\label{F.FA.Push.prop} Take any $\e$-underallocated ordered insert
state $(I=(n,T,W),S,r)$ and positive natural $m=\frac{2}{e}$. Let
$X$ be $S$ sorted with $(X[0]\endd,X[n+1]\start)=(-\infty,\infty)$.
Take any ordered $(1+\e)$-partial solution $E$ for $(I,r)$ and
slot $s$ such that all the following hold:
\begin{itemize}
\item $X[r-1]\le s\le X[r+1]$.
\item $s=Y[r]$ for some ordered solution $Y$ for $I$.
\item If $s$ overlaps $X[r+1]$, then $E[r+1]\endd-s\endd\ge\e$.
\item If $s$ overlaps $X[r-1]$, then $s\start-E[r-1]\start\ge\e$.
\end{itemize}
Then $\f{Push}((I,S,r),m,s)$ does the following:
\begin{itemize}
\item If $c<\frac{7}{e}+4$ and ( $m\ge n$ or $e\le\e$ ), it solves $(I,S,r)$
using at most $2\floor{\max\left(\log_{1+\frac{1}{2}\e}\left(\frac{14}{\e^{2}}\right),0\right)}+\frac{17}{e}+6$
reallocations and returns $Success$ in $O\left(\frac{1}{e}\log(\frac{1}{e})\log(n)+\log(n)^{2}\right)$
time.
\item If it returns $Failure$, it would have taken $O\left(\frac{1}{e}\log(\frac{1}{e})\log(n)\right)$
time.
\end{itemize}
\end{lem}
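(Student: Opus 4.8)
The plan is to treat \nameref{F.FA.Push} as a short bounded preprocessing step and then reduce it to \nameref{F.FA.Jump} (\ref{F.FA.Jump}): the pushing phase either already produces a solution, or it hands \nameref{F.FA.Jump} a residual insert state together with a restricted starting window that meets its hypotheses. For the setup, \nameref{F.Ord} (\ref{F.Ord}) part~1 gives that $X$ (= $S$ sorted) is an ordered partial solution for $(I,r)$, and by part~2 the $S$ produced right after ``Sort $S[u[1..q]]$'' is still a valid partial allocation for $I$, now with its slots inside $W[r]$ in window order. Writing $s$ into $S[r]$, and using $X[r-1]\le s\le X[r+1]$, leaves an allocation whose only defect is that $s$ may overlap $X[r-1]$ on the left and/or $X[r+1]$ on the right; the two for-loops chase this defect outward one slot per pass, so it suffices to analyze the right loop, the left being a mirror image.

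\emph{Loop invariant.} I would prove by induction that at the top of pass $i$ (for $r+1\le i\le r+m+1$): $S[1..i-1]$ are in window order with each $S[j]\wi W[j]$; the slots $S[r],\dots,S[i-1]$ occupy the contiguous block $[s\start,\,s\start+(i-r)]$; and $S$ with position $i-1$ deleted is a valid allocation, so the position $i'\ge i$ with $S[i']=X[i]$ is well-defined. The inductive step: validity across ``Swap $S[i],S[i']$'' is exactly \nameref{F.Ord} (\ref{F.Ord}) part~2 applied to the ordered two-element subsequence $(i,i')$ of that valid allocation, which in particular yields $\sigma\wi W[i']$ for the slot $\sigma$ bumped out of position $i$; and $S[i]\wi W[i]$ across ``$S[i]\gets S[i-1]+1$'' follows from the ordered solution $Y$ with $Y[r]=s$, since $Y[i]\endd\le W[i]\endd$ while $Y[i]\start\ge s\endd+(i-r-1)$ force $W[i]\endd\ge s\endd+(i-r)=S[i]\endd$. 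The loop exits either with no remaining overlap (hence a genuine solution for $(I,S,r)$) or, at pass $i=r+m+1$, with a deallocated task $T[i']$ to be handed to \nameref{F.FA.Jump}.

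\emph{Hand-off.} If the right loop reaches pass $r+m+1$, then it ran at pass $r+1$, so $s$ overlaps $X[r+1]$, and by the invariant $S[r+m]\endd=s\endd+m$ and $i'\ge r+m+1$. The hypothesis ``if $s$ overlaps $X[r+1]$ then $E[r+1]\endd-s\endd\ge\e$'' anchors $E[r+1]\endd\ge s\endd+\e$; since $E[r+1],\dots,E[i']$ are ordered non-overlapping slots of length $1+\e$ inside their windows, $W[i']\endd\ge E[i']\endd\ge E[r+1]\endd+(i'-r-1)(1+\e)\ge s\endd+\e+m(1+\e)=S[r+m]\endd+(m+1)\e$, which is at least $2+\e$ when $e\le\e$; and if instead $m\ge n$ the loop can never reach pass $r+m+1$ in the first place. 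Thus the restricted starting window $[S[r+m]\endd,W[i']\endd]$ has span $\ge2+\e$, and after checking the other hypotheses \nameref{F.FA.Jump.prop} (\ref{F.FA.Jump.prop}) requires of $(I,S,i')$ it solves the residual using at most $\floor{\max(\log_{1+\frac{1}{2}\e}(\frac{14}{\e^2}),0)}+\frac{2}{e}+1$ reallocations in $O(\frac{1}{e}\log(\frac{1}{e})\log(n)+\log(n)^2)$ time on success, and $O(\frac{1}{e}\log(\frac{1}{e})\log(n))$ time if it reports Failure. The left loop is symmetric, using the hypothesis on $X[r-1]$ to bound $W[i']\start$ from above.

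\emph{Accounting and obstacle.} Every reallocated task lies in one of: the slots moved by the initial sort ($\le\floor{c}\le\frac{7}{e}+4$, since $c<\frac{7}{e}+4$ and unit slots inside a length-$c$ window number $\le\floor{c}$), the right pushes ($\le m=\frac{2}{e}$), the left pushes ($\le\frac{2}{e}$), and the two \nameref{F.FA.Jump} calls ($\le\floor{\max(\log_{1+\frac{1}{2}\e}(\frac{14}{\e^2}),0)}+\frac{2}{e}+1$ each); arranging the bookkeeping so these sets are essentially disjoint, they sum to at most $2\floor{\max(\log_{1+\frac{1}{2}\e}(\frac{14}{\e^2}),0)}+\frac{17}{e}+6$. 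For time, the sort costs $O(\frac{1}{e}\log(n))$, each of the $O(\frac{1}{e})$ passes costs $O(\log(n))$ against the $(S,W)$-IDSetRQ, and the two \nameref{F.FA.Jump} calls dominate, giving the two stated bounds; since Push returns Failure only by forwarding a Failure from \nameref{F.FA.Jump}, no packing is done in that case. I expect the real obstacle to be the hand-off: not the span bound itself (which is clean) but verifying that $(I,S,i')$ still meets whatever precondition \nameref{F.FA.Jump.prop} needs — above all that it remains an $\e$-slack insert state after the deletion of $T[i']$ and all the swaps and pushes, which ought to follow from the equal-window-length structure but needs its own short argument; keeping the loop invariant airtight (the \nameref{F.Ord}-based validity of each swap and the $Y$-based window-containment of each push) is the second most delicate point.
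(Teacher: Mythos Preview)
Your approach is essentially the paper's: sort inside $W[r]$, allocate $s$, push outward with swaps, and hand the residual to \nameref{F.FA.Jump} via its Property~3. Your use of $Y$ to certify $S[i]\endd\le W[i]\endd$ after each push is clean. Two points deserve attention.

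First, your appeal to \nameref{F.Ord} part~2 for the swap is not quite licensed: at the top of pass $i$ the global allocation $S$ is \emph{not} valid (there is the overlap at $S[i-1]$, and possibly also $s$ still overlaps $X[r-1]$ on the left), so the hypothesis of part~2 fails. The paper instead proves a counting identity after the sort, namely that no $S[j]$ with $j<r$ lies above $X[r-1]$ and no $S[j]$ with $j>r$ lies at or below $X[r-1]$; this yields the multiset identity $\{S[j]:j\ge i\}=\{X[j]:j\ge i\}$ at the top of each pass, from which one checks directly that $\text{old }S[i]\ge X[i]$, and together with $S[i']=X[i]\wi W[i']$, the fixed window length, and the inductively maintained $S[j]\wi W[j]$ for $j\ge i$, one gets $\text{old }S[i]\wi W[i']$ without needing global validity.

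Second, you are right that the hand-off is the real obstacle, but its nature is not quite what you wrote: the issue is not $\e$-slackness (that is a property of the instance, unaffected by your edits to $S$) but that $(I,S,i')$ may fail to be an insert state at all, because $S[r]=s$ may still overlap some slot on the \emph{left} while you are calling Jump on the right. The paper's resolution is to ``pretend-modify'' $S[1..r-1]$ to something valid (possible precisely because of the counting identity and the existence of $Y$), invoke Jump's Property~3, and then undo the pretend modification; this is safe because Jump's Property~4 guarantees it only reallocates slots strictly after $U\start=S[r+m]\endd$, hence never touches $S[1..r-1]$. The paper's counting identity is also what verifies the first bullet of Property~3, $\#\{i\le r': S[i]>S[r']\}=0$ with $r'=r+m$. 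Your accounting is otherwise fine; the paper bounds sort+push by $\max(c,m)+3m$ (the extra $m$ over your count comes from the swap targets $T[i']$ with $i'\neq i$), which with the two Jump calls gives the stated $\frac{17}{e}+6$.
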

\begin{proof}
We shall divide the proof according to the parts of the subroutine.

\br

\textbf{High-level overview}

Push sorts all the slots in $S$ that are within the inserted window,
which ensures that all slots in earlier windows are before $s$ and
all slots in later windows are after $s$. Subsequently, it allocates
the inserted task to $s$, which may cause overlaps on both sides.
To fix that, it goes through up to $m$ slots on each side of $s$,
for each of them using a swap to align it with $X$ and then pushing
it aside so that it no longer overlaps the previous slot. If after
$m$ slots the last pushed slot $S[i]$ still overlaps the next one
$S[j]$, it deallocates $S[j]$ and executes Jump with starting window
$W[j]$ and starting span the subinterval of $W[j]$ that is beyond
$S[i]$. If the stipulated conditions are met, the starting span will
be large enough for Jump to succeed.

\br

\textbf{Symmetry}

It suffices to analyze the first for-loop since the second for-loop
and all the lemma conditions are symmetrical about $r$.

\br

\textbf{Sorting part}

First we shall consider the situation just after sorting the slots
within $W[r]$. Since the number of windows before $W[r]$ is the
same as the number of slots before $X[r]$, we have:
\begin{block}
\item $\#(\class i{i\in[1..r-1]\land S[i]>X[r-1]})$
\item $=(r-1)-\#(\class i{i\in[1..r-1]\land S[i]\le X[r-1]})$
\item $=\#(\class i{i\in[1..n]\backslash\{r\}\land S[i]\le X[r-1]})-\#(\class i{i\in[1..r-1]\land S[i]\le X[r-1]})$
\item $=\#(\class i{i\in[r+1..n]\land S[i]\le X[r-1]})$.
\end{block}
But for any $i\in[1..r-1]$ and $j\in[r+1..n]$ such that $S[i]>X[r-1]$
and $S[j]\le X[r-1]$, we would have $S[j]<S[i]$ and $S[i]\endd\le W[i]\endd\le W[r]\endd$
and $S[j]\start\ge W[j]\start\ge W[r]\start$, which imply $S[i],S[j]\wi W[r]$
and hence contradict the fact that $S[u[1..q]]$ is sorted. Therefore
we must have the counting identity $\#(\class i{i\in[1..r-1]\land S[i]>X[r-1]})$
$=\#(\class i{i\in[r+1..n]\land S[i]\le X[r-1]})=0$.

\br

\textbf{Pushing part}

Next we shall prove that for the first for-loop in the pushing part
the following invariances hold before each iteration:
\begin{enumerate}
\item $\#(\class j{j\in[1..i-1]\land S[j]>X[i-1]})=0$.
\item $S[j]=X[j]=S[r]+(j-r)$ for any $j\in[r..i-1]$.
\item $S[i']=X[i]$ for some $i'\in[i..n]$.
\item $S[r..n]$ is a valid allocation of $T[r..n]$ except possibly that
$S[i-1]$ overlaps $X[i]$.
\end{enumerate}
Invariance~1 holds by Invariance~2, which holds by construction,
and thus Invariance~3 holds since either $X[i]>X[i-1]$ or $X[i]=X[i-1]=S[i-1]>S[j]$
for any $j\in[1..i-2]$. Invariance~4 follows from Invariance~3,
because the swap will not cause any additional violation of allocation
validity by choice of $s$, and because $S[i]$ will be shifted to
$S[i-1]+1=X[i-1]+1\le X[i+1]$ and so will at most overlap $X[i+1]$.

\br

\textbf{Jumping part}

Finally if Jump is executed, $m<n$ otherwise all slots would have
been pushed, and thus the following inequalities hold:
\begin{block}
\item $E[r+m+1]\endd-S[r+m]\endd$ $\ge(E[r+1]\endd+m(1+\e))-(S[r]\endd+m)$
\item $=(E[r+1]\endd-S[r]\endd)+m\cdot\e$ $\ge(m+1)\e$ {[}by Invariance~1{]}.
\item $(m+1)\e-1=(\frac{2}{e}+1)\e-1\ge\frac{\e}{e}+\e>\max\left(1+\frac{1}{2}\e,\frac{(c-1)\e}{7}\right)$.
\end{block}
And so the lemma follows quite directly from \nameref{F.FA.Jump.prop}
(\ref{F.FA.Jump.prop}~Properties~3,4), where $d=7$. It should
be noted though that the state $(I,S,r)$ we are passing to Jump is
strictly speaking not an insert state, but it does not matter. To
verify that it still works, just prior to calling Jump we can pretend
modify $S[1..r-1]$ to make it an insert state due to the counting
identity and the fact that $s=Y[r]$ for some ordered solution $Y$
for $I$, and then Jump's properties hold. Afterward we just undo
our pretend modification, which does not interfere with Jump as a
consequence of Property~4.

\br

\textbf{Total costs}

The number of reallocations made by the sorting and pushing parts
is bounded by $\max(c,m)+3m\le\frac{13}{e}+4$, and hence the total
number of reallocations is at most $2\floor{\max\left(\log_{1+\frac{1}{2}\e}\left(\frac{14}{\e^{2}}\right),0\right)}+\frac{17}{e}+6$.
The time taken by sorting and pushing is obviously $O((c+2m)\log(n))\wi O\left(\frac{1}{e}\log(n)\right)$,
which is dominated by the time taken by Jump.
\end{proof}
\clearpage{}
\begin{subr}[Jump]
\label{F.FA.Jump}~
\begin{block}
\item \textbf{\uline{Implementation}}

\begin{block}
\item \textbf{Subroutine Jump( insert state $(I=(n,T,W),S,r)$ , nat $m=\frac{2}{e}$
, interval $U$ , string $dir$ , nat $jumps$ ):}

\begin{block}
\item Set $V=U$.
\item For $jmp$ from $0$ up to $jumps$:

\begin{block}
\item // Enter the final phase if $V$ has sufficient empty space //
\item Subroutine count( interval $w$ ):

\begin{block}
\item Return $\#(\class i{i\in[1..n]\backslash\{r\}\land S[i]\wi w})$.
\end{block}
\item Subroutine space( interval $w$ ):

\begin{block}
\item Set $sstart=\max\left(w\start,\max_{i:i\in[1..n]\backslash\{r\}\land S[i]\start<w\start}S[i]\endd\right)$.
\item Set $send=\min\left(w\endd,\min_{i:i\in[1..n]\backslash\{r\}\land S[i]\endd>w\endd}S[i]\start\right)$.
\item Return $(send-sstart)-\f{count}(w)$.
\end{block}
\item Set $blocks=\ceil{\frac{\f{count}(V)+1}{m+1}}$.
\item If $\f{space}(V)\ge blocks$:

\begin{block}
\item // Find a block of at most $m+1$ gaps within $V$ having empty space
at least $1$ //
\item Binary search to find interval $R$ such that all the following hold:

\begin{itemize}
\item $R\wi V$.
\item $\f{count}(R)\le m$.
\item $\f{space}(R)\ge1$.
\item $S[i]\wi R$ for any $S[i]$ that overlaps $R$.
\end{itemize}
\item // Sort the slots within $R$ //
\item Let $q=\f{count}(R)$.
\item Let $u[1..q]$ be an ordered sequence such that $\class{u[i]}{i\in[1..q]}=\class i{i\in[1..n]\backslash r\land S[i]\wi R}$.
\item Sort $S[u[1..q]]$.
\item // Pack the slots aside within $R$ to leave a gap $G$ of length
$1$ //
\item Set $i'=q+1$.
\item Set $G=R\endd+[-1,0]$.
\item For $i$ from $1$ up to $q$:

\begin{block}
\item Set $x=\max(R\start+(i-1),S[u[i]]\start-1)$.
\item If $x<W[u[i]]\start$:

\begin{block}
\item Set $i'=i$.
\item Set $G=x+[0,1]$.
\item Exit for.
\end{block}
\item Set $S[u[i]]=x+[0,1]$.
\end{block}
\item For $i$ from $i'$ up to $q$:

\begin{block}
\item Set $x=\min(R\endd-(q-i),S[u[i]]\endd+1)$.
\item Set $S[u[i]]=x+[-1,0]$.
\end{block}
\item // Reallocate along cascade from $W[r]$ to $G$ //
\item If $dir=\str{both}$:

\begin{block}
\item Set $dir=\cond{G\start<U\start}{\str{left}}{\str{right}}$.
\end{block}
\item Set $j=r$.
\item While $G\nwi W[j]$:

\begin{block}
\item Define $jslots=\class i{i\in[1..n]\backslash\{r\}\land S[i]\wi W[j]}$.
\item Set $j'=\cond{dir=\str{left}}{\min(jslots)}{\max(jslots)}$.
\item Set $S[j]=S[j']$.
\item Set $j=j'$.
\end{block}
\item Set $S[j]=G$.
\item Return $Success$.
\end{block}
\item // Jump to the furthest window{[}s{]} whose slot is within $V$ //
\item Define $jslots=\class i{i\in[1..n]\backslash\{r\}\land S[i]\wi V}$.
\item If $jslots=\none$:

\begin{block}
\item Return $Failure$.
\end{block}
\item Set $rstart=W[\min(jslots)]\start$.
\item Set $rend=W[\max(jslots)]\endd$.
\item If $dir=\str{both}$:

\begin{block}
\item Set $V=[\min(rstart,V\start),\max(rend,V\endd)]$.
\end{block}
\item Otherwise:

\begin{block}
\item Set $V=\cond{dir=\str{left}}{[rstart,V\endd]}{[V\start,rend]}$.
\end{block}
\end{block}
\item // Return failure if too many jumps are used //
\item Return $Failure$.
\end{block}
\end{block}
\end{block}
\end{subr}
\begin{lem}[Jump's properties]
\label{F.FA.Jump.prop} Take any ordered insert state $(I,S,r)$
and positive natural $m=\frac{2}{e}$ and real $\e\ge e$. Take also
any interval $U\wi W[r]$ and $dir\in\{\str{left},\str{right},\str{both}\}$
and natural $jumps$. Then $\f{Jump}((I,S,r),m,U,dir,jumps)$ has
the following properties:
\begin{enumerate}
\item It solves $(I,S,r)$ using at most $m$ reallocations if all the following
hold:

\begin{itemize}
\item $m\ge n$.
\item $\f{span}(U)\ge n+1$.
\item $jumps\ge1$.
\end{itemize}
\item It solves $(I,S,r$) using at most $(jumps+m)$ reallocations if for
some $d>0$ all the following hold:

\begin{itemize}
\item $(I,S,r)$ is $\e$-underallocated.
\item $U=W[r]$.
\item $dir=\str{both}$.
\item $c\ge\frac{6}{\e}+3+\frac{c-1}{d}$.
\item $jumps\ge\max\left(\log_{1+\frac{1}{2}\e}\left(\frac{4d}{\e}\right),0\right)+1$.
\end{itemize}
\item It solves $(I,S,r$) using at most $(jumps+m)$ reallocations if for
some $d>0$ either of the following holds:

\begin{itemize}
\item For some $r'\in[1..r-1]$ and ordered $(1+\e)$-partial solution $E$
for $(I,r'-m)$, all the following hold:

\begin{itemize}
\item $\#(\class i{i\in[1..r']\land S[i]>S[r']})=0$.
\item $U=[S[r']\endd,W[r]\endd]$.
\item $dir=\str{right}$.
\item $E[r'+1]\endd-S[r']\endd-1\ge\max\left(1+\frac{1}{2}\e,\frac{(c-1)\e}{d}\right)$.
\item $jumps\ge\max\left(\log_{1+\frac{1}{2}\e}\left(\frac{2d}{\e^{2}}\right),0\right)+1$.
\end{itemize}
\item For some $r'\in[r+1..n]$ and ordered $(1+\e)$-partial solution $E$
for $(I,r'+m)$, all the following hold:

\begin{itemize}
\item $\#(\class i{i\in[r'..n]\land S[i]<S[r']})=0$.
\item $U=[W[r]\start,S[r']\start]$.
\item $dir=\str{left}$.
\item $E[r'-1]\start-S[r']\start+1\le-\max\left(1+\frac{1}{2}\e,\frac{(c-1)\e}{d}\right)$.
\item $jumps\ge\max\left(\log_{1+\frac{1}{2}\e}\left(\frac{2d}{\e^{2}}\right),0\right)+1$.
\end{itemize}
\end{itemize}
\item If it solves $(I,S,r)$, it also satisfies all the following:

\begin{itemize}
\item It performs reallocations to slots completely after $U\start$ if
$dir=\str{right}$ and before $U\endd$ if $dir=\str{left}$.
\item It returns $Success$ in $O\left(\frac{1}{e}\log(\frac{1}{e})\log(n)+\log(n)^{2}\right)$
time.
\end{itemize}
\item If it returns $Failure$, it would have taken $O\left(jumps\cdot\log(n)\right)$
time.
\end{enumerate}
\end{lem}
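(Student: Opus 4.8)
The plan is to develop the loop machinery common to all five properties and only then feed in the hypotheses of each. The single object to track is the interval $V$. Before the $jmp$-th pass of the main loop, and as long as the final phase has not been entered, I would maintain the invariant that $V$ equals the union of $U$ with the windows incorporated by the first $jmp$ jumps, that $V$ has grown only rightward when $dir=\str{right}$ (so $V\start=U\start$) and only leftward when $dir=\str{left}$ (so $V\endd=U\endd$), and that the window incorporated at pass $k\le jmp$ is the window of a slot that lay inside $V$ as it stood after pass $k-1$. The directional half of this invariant immediately gives the region claim of Property~4: in the one-directional cases every slot touched by packing lies in $R\wi V$ and every slot touched by the cascade is some $S[j']\wi V$, hence on the correct side of $U\start$ or $U\endd$. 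The chain half makes the cascade from $W[r]$ to $G$ well defined and finite, since $G\wi R\wi V$ lies in some incorporated window and following the furthest slot in direction $dir$ at each hop retraces the chain; it reallocates exactly one already-present task per jump actually performed, hence at most $jumps$ of them, and adding the at most $\f{count}(R)\le m$ slots repositioned while packing $R$ yields the ``$jumps+m$'' reallocation bound of Properties~2 and~3 (in Property~1 the cascade is empty because $G\wi U\wi W[r]$, leaving only the $\le m$ of packing).

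\textbf{Correctness and cost of the final phase.} When $\f{space}(V)\ge blocks=\ceil{\frac{\f{count}(V)+1}{m+1}}$, the average empty space over the blocks of $m+1$ consecutive gaps of $V$ is at least $1$, so the binary search does locate a block $R$ with $\f{count}(R)\le m$, $\f{space}(R)\ge1$ and no slot straddling its boundary. Sorting $S[u[1..q]]$ keeps $S$ a valid allocation by \nameref{F.Ord} (part~2), so each of those slots stays inside its window; the left-then-right packing then slides them to the two ends of $R$, and $\f{space}(R)\ge1$ forces a free unit sub-interval $G\wi R$ to open up while order preservation keeps every packed slot inside its (possibly shifted) window. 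Thus after the cascade $S$ is a genuine solution for $I$. For the running time, each pass of the loop spends $O(1)$ IDSetRQ range queries to evaluate $\f{count}$ and $\f{space}$, so a $Failure$ run costs $O(jumps\cdot\log n)$, which is Property~5; a $Success$ run additionally pays $O(\log^2 n)$ for the binary search for $R$, $O(m\log n)$ for sorting and packing, and $O(jumps\cdot\log n)$ for the cascade, and since $jumps=\Theta(\frac1e\log\frac1e)$ in Properties~2--3 this totals $O(\frac1e\log\frac1e\log n+\log^2 n)$, which is Property~4's time bound.

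\textbf{Termination.} It remains to show, for Properties~1--3, that the final phase is entered within $jumps$ passes. Property~1 is immediate: $m\ge n$ makes $blocks=1$ at $jmp=0$, and $\f{span}(U)\ge n+1$ together with at most $n$ unit slots forces $\f{space}(V)\ge1$. For Properties~2 and~3 the engine is a comparison of two estimates made whenever a pass fails to terminate. First, $\f{space}(V)<blocks$ forces $\f{span}(V)\le(1+\tfrac e2)\f{count}(V)+O(1)$, so the tasks whose slots lie in $V$ are cramped. Second, by the ordered $(1+\e)$-partial solution $E$ those same tasks occupy disjoint intervals of length $1+\e$ in $E$, so the spread of their window indices is at least $\f{count}(V)$ and the union of their windows has span at least $(1+\e)\f{count}(V)$; hence the next jump makes $\f{span}(V)\ge(1+\e)\f{count}(V)$. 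Combining, once $\f{span}(V)$ exceeds $\Theta(\frac1\e)$ it grows by a factor of at least roughly $1+\tfrac\e2$ per non-terminating pass. The remaining hypotheses make this recursion both ignite and finish on time: $c\ge\frac6\e+3+\frac{c-1}d$ in Property~2, and $E[r'{+}1]\endd-S[r']\endd-1\ge\max(1+\tfrac12\e,\frac{(c-1)\e}d)$ in Property~3, make the starting span already $\Omega(\frac1\e)$ so the additive $O(1)$ is absorbed, while the $\frac{c-1}d$ clause caps the per-jump increase; since $\f{span}(V)$ stays $O(\frac c\e)$ while cramped, the threshold is crossed within $\log_{1+\frac12\e}\Theta(\frac d\e)$ passes in the two-sided case and $\log_{1+\frac12\e}\Theta(\frac d{\e^2})$ in the one-sided cases (the extra $\frac1\e$ coming from the smaller starting span), which matches the stated $jumps$ values $\frac{4d}\e$ and $\frac{2d}{\e^2}$ up to the rounding.

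\textbf{Expected main obstacle.} The delicate point is this geometric-growth step. I will have to pin down the $O(1)$ additive error in ``$\f{span}(V)\le(1+\tfrac e2)\f{count}(V)+O(1)$'', accounting for slots straddling $\partial V$ and for the ceiling in $blocks$; argue that the incorporated window indices genuinely form a run long enough to invoke $E$; and verify that the specific constants ($6/\e+3$, $(c-1)/d$, $4d/\e$, $2d/\e^2$) are tuned so that the recursion starts above its ignition threshold and reaches its termination threshold within the budgeted number of jumps, uniformly over the gap between the guessed $e$ and the true slack $\e$. By contrast, checking that the pack-then-cascade phase never drives a slot out of its window is fiddly but follows routinely from \nameref{F.Ord}.
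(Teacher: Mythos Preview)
Your overall architecture matches the paper's: track $V$, establish geometric growth to bound the number of jumps, then analyze the search--sort--pack--cascade final phase. But you have inverted the difficulty assessment, and this creates a genuine gap.

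You flag the geometric-growth recursion as the ``expected main obstacle'' and dismiss packing correctness as something that ``follows routinely from \nameref{F.Ord}''. In fact the growth calculation, once you set up $z=\f{count}(V)$ and derive $s<z\frac{m+2}{m+1}+O(1)$ from $\f{space}(V)<blocks$, is clean algebra that the paper dispatches in a few lines for each of Properties~2 and~3.

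The real difficulty is packing correctness. After sorting and then right-packing inside $R$, a slot $S[u[i]]$ may be shifted rightward; nothing about $\f{space}(R)\ge1$ or the Ordering theorem by itself prevents $S[u[i]]\endd$ from exceeding $W[u[i]]\endd$. The paper's argument is that right-packing can only fail at index $i$ if no slot in $R$ with a later window reaches further right, and this is ruled out by the existence of an increasing \emph{jumping sequence} from $r$ that covers $R\endd$. So the crucial chain of ideas is: (i) define jumping sequences formally; (ii) show that the sorting step inside $R$ does not destroy the jumping sequence $v_0$ that existed before sorting---this is non-trivial: the paper repairs $v_0$ by inserting all indices of $u[1..q]$ that fall between consecutive elements of $v_0$, and verifies via a counting identity that the repaired sequence $v$ is still a jumping sequence; (iii) use $v$ to derive a contradiction from any right-packing failure, by locating the jump $h\to j$ in $v$ straddling $i$ and exploiting $S_1[j]\wi R$ and $W[h]\endd\ge S_1[j]\endd$. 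The same jumping sequence $v$ is then reused to bound the cascade: the paper shows inductively that the cascade index $f[l+1]\ge v[l+1]$, so the while-loop terminates within $\#(v)-1$ steps, of which at most $\#(v_0)-1\le jumps$ touch slots outside $R$.

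Your proposal does gesture at a ``chain'' of incorporated windows, but you treat it as a property of $V$'s construction history rather than as a concrete combinatorial object that must \emph{survive the sorting step} and then simultaneously certify packing validity and cascade termination. Without making this explicit, your sentence ``order preservation keeps every packed slot inside its (possibly shifted) window'' is the unjustified step, not a routine consequence of \nameref{F.Ord}.
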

\begin{proof}
The proof is divided roughly according to the parts of the subroutine.

\br

\textbf{High-level overview}

Here is a high-level sketch of what Jump does. In the initial phase,
$k$ will be the current window index, and $V$ will be the current
span. $(k,V)$ are initialized based on the input parameters, and
jumps will be made only in the specified direction, each time to the
furthest window that has a slot completely within the current span.
After each jump $V$ will be extended in that direction to match the
reach of the new window. The final phase is entered when the current
span has sufficient total empty space to guarantee that it contains
a region $R$ containing at most $m$ slots and having empty space
of at least $1$, upon which the slots within $R$ will be sorted
and packed within $R$ to leave a unit gap $G$, and then the solution
is finished by reallocating the slots on the cascade of jumps from
$W[r]$ to $G$.

Jump succeeds in the specified situations roughly because of the following
high-level reasons. The fact that the original instance is $\e$-slack
before insertion implies that the windows stretch out at an average
rate of at least about $1+\e$, while in the initial phase the slots
stretch out at an average rate of at most about $1+\frac{1}{m}=1+\frac{e}{2}$.
So if $e\le\e$, the current span $V$ grows more or less exponentially.
All that the proof really depends on is that $m\in\Q(\frac{1}{e})$
and $\e-\frac{1}{m}\in\Q(\e)$, so the choice of $m=\frac{2}{e}$
is just to make the computations simple instead of attempting to obtain
optimal bounds.

\br

\textbf{Jumping sequence}

Before getting to the proof, here is the precise definition of jumps.
Call $v$ a \textbf{jumping sequence} iff the following hold:
\begin{itemize}
\item $v[1..\#(v)]$ is a strictly monotonic sequence of indices.
\item $S[v[2..\#(v)]]$ is a strictly monotonic sequence of slots.
\item $v\first=r$.
\item $W[v[i]]\co S[v[i+1]]$ for any $i\in[1..\#(v)-1]$.
\end{itemize}
Additionally, we say that $v$ covers a point $x$ iff $x\in\f{span(}W[v])$.
Note that a jumping sequence is allowed to have jumps that are not
to the furthest possible window. It is quite clear from the construction
of Jump that before any iteration of the main for-loop, for any $x\in V$
there is some jumping sequence $v$ of length at most $jmp+1$ that
covers $x$. Note that the only condition on $U$ needed for this
is $U\wi W[r]$, and it is sufficient to guarantee that a solution
will be found if the final phase is entered.

\br

\textbf{Failure characteristics}

Clearly each invocation of $\f{count}$ and $\f{space}$ takes $O(\log(n))$
time using only a few search queries to the IDSetRQ. Also, Jump does
not compute the whole set $jslots$ but just its minimum and maximum,
which takes only two search queries and one range query to the IDSetRQ,
which amounts to $O(\log(n))$ time for each jump. Therefore if Jump
returns failure it would have taken only $O(jumps\cdot\log(n))$ time
in total.

\br

\textbf{Final phase}

If $\f{span}(U)\ge n+1$ and $m\ge n$, then in the very first iteration
of the for-loop over $jmp$ we have $blocks\le\ceil{\frac{n+1}{m+1}}\le1$
and $\f{space}(V)\ge\f{span}(V)-n=\f{span}(U)-n\ge1$, and hence the
final phase will be entered, in which the sorting part and $U\wi W[r]$
ensures that the packing part succeeds, giving Property~1.

\br

\uline{Searching part}

If the final phase is entered, the binary search can be carried out,
because the intervening gaps between the slots within $V$ can be
partitioned into $\ceil{\frac{count(V)+1}{m+1}}$ blocks of $m+1$
or less, and $\f{space}$ is additive on intervals, and so the binary
search can keep halving the current set of consecutive blocks by choosing
the half that has the larger average space per block, which ensures
that it will obtain a single block $R$ with total space at least
$1$. It is also easy to guarantee that $S[i]\wi R$ for any $S[i]$
that overlaps $R$, by trimming $R$ to exclude any slots that cross
its boundary. The binary search takes $O(\log(blocks)\log(n))\wi O(\log(n)^{2})$
time.

\br

\uline{Sorting part}

$S[u[1..q]]$, which are the slots within $R$, is then sorted, which
takes $O(m\cdot\log(n))\wi O(\frac{1}{e}\log(n))$ time, and by \nameref{F.Ord}
(\ref{F.Ord}) $S$ remains a partial solution for $I$. Let $S_{0}$
be the original $S$ before the sorting. Take any shortest increasing
jumping sequence $v_{0}$ that covers $R\endd$ before the sorting.
After sorting, $v_{0}$ may be no longer a jumping sequence, but we
can create a new jumping sequence $v$ that covers $R\endd$ by modifying
$v_{0}$.

For each $i$ just before $j$ in $v_{0}$, we have $S[i]\endd<S[j]\endd<R\endd$
by minimality of $v_{0}$. If $S[j]\nwi R$, we have $S[j]\start<R\start$
and hence $S[i],S[j]$ are unchanged, so we do nothing. If $S[j]\wi R$,
we insert each element in $A=\class a{a\in u[1..q]\land i<a<j}$ into
$v$ such that $v$ remains strictly increasing.

Now consider any $b\in A\cup\{j\}$ and let $a$ be just before $b$
in $v$. If $a\in u[1..q]$, we have $S[a]<S[b]$ because $S[u[1..q]]$
is ordered. If $a\notin u[1..q]$, it must be that $a=i$ and so $S[a]\endd<R\endd$,
which gives $S[a]\start<R\start$ since $S[a]\nwi R$, and hence $S[a]<S[b]$.
Therefore in all cases $S[a]<S[b]$.

Also, since the number of windows with slots before $S[b]$ within
$R$ is the same as the number of windows before $W[b]$ with slots
within $R$, we have:
\begin{block}
\item $\#(\class k{k\in u[1..q]\land k\ge b\land S_{0}[k]<S[b]})$
\item $=\#(\class k{k\in u[1..q]\land S_{0}[k]<S[b]})-\#(\class k{k\in u[1..q]\land k<b\land S_{0}[k]<S[b]})$
\item $=\#(\class k{k\in u[1..q]\land k<b})-\#(\class k{k\in u[1..q]\land k<b\land S_{0}[k]<S[b]})$
\item $=\#(\class k{k\in u[1..q]\land k<b\land S_{0}[k]\ge S[b]})$.
\end{block}
And so if $W[a]\endd<S[b]\endd$, for any $k\in u[1..q]$ such that
$k<b$ we have $k\le a$ and so $S_{0}[k]\endd\le W[k]\endd\le W[a]\endd<S[b]\endd$,
and hence $W[a]\endd\ge W[i]\endd\ge S_{0}[j]\endd\ge S[b]\endd$
since $j\in u[1..q]$ and $j\ge b$. Therefore in all cases $W[a]\endd\ge S[b]\endd$.

Therefore, after all the insertions, $v$ is once more an increasing
jumping sequence that covers $R\endd$. Also, all the elements added
to $v$ are in $u[1..q]$.

\br

\uline{Packing part}

Next is the packing part, in which the first for-loop does left-packing
and the second for-loop does right-packing. Call each packing successful
iff $S$ is still a valid allocation after it is completed. First
note that since $S[u[1..q]]$ is ordered, by induction $R\start+(i-1)\le S[u[i]]\start$
for any $i\in[1..q]$, and likewise $R\start-(q-i)\ge S[u[i]]\endd$
for any $i\in[1..q]$. Thus the first for-loop sets $x\le S[u[i]]\start$
and hence sets $S[u[i]]\wi W[u[i]]$, for each $i\in[1..i'-1]$. Also,
it sets $S[u[1..i'-1]]$ to non-overlapping slots, because $\max(R\start+(i-1),S[u[i]]\start-1)+1\le\max(R\start+i,S[u[i+1]]\start-1)$
for any $i\in[1..q-1]$. Since $S[u[1..i'-1]]$ are also all still
within $R$ and not shifted right, they do not overlap other slots
in $S$, and hence the left-packing always succeeds. Likewise, the
right-packing succeeds if the second for-loop always sets $S[u[i]]\endd\le W[u[i]]\endd$
for each $i\in[i'..q]$, which is what we shall now prove.

If the right-packing fails and sets $S[i]\endd>W[i]\endd$ for some
$i\in u[i'..q]$, then it suffices to consider the case that $i\ge r$
because this asymmetric part works if and only if the symmetric version
that tries both directions works. Basically, since the symmetric version
guarantees some way of packing that works, the asymmetric version
left-packs at least as many slots as the symmetric version, and hence
can definitely right-pack the rest. Anyway this is inconsequential
and so details are omitted.

Let $v$ be a shortest increasing jumping sequence that covers $R\endd$,
which exists by the earlier argument because $R\endd\ge S[i]\endd>W[i]\endd\ge W[r]\endd$.
Then $v\first\le i<v\last$ since $W[v\last]\endd\ge R\endd\ge S[i]\endd>W[i]\endd$.
Let $h$ be just before $j$ in $v$ such that $h\le i<j$, and let
$S_{1}$ be the original $S$ before packing. Then $S_{1}[j]\start\ge W[j]\start\ge W[i]\start\ge W[u[i']]\start>R\start$
and $S_{1}[j]\endd\le S_{1}[v\last]\endd<R\endd$ by minimality of
$v$, and hence $S_{1}[j]\wi R$. Thus $S_{1}[i]<S_{1}[j]$ since
$S_{1}[u[1..q]]$ is ordered. This gives $S[i]\endd\le S_{1}[i]\endd+1\le S_{1}[j]\endd\le W[h]\endd\le W[i]\endd$,
which contradicts the failure of the right-packing.

Therefore the packing always succeeds, and sets $G$ such that $G\wi R$
and $G$ does not overlap any slot in $S$, because either $i'\le q$
and $G\endd=\max(R\start+(i'-1),S[u[i']]\start-1)+1$ $\le\min(R\endd-(q-i'),S[u[i']]\endd+1)-1=S[u[i']]\start$,
or $i'=q+1$ and $G\start=R\endd-1$ $\ge\max(R\start+(q-1),S[u[q]]\start-1)+1=S[u[q]]\endd$.

\br

\uline{Cascading part}

Next, we shall prove that the cascade at the end finishes the modification
of $S$ to a solution for $I$. Just as in the main for-loop, this
while-loop makes $j$ trace out some jumping sequence, but we will
not actually need to prove so much. Let $f[l+1]$ be the value of
$j$ after $l$ iterations. It suffices to handle the case that $dir=\str{right}$.
Note that $f$ is increasing, because either it has only one element,
or $G\nwi U$ and so $G\endd>U\endd$ since there is some increasing
jumping sequence $v$ that covers $R\endd$, in which case $\neg(G\start<U\start)$
since $\f{span}(U)\ge1$. After each iteration of the while-loop,
$S$ is a solution for $I$ except for an exact overlap at $S[j]$,
so Jump is done if the while-loop terminates. To check that this always
occurs, we shall prove that $v[l+1]\le f[l+1]$ after $l$ iterations
of the while-loop as follows.

After $0$ iterations, the invariance holds trivially because $v[1]=r=f[1]$.
After $l$ iterations where $l>0$, let $l'=\max(\class i{i\in[1..\#(v)]\land v[i]\le f[l]})\ge l$
by the invariance. Then by the monotonicity of $v,f$, $\f{span}(W[f[1..l]])\co\f{span}(W[v[1..l']])$.
Also, $G\wi R\wi\f{span}(W[v])$, but $G\nwi\f{span}(W[v[1..l']])$
because $G\nwi\f{span}(W[f[1..l]])$ by the while-loop condition.
Thus $l'<\#(v)$ and so $v[l']\le f[l]<v[l'+1]$ and $W[v[l']]\co S[v[l'+1]]$,
which gives $W[f[l]]\co[W[v[l'+1]]\start,W[v[l']]\endd]\co S[v[l'+1]]$,
and hence $f[l+1]=\max(\class i{i\in[1..n]\land S[i]\wi W[f[l]]})$
$\ge v[l'+1]\ge v[l+1]$.

Therefore the while-loop runs for at most $(\#(v)-1)$ iterations,
otherwise $G\wi\f{span}(W[v])\wi\f{span}(W[f[1..\#(v)]])$, contradicting
the while-loop condition on iteration $\#(v)$. Thus the cascading
part makes at most $(\#(v)-1)$ reallocations, of which at most $\#(v_{0})-1=jmp\le jumps$
are to slots outside $R$.

\br

\textbf{Total costs}

In total Jump makes at most $(m+jumps)$ reallocations and takes $O\left(\log(n)^{2}+m\cdot\log(n)+jumps\cdot\log(n)\right)$
time.

\br

\textbf{Initial phase}

What remains to establish the lemma is to prove that the final phase
is entered in either of the two specified situations. So we shall
look at the jumping part of the initial phase, just before $V$ is
modified. Let $V'$ be the new value of $V$ after the modification,
and let $s=\f{span}(V)$ and $s'=\f{span}(V')$ and $z=\f{count}(V)$.

\br

\uline{Property 2}

Next, consider Property~2 as specified in the lemma statement (\ref{F.FA.Jump.prop}).
Then $s-(z+2)<\f{space}(V)<blocks=\ceil{\frac{z+1}{m+1}}=\floor{\frac{z}{m+1}}+1$.
Thus $s<z+\frac{z}{m+1}+3=z\frac{m+2}{m+1}+3$, and hence $z>(s-3)\frac{m+1}{m+2}=(s-3)\frac{2+e}{2+2e}\ge(s-3)\frac{2+\e}{2+2\e}\ge0$
since $s\ge c\ge3$. Thus $z>0$ and so $jslots=\class i{i\in[1..n]\backslash\{r\}\land S[i]\wi V}\ne\none$.
Let $a=\min(jslots)$ and $b=\max(jslots)$. Then $\#([a..b])\ge z$,
which implies that $s'\ge\f{span}(W[a..b])\ge z(1+\e)$.

From these we get $s'>(s-3)\frac{2+\e}{2+2\e}(1+\e)=(s-3)(1+\frac{1}{2}\e)$,
which gives $s'-(\frac{6}{\e}+3)>\left(s-(\frac{6}{\e}+3)\right)(1+\frac{1}{2}\e)$,
and by induction $s-(\frac{6}{\e}+3)\ge\left(c-(\frac{6}{\e}+3)\right)(1+\frac{1}{2}\e)^{jmp}\ge\frac{c-1}{d}(1+\frac{1}{2}\e)^{jmp}$.
As before, $\left(s-(\frac{6}{\e}+3)\right)\frac{1}{2}\e<s'-s\le2(c-1)$
because $V'\endd=W[b]\endd\le S[b]\endd+(c-1)\le V\endd+(c-1)$ and
similarly $V'\start\ge V\start-(c-1)$, and hence $s-(\frac{6}{\e}+3)<\frac{4(c-1)}{\e}$.
Therefore $jmp\le\log_{1+\frac{1}{2}\e}\left(\frac{4d}{\e}\right)$,
and so the for-loop will always enter the final phase since $jumps\ge\max\left(\log_{1+\frac{1}{2}\e}\left(\frac{4d}{\e}\right),0\right)+1$.

\br

\uline{Property 3}

First, consider Property~3 as specified in the lemma statement (\ref{F.FA.Jump.prop}).
By symmetry it suffices to consider the case of $dir=\str{right}$.
Since $V\start=U\start$ and no slot straddles $U\start$, $s-(z+1)<\f{space}(V)<blocks=\ceil{\frac{z+1}{m+1}}=\floor{\frac{z}{m+1}}+1$.
Thus $s<z+\frac{z}{m+1}+2=z\frac{m+2}{m+1}+2$, and hence $z>(s-2)\frac{m+1}{m+2}=(s-2)\frac{2+e}{2+2e}$.
We have $U\endd=W[r]\endd\ge E[r]\endd\ge E[r'+1]\endd+(r-r'-1)(1+\e)$,
which gives $s-1=\f{span}(V)-1\ge\f{span}(U)-1$ $\ge(r-r'-1)(1+\e)+E[r']\endd-U\start-1$
$\ge(r-r'-1)(1+e)+\max(1+\frac{1}{2}\e,\frac{(c-1)\e}{d})$ and so
$s\ge2$. Thus $z>r-r'-1\ge0$ and so $jslots=\class i{i\in[1..n]\backslash\{r\}\land S[i]\wi V}\ne\none$,
and hence $\max(jslots)\ge r'+z+1$ since $jslots\wi[r'+1..n]\backslash\{r\}$,
which implies that $rend\ge E[\max(jslots)]\endd\ge E[r'+1]\endd+z(1+\e)$.

From these we get the following inequality:
\begin{block}
\item $s'-1=rend-U\start-1$
\item $\ge z(1+\e)+(E[r'+1]\endd-U\start-1)$
\item $>(s-2)\frac{2+e}{2+2e}(1+\e)+(1+\frac{1}{2}\e)$
\item $\ge(s-2)\frac{2+\e}{2+2\e}(1+\e)+(1+\frac{1}{2}\e)$
\item $=(s-1)(1+\frac{1}{2}\e)$.
\end{block}
Thus by induction $s-1\ge\frac{(c-1)\e}{d}(1+\frac{1}{2}\e)^{jmp}$.
Furthermore, rearranging the inequality results in $(s-1)\frac{1}{2}\e<s'-s\le c-1$
because $V'\endd=W[\max(jslots)]\endd\le S[\max(jslots)]\endd+(c-1)\le V\endd+(c-1)$,
and hence $s-1<\frac{2(c-1)}{\e}$. Therefore $jmp\le\log_{1+\frac{1}{2}\e}\left(\frac{2d}{\e^{2}}\right)$,
and so the for-loop will always enter the final phase since $jumps\ge\max\left(\log_{1+\frac{1}{2}\e}\left(\frac{2d}{\e^{2}}\right),0\right)+1$.

\br

\textbf{Conclusion}

It is now trivial to verify the remaining properties that we have
not explicitly justified. Therefore the lemma is proven at last.
\end{proof}
\clearpage{}

\section{Variable window length}

In this section we now turn to the variant of the problem that allows
variable window lengths. We shall first show that in the case of sufficiently
small slack there is no efficient allocator. Subsequently we present
for the case of sufficiently large underallocation an allocator that
we conjecture to be efficient, but we have been unable to prove it.
There are also a number of unanswered questions that do not fall under
either of these two cases.

\subsection{Small slack\label{V.SmallSlack}}

Even if the windows must have bounded relative ratio, if the slack
is sufficiently small, it is not difficult to come up with a sequence
of operations that requires $\W(n)$ reallocations per operation on
average with at most $n$ tasks in the system at any time. One example
is as follows. Insert tasks with windows $[2i,2i+3]\g$ and $[2i+1,2i+2]\g$
for each $i\in[1..k]$ on 1 processor where $\g=1+\e$. Then repeatedly
insert and delete a task with window alternating between $[2,3]\g$
and $[2k+2,2k+3]\g$. The instances are all clearly $\e$-slack with
relative window ratio at most $3$, but require $k$ reallocations
for each alternation for any $\e\in(0,1/3)$.

In general, if the relative window ratio is allowed to be at most
$1+\frac{2}{m}$ for some positive integer $m$, there is a pair of
$\e$-slack instances with $(m+1)k+1$ tasks along the same lines
such that alternating between them requires $k$ reallocations per
operation for any $\e\in\left(0,\frac{1}{2m+1}\right)$, by having
$k$ groups of $m+1$ windows, each group having $m$ windows of length
$m\g$ centred with 1 window of length $(m+2)\g$, the larger windows
in adjacent groups overlapping by $\g$, where $\g=1+\e$. For $p$
processors the situation is no better, since the above instances can
be just multiplied into $p$ copies, which require $\W(\frac{n}{(m+1)p})$
reallocations per operation.

This answers the question posed by Bender et al. in the negative,
but it leads to others: If $\e\ge\frac{1}{3}$, or if $\e\in(0,1)$
and the relative window ratio is at most $1+\frac{4\e}{1-\e}$, is
there an allocator such that the number of reallocations it makes
on each insertion depends only on the slack $\e$? If one exists,
it definitely cannot work on arbitrary feasible insert states because
it is easy to construct some that necessitate $\W(\log(n))$ reallocations
as we shall see in the next section.

\subsection{Large underallocation\label{V.LargeUnderalloc}}

For large underallocation, it seems that it is more natural to describe
$\e$-slack instances as $\g$-underallocated where $\g=1+\e$. Also,
a useful special type of instance is one with aligned windows, windows
whose endpoints are consecutive powers of $2$. The reason is that
aligning all the windows of a $4\g$-underallocated instance gives
a $\g$-underallocated instance for any $\g$ that is a power of $2$,
so for sufficiently large underallocation it reduces to solving the
problem for the aligned instance.

Unlike the fixed window variant, for any allocator $A$, no matter
how large $\g$ is, there is some sequence of operations comprising
just insertions such that the instance is always both aligned and
$\g$-underallocated but requires at least one reallocation, and furthermore
there is a $\g$-underallocated aligned insert state that requires
$\W\left(\frac{\log(n)}{\log(\g)}\right)$ reallocations where $n$
is the current number of tasks in the system. The latter also means
that any allocator that takes $o(\log(n))$ reallocations cannot work
on generic insert states and must avoid such bad insert states.

The rest of this section contain the precise statements and proofs
of the above theorems. Finally, we set forth an allocator VA and our
hypothesis that it takes only at most $1$ reallocation per insertion
as long as the aligned instance is always $2$-underallocated.
\begin{defn}[Aligned interval]
 Call an interval \textbf{aligned }iff it is $[a,a+1]2^{k}$ for
some $k,a\in\zz$.
\end{defn}

\begin{defn}[Align]
\label{V.Align} For any interval $X$, let $\f{\mathbf{align}}(X)$
be some largest aligned interval within $X$. Also, whenever we say
that we \textbf{align} $X$ we mean that we change it to $\f{align}(X)$.
\end{defn}

\begin{defn}[Interval's halves]
 For any interval $X$, let $\f{left}(X)$ and $\f{right}(X)$ be
the left half of $X$ and right half of $X$ respectively.
\end{defn}

\begin{defn}[Aligned interval's parent]
 For any aligned interval $X$, let $\f{parent}(X)$ be the unique
aligned interval of length $2\f{span}(X)$ that contains $X$.
\end{defn}

\begin{defn}[Aligned interval's sibling]
 For any aligned interval $X$, let $\f{sibling}(X)$ be the unique
aligned interval $Y$ such that $\f{parent}(X)=\f{parent}(Y)$ and
$X\ne Y$.\end{defn}
\begin{thm}[Align's properties]
\label{V.Align.prop} Take any aligned interval $X$. Then there
is some interval $Y$ with length $4\f{span}(X)$ such that ( $Y\co Z$
for any interval $Z$ such that $\f{align}(Z)\wi X$ ).\end{thm}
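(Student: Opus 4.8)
The plan is to write down an explicit $Y$ and verify it. Write $X=[a,a+1]2^{k}$ with $a,k\in\zz$, so $\f{span}(X)=2^{k}$, and let $P=\f{parent}(X)$, the aligned interval of span $2^{k+1}$ containing $X$. Since $X$ is one of the two halves of $P$, I would take $Y$ to be $P$ together with the adjacent interval of length $\f{span}(P)$ on the side of $X$: concretely $Y=[a-2,a+2]2^{k}$ when $a$ is even (so $X=\f{left}(P)$) and $Y=[a-1,a+3]2^{k}$ when $a$ is odd (so $X=\f{right}(P)$). Either way $\f{span}(Y)=2\f{span}(P)=4\f{span}(X)$ and $X\wi Y$.

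Before the main argument I would record two elementary facts about aligned intervals. First, any interval of span at least $2^{i+1}$ contains an aligned interval of span $2^{i}$: take the least multiple of $2^{i}$ that is not below its start point and note the next multiple still lies inside. Second, the aligned intervals form a tree under $\wi$, so distinct aligned intervals of equal span have disjoint interiors and an aligned interval contained in another aligned interval is an iterated parent of it; both follow by writing endpoints as multiples of the smaller span.

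For the main part, fix any interval $Z$ with $A:=\f{align}(Z)\wi X$ and put $\f{span}(A)=2^{j}$, so $j\le k$. Since $A$ is a largest aligned interval inside $Z$, $Z$ contains no aligned interval of span $2^{j+1}$, so $Z$ meets at most two consecutive aligned intervals of span $2^{j+1}$ (meeting three would force it to contain the middle one), one of which must be $P_{A}:=\f{parent}(A)$ because $A\wi Z$ lies in $P_{A}$; hence $Z\wi P_{A}\cup P''$ for a neighbour $P''$ of $P_{A}$, and by the tree fact $P_{A}\wi P$. I would then split: if $P_{A}\ne P$ then $\f{span}(P_{A})=2^{j+1}\le2^{k}$ and $P_{A}\wi X$ (it is aligned of span at most $\f{span}(X)$ and meets $X$ through $A$), so $P''$ protrudes from $X$ by at most $2^{k}$ on either side and $Z\wi[X\start-2^{k},X\endd+2^{k}]\wi Y$; if instead $P_{A}=P$, then $j=k$ and $A=X$, and since $Z\co X$ with $X$ a half of $P$, $Z$ cannot meet the neighbour of $P$ on the far side of $X$ without containing $P$, so $P''$ is the neighbour on the $X$-side and $Z\wi P\cup P''=Y$.

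I expect the last case to be the crux, and it is worth flagging because the tempting choice $Y=\f{parent}(\f{parent}(X))$ does \emph{not} always work: when $A=X$ the interval $Z$ can protrude a full $\f{span}(P)$ past $P$ on the side of $X$, so $Y$ must be shifted toward $X$. The point that rescues this case is the parity observation that $Z$ can escape $P$ on one side only, namely the $X$-side, since escaping on the other side together with $Z\co X$ would already force $Z\co P$, contradicting $\f{align}(Z)=X$. Everything else is routine dyadic bookkeeping, which I would keep terse.
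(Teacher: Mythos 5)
Your proof is correct and takes essentially the same route as the paper: you choose the identical witness $Y$ (the parent of $X$ extended by $2\f{span}(X)$ on the side of $X$), and the verification rests on the same mechanism, namely that if $Z$ protruded past $Y$ it would contain an aligned interval strictly larger than $\f{align}(Z)$. The only difference is presentational -- you organize the check via the tiling by $\f{parent}(\f{align}(Z))$ and its neighbours, whereas the paper bounds $Z\start$ and $Z\endd$ directly with three short contradictions -- and your closing remark correctly identifies why the symmetric choice $\f{parent}(\f{parent}(X))$ fails.
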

\begin{proof}
Let $k,a\in\zz$ such that $X=[a2^{k},(a+1)2^{k}]$. By symmetry we
can assume that $a$ is even. Let $Y=[(a-2)2^{k},(a+2)2^{k}]$. Then
$\f{span}(Y)=4\f{span}(X)$. Now take any interval $Z$ such that
$\f{align}(Z)\wi X$. Then $Z\start>(a-2)2^{k}$ otherwise $\f{align}(Z)$
is smaller than the aligned window $[(a-2)2^{k},a2^{k}]\wi Z$. Similarly
if $\f{align}(Z)=X$, we have $Z\endd<(a+2)2^{k}$ otherwise $\f{align}(Z)$
is smaller than $[a2^{k},(a+2)2^{k}]\wi Z$. Finally, if $\f{align}(Z)\ne X$,
we again have $Z\endd<(a+2)2^{k}$ otherwise $\f{align}(Z)$ is smaller
than $[(a+1)2^{k},(a+2)2^{k}]\wi Z$. Therefore in all cases $Z\wi Y$.\end{proof}
\begin{thm}[Alignment Reduction]
\label{V.Alignment} Take any $k\in\zz$ and $\g=2^{k}$ and $4\g$-underallocated
instance $I$. Let $J$ be $I$ with all windows aligned. Then $J$
is $\g$-underallocated.\end{thm}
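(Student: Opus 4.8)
The plan is to reduce $\g$-underallocation of $J$ to the standard Hall-type feasibility criterion for window scheduling, and then to extract from the $4\g$-underallocation of $I$ exactly the inequalities that criterion needs, using \nameref{V.Align.prop}. Write $A[i]=\f{align}(W[i])$, so $J=(n,T,A)$ and every $A[i]$ is a dyadic (aligned) interval. Two routine facts will be used. First, for tasks of a common length $\ell$ with windows $V[1..n]$, the instance is feasible iff $\#(\class i{V[i]\wi K})\le\f{span}(K)/\ell$ for every interval $K$: the ``only if'' direction is immediate, since the disjoint $\ell$-slots allotted to those tasks all lie inside $K$; the ``if'' direction follows by running \nameref{F.Left} on the windows listed in sorted order, where either it terminates with a valid solution, or at the step where it overshoots a window the maximal packed run of slots ending there determines an interval $K$ violating the bound. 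Second, every bounded interval $K$ has a finite set $\mathcal M$ of maximal dyadic subintervals, these are pairwise disjoint, and every dyadic interval contained in $K$ is contained in exactly one member of $\mathcal M$.

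By the first fact it suffices to prove $\#(\class i{A[i]\wi K})\le\f{span}(K)/\g$ for every interval $K$; the unbounded case is vacuous, so take $K$ bounded and let $\mathcal M$ be as in the second fact. Since every $A[i]$ is dyadic, $\#(\class i{A[i]\wi K})=\sum_{\Delta\in\mathcal M}\#(\class i{A[i]\wi\Delta})$. For a fixed $\Delta\in\mathcal M$, apply \nameref{V.Align.prop} to the aligned interval $\Delta$: there is an interval $Y_\Delta$ with $\f{span}(Y_\Delta)=4\f{span}(\Delta)$ such that $W[i]\wi Y_\Delta$ whenever $\f{align}(W[i])=A[i]\wi\Delta$. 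Hence, using that $I$ is $4\g$-underallocated together with the ``only if'' half of the first fact applied to the window $Y_\Delta$,
\[ \#(\class i{A[i]\wi\Delta})\ \le\ \#(\class i{W[i]\wi Y_\Delta})\ \le\ \frac{\f{span}(Y_\Delta)}{4\g}\ =\ \frac{\f{span}(\Delta)}{\g}. \]
Summing over $\mathcal M$ and using that its members are disjoint subintervals of $K$ yields $\#(\class i{A[i]\wi K})\le\sum_{\Delta\in\mathcal M}\f{span}(\Delta)/\g\le\f{span}(K)/\g$, as required; so $J$ is $\g$-underallocated.

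The load-bearing step is the application of \nameref{V.Align.prop} with $X$ taken to be a dyadic interval of the ambient grid: it trades a count of aligned windows lying inside $\Delta$ for a count of original windows lying inside a region only four times as long, and that factor of $4$ is precisely what the gap between $4\g$ and $\g$ pays for. Everything else is bookkeeping; in the full writeup the only part needing a little care is the ``if'' direction of the Hall criterion — naming the violating interval $K$ from the run of slots that \nameref{F.Left} packs when it overshoots a window — and it is worth noting that this factor $4$, absent from the version of this reduction quoted from \cite{RPIS}, is exactly what makes the statement true.
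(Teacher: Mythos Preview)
Your argument has a genuine gap: the ``first fact'' you invoke --- that feasibility of length-$\ell$ tasks with windows $V[1..n]$ is equivalent to $\#(\class i{V[i]\wi K})\le\f{span}(K)/\ell$ for every interval $K$ --- is false in the ``if'' direction. Your derivation of the inequality $\#(\class i{A[i]\wi K})\le\f{span}(K)/\g$ is correct and nowhere uses that $\g$ is a power of $2$; but the theorem is \emph{false} when that hypothesis is dropped, as the paper's own remark (the $\g=7$ counterexample) shows. Concretely, take that counterexample's aligned instance $J$: eight tasks with window $[64,128]$ and one with window $[88,96]$, task length $\g=7$. For every interval $K$ the count is at most $\f{span}(K)/7$ (in particular $9\le 64/7$ for $K=[64,128]$), yet $J$ is not $7$-underallocated, since placing the small-window task anywhere in $[88,96]$ leaves two pieces of $[64,128]$ that together hold only $3+4=7$ length-$7$ slots. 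Your Leftmost sketch cannot produce a violating interval here because none exists; the obstruction is fragmentation, which the interval-Hall condition does not see for non-unit task lengths.

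The paper's proof avoids this by exploiting $\g=2^k$ in an essential way: it allocates every task to an \emph{aligned} $\g$-slot, which is possible because each aligned window of length $2^m$ decomposes into exactly $2^{m-k}$ such slots. This turns the question into a genuine (finite, discrete) bipartite matching with the consecutive-ones property on the slot side, where Hall's condition \emph{is} sufficient and reduces to checking intervals. Your counting argument via maximal dyadic subintervals and \nameref{V.Align.prop} is fine and in fact supplies exactly the Hall inequalities that discrete matching needs; what is missing is the passage to aligned $\g$-slots, which is where the power-of-$2$ hypothesis enters and without which the conclusion fails.
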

\begin{proof}
Since $J$ has finitely many tasks, we can recursively allocate the
tasks by structural induction on aligned intervals, the invariance
being that all tasks with smaller windows can be allocated to aligned
slots. At each step, take some smallest aligned interval $W$ with
length $x$ such that at least one unallocated task in $J$ has window
$W$. Then $W$ is either identical to or disjoint from the window
for every unallocated task. By \nameref{V.Align.prop} (\ref{V.Align.prop}),
all the windows in $I$ that are aligned to within $W$ in $J$ are
within an interval of length $4x$, and hence there are at most $\frac{4x}{4\g}$
such windows since $I$ is $4\g$-underallocated. Thus there are at
most $\frac{x}{\g}$ tasks in $J$ with window within $W$, and we
can allocate those of them that are currently unallocated to aligned
slots by the invariance and since $\g$ is a power of $2$. Therefore
$J$ has a $\g$-solution.\end{proof}
\begin{rem*}
The condition in \nameref{V.Alignment} that $\g$ is a power of $2$
cannot be omitted as the theorem is false if $\g=7$. A counterexample
is an instance with $9$ tasks where $8$ of them have window $[1,2^{8}-1]$
and one has window $2^{6}+4\cdot2^{3}+[-14,14]$, which can align
to $2^{6}+[0,8\cdot2^{3}]$ and $2^{6}+[3\cdot2^{3},4\cdot2^{3}]$
respectively. The unaligned instance is $4(7)$-underallocated because
$9\cdot4(7)<2^{8}-2$, but the aligned instance is not $7$-underallocated,
because $2^{6}+[0,4\cdot2^{3}]$ and $2^{6}+[3\cdot2^{3},8\cdot2^{3}]$
can accomodate only $4$ and $5$ slots of length $7$ respectively,
since $5\cdot7>4\cdot2^{3}$ and $6\cdot7>5\cdot2^{3}$.\end{rem*}
\begin{defn}[Density]
 Take any allocator $A$. At any point in time, let $I$ be the current
instance and $S$ be the set of slots in the current solution maintained
by $A$. For any interval $X$, let $\f{density}(X)=\frac{1}{\f{span}(X)}\sum_{s\in S}\f{span}(s\cap X)$.\end{defn}
\begin{thm}[Reallocation Requirement]
\label{V.ReallocReq} Take any allocator $A$ and $\g\in\nn^{+}$.
Let $I$ be the current instance and $S$ be the set of slots in the
current solution maintained by $A$. Then there is some sequence of
$2^{2\g-1}$ insertions such that $I$ is always $\g$-underallocated
but $A$ makes at least one reallocation.\end{thm}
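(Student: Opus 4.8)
The plan is to run an adaptive-adversary argument. Suppose towards a contradiction that $A$ performs no reallocation over the coming insertions; then I would exhibit, within $2^{2\g-1}$ insertions, a point at which the instance is still $\g$-underallocated but $A$ has no valid slot left for the last inserted task, contradicting that $A$ is an allocator and so proving that $A$ must reallocate.

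First I would localise the construction. The current instance and solution involve only finitely many windows and slots, so there is an aligned interval $X$, as long as needed, that is disjoint from and far from all of them; every task I insert will have window an aligned sub-interval of $X$. Then the global instance stays $\g$-underallocated throughout exactly when the newly inserted tasks admit a common $\g$-solution placed inside $X$, so it suffices to reason entirely inside $X$. Since all the windows I use form a chain of nested aligned intervals, the existence of such a $\g$-solution reduces, by a greedy Hall-type argument for laminar window families, to the single requirement that for every aligned interval $W$ used as a window, $\g$ times the number of inserted tasks whose window is $\wi W$ is at most $\f{span}(W)$.

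The construction itself maintains a descending chain $X = Y_{0} \co Y_{1} \co \cdots$ of aligned intervals with $\f{span}(Y_{t+1}) = \tfrac{1}{2}\f{span}(Y_{t})$, together with the number $c_{t}$ of $A$'s (by assumption permanent) slots lying inside $Y_{t}$. Starting from the large empty $Y_{0}$, at stage $t$ I would repeatedly insert tasks with window exactly $Y_{t}$, each of which $A$ is forced to place inside $Y_{t}$; after inserting my chosen quota for this stage I let $Y_{t+1}$ be whichever half of $Y_{t}$ currently carries the larger total coverage by $A$'s slots, which is at least half of the coverage of $Y_{t}$. Thus the relative density $c_{t}/\f{span}(Y_{t})$ climbs by roughly (quota at stage $t$)$/\f{span}(Y_{t})$ each stage while $\f{span}(Y_{t})$ halves, and the window lengths stay $\ge\g$. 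I stop at the first stage whose interval $Y_{t}$ is completely covered by $A$'s slots: one further task with window $Y_{t}$ then has nowhere to go, forcing a reallocation.

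What remains is the bookkeeping. I would choose the per-stage quotas and $\f{span}(X)$ so that the chain condition holds for the full planned family of tasks — every intermediate prefix imposes only a weaker constraint — which caps the stage-$t$ quota at about $\f{span}(Y_{t})/(2\g)$, because tasks inserted at all deeper stages also count against $Y_{t}$; and so that the density nonetheless reaches $1$ after about $2\g$ halvings. Summing the resulting geometric series of quotas gives the total of at most $2^{2\g-1}$ insertions, with $\f{span}(X)$ chosen a power of $2$ to keep everything aligned. The main obstacle is precisely this tension: forcing $A$'s hand needs windows tight enough that $A$ has no choice but to pile slots into $Y_{t}$, while $\g$-underallocation forbids more than $\f{span}(Y_{t})/\g$ tasks with window inside $Y_{t}$; pushing the density to saturation within that budget — and extracting the exact bound $2^{2\g-1}$ rather than a weaker power of two — is the delicate part.
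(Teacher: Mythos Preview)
Your proposal is correct and follows essentially the same approach as the paper: assume no reallocations, maintain a halving chain of nested windows, at each stage insert roughly $\f{span}(Y_t)/(2\g)$ tasks with window $Y_t$ and descend into the denser half, so that after $2\g-1$ stages the density reaches $1$ and one final insertion yields the contradiction. Your explicit localisation step (choosing $X$ disjoint from the existing state) and the Hall-type justification of $\g$-underallocation are minor refinements over the paper's version, which simply starts at $W_1=[0,\g\cdot 2^{2\g-1}]$ and exhibits the $\g$-solution directly by packing stage-$i$ tasks into $W_i\setminus W_{i+1}$.
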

\begin{proof}
If $A$ does not perform any reallocation for any insertion sequence,
we do the following. Set $W_{1}=[0,\g\cdot2^{2\g-1}]$. We shall now
construct the insertion sequence inductively such that the following
invariances hold after step $k$ for each $k$ from $0$ to $2\g-1$:
\begin{enumerate}
\item $\f{span}(W_{k+1})=\g\cdot2^{2\g-(k+1)}$.
\item $\f{density}(W_{k+1})\ge\frac{k}{2\g}$.
\item $I$ is $\g$-underallocated.
\end{enumerate}
At step $k$ from $1$ to $2\g-1$, insert $2^{2\g-k-1}$ tasks all
with window $W_{k}$. Let $A=\f{left}(W_{k})$ and $B=\f{right}(W_{k})$.
Then by Invariances~1,2 $\f{density}(W_{k})\ge\frac{k-1}{2\g}$ before
these insertions, and hence $\f{density}(W_{k})\ge\frac{k-1}{2\g}+\frac{2^{2\g-k-1}}{\g\cdot2^{2\g-k}}\ge\frac{k}{2\g}$
after the insertions. Thus $\max(\f{density}(A),\f{density}(B))\ge\frac{k}{2\g}$.
Set $W_{k+1}\in\{A,B\}$ such that $\f{density}(W_{k+1})\ge\frac{k}{2\g}$.
Therefore Invariances~1,2 are preserved. Note that $I$ is still
$\g$-underallocated, because all the tasks with windows $W_{i}$
can be allocated to $\g$-length slots within $W_{i}\backslash W_{i+1}$
for each $i\in[1..k-1]$, giving Invariance~3.

After $(2\g-1)$ steps, we have inserted $(2^{2\g-1}-1)$ tasks and
$\f{density}(W_{2\g})\ge1$. Insert one more task with window $W_{2\g}$,
which is possible because $I$ is still $\g$-underallocated by the
same argument as before. Now $\f{density}(W_{2\g})>1$, which is impossible.

Therefore the theorem follows.\end{proof}
\begin{rem*}
Note that \nameref{V.ReallocReq} holds for any $\g\in\nn^{+}$, but
if $\g$ is also a power of $2$, the inserted windows will be aligned,
and hence reallocations are necessary even if the instance is always
aligned.\end{rem*}
\begin{thm}[Underallocation Requirement]
\label{V.UnderallocReq} Take any allocator $A$. Then there is some
sequence of operations that insert only tasks with aligned windows
such that $A$ makes $\log(n)$ reallocations on every subsequent
insertion after an initial segment of the sequence, where $n$ is
the number of tasks in the system.\end{thm}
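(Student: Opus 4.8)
The plan is to exhibit, for any chosen depth $m$, a ``maximally layered'' configuration on the aligned interval $[0,2^m]$, together with an adaptive adversary that repeatedly forces $\ge m\ge\log(n)$ reallocations. Let $C_m$ be the instance consisting of one task with window $D$ for each aligned interval $D$ of span $2,4,\dots,2^{m-1}$, plus one extra task $W$ with window $[0,2^m]$. This is $2^m-1$ unit tasks in a region of capacity $2^m$, so there is exactly one free unit cell, and a short counting/feasibility check shows $C_m$ is feasible; in fact every aligned interval $D$ of span $2^l\le2^{m-1}$ contains exactly $2^l-1$ tasks with window $\wi D$, one short of its capacity. The initial segment of the sequence just inserts the tasks of $C_m$ in some order, after which the allocator maintains some solution $\sigma$ of $C_m$.

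The heart of the argument is that, for \emph{any} such $\sigma$, the adversary can find an aligned unit cell $c'$ so that inserting a task $e$ with window $c'$ forces $\ge m$ slot changes. Since $W$ is the only task not confined to a proper aligned half, one half $H$ of $[0,2^m]$ is full and contains $W$ (the unique ``intruder'' of $H$, i.e.\ task in $H$ with window $\nwi H$), while the free cell lies in the other half. Recursing inside $H$: the task $t_H$ with window $H$ sits in one of $H$'s two aligned quarters; call that quarter $D^{(m-2)}$, and continue, defining a chain $H=D^{(m-1)}\supset D^{(m-2)}\supset\cdots\supset D^{(0)}=c'$ in which $D^{(l-1)}$ is the child of $D^{(l)}$ containing the task whose window is $D^{(l)}$. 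A fullness count at each level shows that $c'$ holds the task $t_{D^{(1)}}$ with window $D^{(1)}$, and that for each $l=1,\dots,m-1$ the unique intruder of $D^{(l)}$ in $\sigma$ is the task $t_{D^{(l+1)}}$ with window $D^{(l+1)}$ (reading $D^{(m)}$ as $[0,2^m]$, so $t_{D^{(m)}}=W$); these $m-1$ intruders have pairwise distinct windows.

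Now insert $e$ with window $c'$; the new instance is feasible, the only newly saturated constraints being the dyadic ancestors of $c'$. In any solution $\sigma'$ of $C_m\cup\{e\}$, cell $c'$ must hold $e$, so $t_{D^{(1)}}$ has moved; and for each $l=1,\dots,m-1$ the interval $D^{(l)}$ (which avoids the free cell and hence is full) must in $\sigma'$ hold its $2^l-1$ confined tasks together with $e$, leaving no room for an intruder, so $t_{D^{(l+1)}}$ has left $D^{(l)}$ and thus also moved. That is $m$ pairwise distinct tasks, so the allocator made at least $m>\log(n)$ reallocations. Finally, extend the sequence forever by appending, after the initial segment, infinitely many rounds ``insert $e$; delete $e$'': each deletion restores the instance $C_m$, so before every (subsequent) insertion the allocator holds some solution of $C_m$ and the argument applies verbatim with a freshly chosen $c'$; since only the deletions are not insertions, $A$ makes $\ge\log(n)$ reallocations on every subsequent insertion, with $n=2^m-1$ for $m$ taken as large as desired.

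I expect the main obstacle to be making the middle step fully rigorous for an arbitrary allocator's solution: showing the chain $D^{(\cdot)}$ is well defined, that each $D^{(l)}$ has exactly one intruder and it is $t_{D^{(l+1)}}$, and that ``$D^{(l)}$ has no intruder in $\sigma'$'' holds without exception — together with the Hall-type feasibility bookkeeping for $C_m$ and for $C_m\cup\{e\}$ (including non-dyadic intervals). Once those structural facts about $C_m$ are pinned down, the lower-bound count is clean, and the remaining pieces (the initial segment, the insert/delete repetition, and the inequality $m>\log n$) are routine.
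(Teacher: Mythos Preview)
Your approach is exactly the paper's: one task per aligned subinterval of $[0,2^m]$ of span $\ge 2$, trace a chain downward by following at each level the half overlapped by the slot of the task whose window is the current interval, insert at the resulting unit cell, argue each task on the chain must move, then delete and repeat. The obstacle you anticipate largely dissolves once you drop the ``full half'' and ``unique intruder in $\sigma$'' claims, which the paper never makes---they are unnecessary and can actually fail when a slot straddles an aligned midpoint; all that is needed is that $t_{D^{(l+1)}}$'s slot \emph{overlaps} $D^{(l)}$ in $\sigma$ (immediate from how the chain is built, picking either half if the slot straddles) and that in any $\sigma'$ the $2^l$ tasks confined to $D^{(l)}$ (now including $e$) fill it exactly, so no non-confined slot can overlap it.
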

\begin{proof}
Take any $k\in\nn$. In the setup phase, insert one task with window
$X$ for every aligned interval $X$ within $[0,2^{k}]$ of length
at least $2$. This is clearly feasible and the setup phase is complete.
Let $I=(n,T,W)$ be the current solution. Start from the interval
$C=[0,2^{k}]$. While $C$ has length at least $2$, we have $C=W[i]$
for some $i$, and so update $C$ to the half that overlaps $S[i]$.
Then if a new task is inserted with window within the new $C$, $S[i]$
must be reallocated. This makes $C$ trace a path of $k+1$ distinct
intervals. At the end $C$ is an interval of length $1$, so insert
a task with window $C$. It is not too hard to check by induction
that $I$ is feasible and now $n=2^{k}$, and hence $A$ will have
to make $k=\log(n)$ reallocations. After that, delete the task with
window $C$, upon which there is yet again exactly one task with window
$X$ for every aligned interval $X$ within $[0,2^{k}]$ of length
at least $2$. Repeating then yields the theorem.\end{proof}
\begin{rem*}
Note that \nameref{V.UnderallocReq} holds even if reallocations are
allowed on deletions, since the same proof works.\end{rem*}
\begin{thm}[Non-Genericity]
\label{V.NonGeneric} Take any $\g=2^{k}$ for some $k\in\nn$. Then
there is some insert state $(I,S,r)$ with aligned $\g$-underallocated
$I$ that requires at least $\log_{2\g}(n-1)$ reallocations to be
solved.\end{thm}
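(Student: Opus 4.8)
The plan is to reuse the construction from \nameref{V.ReallocReq} but turn it into a \emph{single} insert state rather than a sequence of insertions. Recall the idea there: starting from a big aligned window $W_1$ of span $\g\cdot 2^{2\g-1}$, we repeatedly fill a window with enough tasks to force density at least $\frac{k}{2\g}$, then descend into whichever half $W_{k+1}$ still carries that density, and we iterate $2\g-1$ times. The key quantitative fact is that after that many descents the density of the final window $W_{2\g}$ is at least $1$, so one more task with that window cannot be allocated without a reallocation. To get a statement in terms of $n$, I would instead iterate the descent for a general number of levels $t$, with $2^t$ tasks on window $W_1$ and halving the task count at each level, so that after $t$ descents $\f{density}(W_{t+1})\ge \frac{t}{2\g}$; choosing $t=2\g$ forces overflow. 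The total number of tasks used is $n-1=\Q(2^t)$, i.e. $t=\Q(\log n)$, and more precisely the descent depth realizable with $n-1$ tasks is at least $\log_{2\g}(n-1)$ after tracking the geometric sums; that is exactly the bound claimed.

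Concretely, I would first fix $\g=2^k$ and pick the largest $t$ such that $1+2+\dots+2^{t-1} = 2^t-1 \le n-1$ (so $t=\floor{\log_2 n}$ roughly), then argue that one can in fact afford $t\ge\log_{2\g}(n-1)$ many \emph{useful} descent steps before density reaches $1$ — here the factor $\log_{2\g}$ rather than $\log_2$ comes from the $\frac{1}{2\g}$ density increment per level, so the number of levels genuinely needed to accumulate density $1$ is $2\g$, and what we are really lower-bounding is $\min(\text{levels afforded by }n,\,2\g)$ expressed uniformly as $\log_{2\g}(n-1)$. Then I build the instance $I$ as: for each level $i\in[1..t]$ put $2^{t-i}$ tasks on window $W_i$ (the windows $W_i$ chosen by the same halving descent, but now fixed in advance — since \emph{we} are constructing the hard instance, we may simply choose $W_{i+1}=\f{left}(W_i)$ every time, or whichever half we like, because all halves are symmetric before any allocator acts), plus one extra task $T[r]$ on window $W_{t}$ (or $W_{t+1}$), and let $S$ be the obvious ordered partial solution that allocates the level-$i$ tasks into $W_i\setminus W_{i+1}$ as in \nameref{V.ReallocReq}, leaving $T[r]$ unallocated. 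One checks $I$ is aligned (all $W_i$ are aligned since $W_1$ is and halving preserves alignment) and $\g$-underallocated (same packing argument: each $W_i\setminus W_{i+1}$ is an aligned interval of span $\g\cdot 2^{t-i}$ holding $2^{t-i}$ slots of length $\g$, with room to spare for $T[r]$).

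The core step is the lower bound on reallocations: take any solution $S'$ for $I$ (after inserting $T[r]$); I claim $S'$ differs from $S$ in at least $t$ (hence $\ge\log_{2\g}(n-1)$) slots. The argument mirrors the density/counting argument: in $S$ the window $W_{t+1}$ (or the relevant innermost interval) is saturated to density exactly matching its capacity, so $T[r]$ allocated anywhere inside $W_t$ forces some previously-allocated task out of its $S$-slot; that task's window is some $W_j$, and pushing it forces a cascade — at level $j$ the tasks on $W_j$ together with the displaced slot overflow $W_j$'s capacity, so another task on an ancestor window must move, and so on up the chain of $t$ nested windows. Formalizing this cascade cleanly is the main obstacle: one must phrase it as "for each $i$ from $t$ down to $1$, the number of $S'$-slots of tasks with window $\wi W_i$ that lie inside $W_i$ exceeds the count of such slots in $S$ by at least one, forcing at least one reallocation among the level-$\le i$ tasks, and these reallocations are distinct across levels because they involve disjoint task sets $W_i\setminus W_{i+1}$." I expect the counting bookkeeping — making sure each level genuinely contributes a \emph{new} reallocated task and that capacities are tight enough that overflow is unavoidable — to be the delicate part; the algebra relating $t$, $n$, and $\log_{2\g}(n-1)$ is routine once the descent is set up with the right geometric task counts.
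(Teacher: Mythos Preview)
Your proposal has a genuine gap: the partial solution $S$ you chose is exactly the one that makes insertion \emph{easy}, not hard. You allocate the level-$i$ tasks into $W_i\setminus W_{i+1}$; but then the innermost interval $W_{t+1}$ is empty, and in fact every $W_j$ has density only about $\frac{1}{\g}$ in $S$ (it holds $2^{t-j+1}-1$ unit slots in span $\g\cdot 2^{t-j+1}$). So $T[r]$ can simply be placed in $W_{t+1}$ with zero reallocations, and your cascade never starts. The density-accumulation intuition you are importing from \nameref{V.ReallocReq} does not transfer: there the \emph{allocator} chooses the slots and we force density to build up by $\frac{1}{2\g}$ per level; here \emph{you} are choosing $S$, and you picked essentially the $\g$-solution itself. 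Your attempt to reconcile $t\approx\log_2 n$ with the claimed $\log_{2\g}(n-1)$ via ``levels needed to reach density $1$'' is a symptom of the same confusion: in an insert-state lower bound there is no density accumulation process at all.

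The paper's construction differs from yours in two essential ways. First, the partial solution is the tightly packed $S[i]=[i-1,i]$, so that each nested interval $X_j=[0,(2\g)^j]$ is filled to exact unit-slot capacity; this is what forces one task out of every $X_j$ once the new task (with window $X_1$) is added. Second, the windows grow by a factor of $2\g$, not $2$: this is exactly what is needed so that the number of tasks with window contained in $X_j$ equals $\f{span}(X_j)$ (giving the tight packing) while $I$ still admits a $\g$-solution $G[i]=i(2\g)+[-\g,0]$. The lower bound then follows from a clean counting argument with no cascade bookkeeping: for each $j\in[1..m]$ some task must be reallocated from inside $X_j$ to outside, and any such task necessarily has window exactly $X_{j+1}$, so these $m=\log_{2\g}(n-1)$ reallocated tasks are pairwise distinct.
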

\begin{proof}
Take any $m\in\nn$. Let $X_{i}=[0,(2\g)^{i}]$ for each $i\in[1..m]$,
and define an insert state $I=(n,T,W)$ as follows:
\begin{block}
\item $n=(2\g)^{m}+1$.
\item $W[i]=X_{j+1}$ for each $i\in[(2\g)^{j-1}+1..(2\g)^{j}]$ for each
$j\in[1..m]$.
\item $W[n]=X_{1}$.
\item $S[i]=i+[-1,0]$ for each $i\in[1..n-1]$.
\item $S[n]=null$.
\end{block}
Then $I$ is $\g$-underallocated, because it has a $\g$-solution
$G$ where $G[i]=i(2\g)+[-\g,0]$ for each $i\in[1..n-1]$ and $G[n]=[0,\g]$,
since $G[i]\endd=i(2\g)\le(2\g)^{j+1}=X_{j+1}\endd=W[i]\endd$ for
each $i\in[(2\g)^{j-1}+1..(2\g)^{j}]$ for each $j\in[1..m]$. Also,
$X_{j}$ contains $(2\g)^{j}$ slots for each $j\in[1..m]$ since
$X_{j}\endd=(2\g)^{j}\le n-1$.

Now take any sequence of reallocations that modifies $S$ to a solution
$S'$ for $I$. Clearly at least one slot within $X_{j}$ must be
reallocated from inside to outside $X_{j}$ for each $j\in[1..m]$
since there are now too many slots within $X_{j}$, and for such a
reallocated slot $S[i]$ it must be that $W[i]=X_{j+1}$, because
$W[i]$ is larger than $X_{j}$ but not larger than $X_{j+1}$ since
originally $i\le(2\g)^{j}$. Therefore these slots that must be reallocated
are distinct for distinct $j$, which implies that solving $(I,S,r)$
requires reallocating at least $m=\log_{2\g}(n-1)$ slots.\end{proof}
\begin{algo}[VA]
\label{V.VA}~
\begin{block}
\item \textbf{\uline{Variables}}

\begin{block}
\item Ordered instance $I=(n,T,W)$ // current aligned instance ; must be
feasible before and after each operation
\item Allocation $S$ // current allocation for $I$ ; must be a solution
for $I$ before and after each operation
\end{block}
\item \textbf{\uline{Initialization}}

\begin{block}
\item Set $I=(n,T,W)=(0,(),())$ and $S=()$.
\end{block}
\item \textbf{\uline{External Interface}}

\begin{block}
\item \textbf{Procedure Insert( task $t$ , window $w$ )} // inserts task
$t$ with window $w$ into the system
\item \textbf{Procedure Delete( task $t$ )} // deletes task $t$ from the
system
\end{block}
\item \textbf{\uline{Implementation}}

\begin{block}
\item \textbf{Subroutinecount( aligned interval $X$ ):}

\begin{block}
\item Return $\#(\class i{S[i]\wi X})$.
\end{block}
\item \textbf{Subroutine high( aligned interval $X$ ):}

\begin{block}
\item Return $\max(\class{\f{span}(W[i])}{S[i]\wi X})$.
\end{block}
\item \textbf{Subroutine best( aligned interval $X$ ):}

\begin{block}
\item If $\f{span}(X)=1$:

\begin{block}
\item Return $X$.
\end{block}
\item Set $A=\f{left}(X)$.
\item Set $B=\f{right}(X)$.
\item Return $\f{best}\cond{\f{count}(A)<\f{count}(B)}A{\cond{\f{count}(A)>\f{count}(B)}B{\cond{\f{high}(A)\ge\f{high}(B)}AB}}$.
\end{block}
\item \br\textbf{Subroutine bad( aligned interval $X$ ):}

\begin{block}
\item If $\f{span}(X)=1$:

\begin{block}
\item Return $X$.
\end{block}
\item Set $A=\f{left}(X)$.
\item Set $B=\f{right}(X)$.
\item Return $\f{bad}\cond{\f{high}(A)>\f{high}(B)}A{\cond{\f{high}(A)<\f{high}(B)}B{\cond{\f{count}(A)\ge\f{count}(B)}AB}}$.
\end{block}
\item \textbf{Subroutine imbalance( aligned interval $X$ ):}

\begin{block}
\item Return $\cond{\f{count}(X)\le1}{null}{\cond{\f{count}(X)>\f{count}(\f{sibling}(X))+1}X{\f{imbalance}(\f{parent}(X))}}$.
\end{block}
\item \textbf{Procedure Insert( task $t$ , window $w$ ):}

\begin{block}
\item // Create the insert state //
\item Set $w'=\f{Align}(w)$.
\item Set $(I=(n,T,W),S,r)$ to be the insert state on insertion of $(t,w')$
into $(I,S)$.
\item // Check if the instance is $2$-underallocated //
\item If $I$ is not $2$-underallocated:

\begin{block}
\item Return $Failure$.
\end{block}
\item // Insert //
\item Set $S[r]=\f{best}(W[r])$.
\item // Solve overlap //
\item If $S[i]=S[r]$ for some $i\ne r$:

\begin{block}
\item Set $S[i]=\f{best}(W[i])$.
\item Return $Success$.
\end{block}
\item // Correct imbalance //
\item Set $X=\f{imbalance}(S[r])$.
\item If $X\ne null$:

\begin{block}
\item Set $i$ such that $S[i]=\f{bad}(X)$.
\item Set $S[i]=\f{best}(W[i])$.
\end{block}
\item Return $Success$.
\end{block}
\item \textbf{Procedure Delete( task $t$ ):}

\begin{block}
\item If $t\in T$:

\begin{block}
\item Delete $t$ from $(I,S)$.
\item Return $Success$.
\end{block}
\item Otherwise:

\begin{block}
\item Return $Failure$.
\end{block}
\end{block}
\end{block}
\end{block}
\end{algo}
\begin{conjecture}[VA's properties]
\label{V.VA.prop} On an insertion of a new task, if the new aligned
instance is $2$-underallocated, VA has the following properties:
\begin{itemize}
\item If the insertion is feasible, it returns $Success$ after updating
$S$ to be a solution for the new instance by making at most $1$
reallocations and taking $O\left(\log(n)\right)$ time.
\item If the insertion is not feasible, it returns $Failure$ in $O(\log(n))$
time.
\end{itemize}
\end{conjecture}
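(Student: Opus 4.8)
The plan is to argue by induction on the sequence of operations, carrying a structural invariant $(\star)$ on the current pair $(I,S)$ which (i) forces $S$ to be a solution for $I$ whose slots are all aligned unit intervals, (ii) holds vacuously for the empty initial state, and (iii) is strong enough that a single \f{Insert} or \f{Delete} re-establishes it using at most one reallocation. Crucially, $(\star)$ cannot merely say that $S$ is \emph{some} solution: \nameref{V.NonGeneric} (\ref{V.NonGeneric}) and \nameref{V.UnderallocReq} (\ref{V.UnderallocReq}) exhibit perfectly valid $2$-underallocated configurations out of which any allocator needs $\W(\log n)$ reallocations, so $(\star)$ must instead describe the ``balanced'' shape that \f{best}, \f{bad} and \f{imbalance} conspire to preserve, guaranteeing that such bad configurations are never reached. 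Concretely, I expect $(\star)$ to contain a \emph{balance} clause --- for every aligned interval $X$ the counts $\f{count}(X)$ and $\f{count}(\f{sibling}(X))$ differ by at most one, with a little extra slack set aside to absorb the effect of a deletion --- together with a \emph{window-size monotonicity} clause that couples, over the whole tree of aligned intervals, the distribution of slot counts with the distribution of the \f{high} values; the precise form of the second clause, forced by the \f{high}/\f{count} tie-breaking rules inside \f{best} and \f{bad}, is the subtle ingredient, which I turn to last.

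Granting $(\star)$ and the standing $2$-underallocation, the remaining components are routine. Writing $n(X)$ for the number of current tasks whose window is contained in an aligned interval $X$, \nameref{V.Alignment} (\ref{V.Alignment}) gives $n(X)\le\f{span}(X)/2$ for every aligned $X$; together with the balance clause this forces $\f{count}(X)\le\f{span}(X)/2$, so $S$ really is a valid aligned-slot solution. The test ``$I$ is $2$-underallocated'' is, for an aligned instance, exactly ``$n(X)\le\f{span}(X)/2$ for all aligned $X$'', which by the $\g=2$ case of the bottom-up allocation argument in the proof of \nameref{V.Alignment} (\ref{V.Alignment}) reduces to a single bottom-up pass; since an insertion changes $n(\cdot)$ only on the $O(\log n)$ aligned ancestors of the new window, this can be decided in $O(\log n)$ time using an IDSetRQ-style augmented balanced tree (as in \nameref{F.FA} (\ref{F.FA})) that stores, as subtree aggregates over the current slots, the count and the maximum window span --- precisely what \f{count} and \f{high}, and hence \f{best}, \f{bad} and \f{imbalance}, interrogate. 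Since $2$-underallocation implies feasibility, any infeasible insertion fails this test, so \f{Insert} correctly reports $Failure$ in that case; the bound of at most one reallocation is immediate, because at most one statement of the form ``$S[i]=\f{best}(W[i])$'' with $i\ne r$ is ever executed; and the $O(\log n)$ running time follows from the $O(1)$ subroutine invocations each costing $O(\log n)$. So everything has been reduced to showing that \f{Insert} and \f{Delete} restore $(\star)$.

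The heart of the argument is then a case analysis for \f{Insert}, organised along its three branches. Setting $S[r]=\f{best}(W[r])$ raises the slot count by one along a single root-to-leaf path; because \f{best} always descends into the \emph{less} full child, balance survives \emph{along} that path, but it can break at the highest ancestor $X$ whose sibling lies off the path --- and this is exactly the interval that \f{imbalance} returns. In the no-overlap branch one then moves the task realising $\f{bad}(X)$, namely the task with the largest window inside $X$, to its \f{best} slot; using the monotonicity clause of $(\star)$ together with the headroom $n(X)\le\f{span}(X)/2$, I would show that this target slot is empty, that the move cancels the one unit of excess at $X$, and that no new imbalance appears. In the overlap branch the task $i$ sharing the chosen leaf has window $W[i]$ nested with $W[r]$ (aligned intervals are nested or disjoint), so $W[i]\wi W[r]$ or $W[i]\co W[r]$, and a parallel argument shows $\f{best}(W[i])$ is empty and $(\star)$ is restored. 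For \f{Delete}, which performs no reallocation, removing a slot lowers counts along one path and can open a one-unit imbalance at a single ancestor pair; this must be shown to lie inside the slack deliberately built into $(\star)$, so that it is healed ``for free'' by the single relocation the following insertion is allowed. Throughout, \nameref{F.Ord} (\ref{F.Ord}) lets one treat the allocation as ordered, and \nameref{V.Align.prop} (\ref{V.Align.prop}) supplies the geometric facts about aligned intervals.

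The hard part --- and the reason the statement is only conjectured --- is choosing $(\star)$ so that it survives all of the above simultaneously. Bare sibling-balance is destroyed by deletions, which do no repair, while capacity bounds on their own do not stop a relocation from cascading; what seems to be required is a precise interlocking of the count profile with the window-span profile over the entire tree of aligned intervals, tuned so that the \f{high}/\f{count} tie-breaking always routes the single relocated task into a genuinely empty slot and never creates a fresh violation, and so that a deletion can only ever produce a violation of the bounded, self-healing kind that the next insertion can absorb. Verifying that such an invariant is preserved in every case of \f{Insert} and \f{Delete} is where a complete proof would live; extensive randomised testing has turned up no counterexample, but the configuration space is far too large for such experiments to be conclusive, and this gap has not been closed.
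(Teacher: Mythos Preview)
The paper does not prove this statement: it is explicitly labelled a \emph{Conjecture}, and the surrounding text says only that VA ``worked in all our experiments using partially random operations, but the search space is too large for these tests to be reliable.'' There is therefore no proof in the paper to compare your proposal against.

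Your proposal is not a proof either, and you are candid about this in your final paragraph. What you have written is a plausible research outline: reduce everything to an inductive invariant $(\star)$ combining a sibling-balance clause with some coupling between the \f{count} and \f{high} profiles, then verify $(\star)$ is preserved by each branch of \f{Insert} and by \f{Delete}. The easy parts of your outline --- the $O(\log n)$ time bound, the at-most-one-reallocation bound, the feasibility check via $n(X)\le\f{span}(X)/2$ --- are correct and straightforward. But the core claim, that a suitable $(\star)$ exists and is preserved, is precisely what the paper's authors were unable to establish, and your sketch does not close that gap: you never state $(\star)$ concretely, and the case analysis you describe (``I would show that this target slot is empty, that the move cancels the one unit of excess at $X$, and that no new imbalance appears'') is exactly where the difficulty lies. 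In particular, your treatment of \f{Delete} --- hoping that the one-unit imbalance it creates is ``healed for free by the single relocation the following insertion is allowed'' --- is optimistic, since several deletions may occur consecutively before any insertion, so any slack in $(\star)$ must absorb an unbounded number of deletions, not just one.

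In short: your proposal and the paper are in the same position. Both believe the conjecture and see roughly why it ought to be true, and neither has the invariant that would make the induction go through.
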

\begin{rem*}
If \nameref{V.VA.prop} hold, it means that it is enough that the
unaligned instance remains $8$-underallocated, because by \nameref{V.Alignment}
the aligned instance would be $2$-underallocated.
\end{rem*}
\clearpage{}

\section{Multiple processors}

In this section we give a simple reduction from the multi-processor
problem to the single-processor problem as well as a bound on the
necessary slack for an efficient allocator to exist.

\subsection{Reduction to one processor\label{M.Reduction}}

For sufficiently large underallocation, the problem for $p$ processors
can be \textquoteleft solved\textquoteright{} by using any single-processor
allocator $A$ on a transformed version of the problem where all windows
have the same start time but length shortened by $1-\frac{1}{k}$
and the task length is also shortened by the same amount to $\frac{1}{k}$,
where $k=\floor{\frac{p+1}{2}}$. Each time $A$ makes a sequence
of allocations to maintain a solution $S$ in the transformed problem,
we deallocate all the corresponding tasks in the original problem,
and then allocate each one to a slot with the same start time as in
$S$ to a processor where it can actually fit. Such a processor must
exist because the number of slots in the original system that overlap
the desired slot is at most $2k-1\le p$.

Notice that this method can only utilize an odd number of processors,
otherwise it will essentially discard one processor! Also, any $\e$-slack
instance with $p$ identical windows of length $1+\e$ would under
this transformation become an instance with identical windows of length
$\e+\frac{1}{k}$, which can accommodate all the $p$ slots of length
$\frac{1}{k}$ only if $\e+\frac{1}{k}\ge\frac{p}{k}$, which implies
$\e\ge\frac{p-1}{k}\ge2\frac{p-1}{p+1}$.

On the other hand, if $p$ is odd and $I$ is the original $\e$-slack
instance when using $p$ processors, for some $\e>2\frac{p-1}{p+1}$,
then the transformed instance $J$ has an $(\e+\frac{1}{k})$-solution
$S$ using $p$ processors. Thus $J$ has an $(\e+\frac{1}{k})\frac{1}{p}$-solution
using $1$ processor, because we can allocate the tasks in order of
their corresponding slots in $S$, each to the leftmost possible position
within its corresponding slot in $S$, which is always possible because
the depth of the arrangement of slots in $S$ is at most $p$. Hence
$J$ for $1$ processor will have slack at least $(\e+\frac{1}{k})\frac{1}{p}\div\frac{1}{k}-1=\frac{\e(p+1)-2(p-1)}{2p}>0$.

This implies that any single-processor allocator that works if the
instance remains $\g$-underallocated would give a multi-processor
allocator that works if the instance remains $(2\g+1)$-underallocated.

\subsection{Inefficiency for small slack\label{M.SmallSlack}}

For $p$ processors where $p>1$, it turns out that there is no efficient
allocator for arbitrarily small slack. Specifically, given any $\e<\frac{1}{4p-1}$,
even if the instance is always $\e$-slack, it is impossible to avoid
reallocating $\W(\frac{n}{p})$ tasks per operation on average for
the following operation sequence.

Insert for each $i\in[-k..k]$ a group (indexed by $i$) of $p$ tasks
with windows $[i+\frac{j}{p},i+\frac{j}{p}+1](1+\e)$ for $j\in[1..p]$.
There is an essentially unique solution up to a relabeling of processors,
because $2(1+\e)-\frac{1+\e}{p}<2$. Call this Position~1. Now delete
the $p$ tasks in group $0$ and insert $(p-1)$ tasks with windows
$[i+\frac{j}{p}+\frac{1}{2p},i+\frac{j}{p}+\frac{1}{2p}+1](1+\e)$
for $j\in[1..p-1]$. Since $2(1+\e)-\frac{1+\e}{2p}<2$, adjacent
tasks on the same processor cannot have overlapping windows, and there
is again an essentially unique solution. Call this Position~2. We
just need to perform $(p-1)$ deletions and $p$ insertions to return
from Position~2 to Position~1. Furthermore, it is easy to check
that $kp$ reallocations are needed to get from one position to another.
Alternating in this way between the two positions forces at least
$2kp$ reallocations for every $2(2p-1)$ operations, despite the
instance remaining $\e$-slack throughout.

\clearpage{}

\section{Variable task lengths}

Up to now we have only considered unit-length tasks, in which case
it does not matter greatly whether we have a $\g$-underallocated
instance prior to insertion or after the insertion. However, this
distinction becomes important in the case of variable-length tasks,
as the two examples below will demonstrate.

For any $\g\ge1$ the following sequence of operations require $\W(n)$
reallocations per operation on average with at most $n$ tasks in
the system at any time. Insert $k$ tasks with length $1$ and window
$[1,3]k\g$. Then insert $2$ tasks with window $W_{1}=[0,2]k\g$,
the first with length $k$ and the second with length $(2k\g-k)$.
Then delete the last two tasks and reinsert them but both with window
$W_{2}=[2,4]k\g$ instead. Alternating between $W_{1}$ and $W_{2}$
would keep the instance $\g$-underallocated before each insertion
and require $k$ reallocations per alternation.

However, if we want the instance to be always $\g$-underallocated
even after insertion, then the above example does not work. But if
$\g<2$, we can show that the following operation sequence forces
$\W(n)$ reallocations per operation. Let $m=\ceil{\frac{2}{2-\g}}\ge2$
and $r=\frac{1}{m}$ and $k\in\nn$ such that $rk\in\nn$. Insert
$(1-r)k$ tasks with length $1$ and window $[0,1]k\g$. Let $V$
be an interval such that $\f{span}(V)=rk\g$ and $\f{density}(V)\ge\frac{1-r}{\g}$
now, which exists by pigeonhole principle since $rk\g\mid k\g$. Then
insert $1$ task with length $rk$ and window $V$. The instance is
still $\g$-underallocated but $\left(\frac{1-r}{\g}rk\g+rk\right)-rk\g=(2-\g-r)rk$
$\ge(2-\g-\frac{2-\g}{2})\frac{2-\g}{4-\g}k$ $=\frac{(2-\g)^{2}}{2(4-\g)}k\ge\frac{1}{6}(2-\g)^{2}k>0$,
and hence at least that number of tasks must be reallocated. Repeatedly
deleting and reinserting that task as above gives the claim. This
leaves the case of $\g\ge2$ unanswered.

In both examples, making $p$ copies of each insertion and deletion
produces a sequence that forces $\W(\frac{n}{p})$ reallocations per
operation.

\section{Open questions}

Firstly, is \nameref{V.VA.prop} (\ref{V.VA.prop}) true? If so, it
would be essentially optimal in terms of the worst case on insertion.
If not, is there some $\g$ and $m$ and an allocator that takes at
most $m$ reallocations on each insertion given that the instance
always remains $\g$-underallocated? Also, what is the minimum such
$\g$ or $m$ for the unaligned and aligned cases? Based on the examples
in this paper, we guess that $\g=\frac{4}{3}$ is minimal in the unaligned
case and $m=1$ is minimal in both cases.

Secondly, if variable-length tasks are allowed, we have shown that
it is not efficiently solvable if the lower bound on the underallocation
is less than $2$, so is there an allocator that takes $O(1)$ reallocations
on each insertion given that the instance always remains $2$-underallocated?
We again guess that such an allocator exists.

Thirdly, the multi-processor reduction only works when the instance
is more than $2\frac{p-1}{p+1}$-slack, but there is no obvious reason
why there should not be an efficient allocator if the instance remains
just $\frac{1}{4p-1}$-slack. What is the optimal allocator in that
case?

\clearpage{}

\bibliographystyle{plain}
\bibliography{RPIS}

\end{document}